\newtheorem{theorem}{Theorem}
\newtheorem{corollary}{Corollary}
\newtheorem{proposition}{Proposition}
\newtheorem{remark}{Remark}
\newcommand{\half}{\small \mbox{$\frac{1}{2}$}}
\newcommand{\bm}[1]{\mbox{\boldmath $#1$}}
\newcommand{\argmin}{\mathop{\rm argmin}}
\newcommand{\argmax}{\mathop{\rm argmax}}
\begin{document}
\title{\bf Cumulant-based approximation for fast and efficient prediction for species distribution}

\author{Osamu Komori\\
Department of Science and Technology, Seikei University\\
 3-3-1 Kichijoji-kitamachi, Musashino, Tokyo 180-8633, Japan
\\
Yusuke Saigusa\\
Department of Biostatistics, School of Medicine, Yokohama City University\\
3-9 Fukuura, Kanazawa, Yokohama, Kanagawa 236-0004, Japan\\
 Shinto Eguchi\\
The Institute of Statistical Mathematics\\
 10-3 Midori-cho, Tachikawa, Tokyo 190-8562, Japan\\
 Yasuhiro Kubota\\
 Faculty of Science, University of the Ryukyus\\
 1 Senbaru, Nishihara, Okinawa 903-0213, Japan
}

\maketitle

\begin{abstract}
Species distribution modeling plays an important role in estimating the habitat suitability of species using environmental variables. For this purpose, Maxent and the Poisson point process are popular and powerful methods extensively employed across various ecological and biological sciences. However, the computational speed becomes prohibitively slow when using huge background datasets, which is often the case with fine-resolution data or global-scale estimations.
To address this problem, we propose a computationally efficient species distribution model using a cumulant-based approximation (CBA) applied to the loss function of $\gamma$-divergence. Additionally, we introduce a sequential estimating algorithm with an $L_1$ penalty to select important environmental variables closely associated with species distribution.
The regularized geometric-mean method, derived from the CBA, demonstrates high computational efficiency and estimation accuracy. Moreover, by applying CBA to Maxent, we establish that Maxent and Fisher linear discriminant analysis are equivalent under a normality assumption. This equivalence leads to an highly efficient computational method for estimating species distribution. The effectiveness of our proposed methods is illustrated through simulation studies and by analyzing data on 226 species from the National Centre for Ecological Analysis and Synthesis and 709 Japanese vascular plant species.
The computational efficiency of the proposed methods is significantly improved compared to Maxent, while maintaining comparable estimation accuracy. A R package {\tt CBA} is also prepared to provide all programming codes used in simulation studies and real data analysis.
\end{abstract}
Keywords: cumulant-based approximation; Fisher linear discriminant analysis; $\gamma$-divergence; geometric-mean divergence; Maxent; species distribution modeling; Poisson point process
\newpage

\section{Introduction}
Species distribution models (SDMs) are widely used to estimate the relationship between species and their associated environmental features, clarifying how and why species' ranges change over time \citep{Gaston2003}. As essential tools in ecology, climate change, and conservation biology, they support conservation decision-making when the collaboration between modelers and decision-makers is well established \citep{Guisan2005, Guisan2013}. The primary data used for SDMs are presence-only (PO) data, which lack information on species absence and can be collected from museum and herbarium databases at relatively low cost. In contrast, presence-absence (PA) data, which include locations of both presence and absence, are difficult to obtain as they require systematic surveys involving significant time and effort.

Among the SDMs for PO data, Maxent \citep{Phillips2006} is a powerful machine learning method used for various purposes such as conservation planning and investigating the effects of invasive species and climate changes \citep{Elith2011}. Designed to sequentially maximize likelihood based on the maximum entropy principle \citep{Dudik2004}, Maxent's performance ranks among the highest based on the extensive NCEAS data, which includes 226 species from six regions worldwide \citep{Elith2006}. Recent studies confirm that Maxent and boosted regression trees remain top performers compared to other machine learning methods like random forests, XGBoost, support vector machines, and so on \citep{Valavi2022}. Theoretical work has shown that estimation by Maxent is equivalent to the Poisson point process model (PPM) and logistic regression model \citep{Renner2013, Warton2010}, with \cite{Fithian2013} proposing infinitely weighted logistic regression as also equivalent to PPM. For a detailed review, see \citep{Komori2023}.

Recently, SDMs have been applied to large-scale datasets to estimate global species distributions \citep{Heshmati2019, Lee2021}. When using fine-grain background data, computational costs become prohibitively high \citep{Phillips2008}, often mitigated by employing random sampling of background data. The default sample size for random sampling in Maxent is set to 10,000 \citep{Hijmans2023}, although studies suggest that 86,000 points are needed for reliable results \citep{Renner2015}, with recent recommendations suggesting 50,000 based on area under the ROC curve (AUC) values \citep{Valavi2022}. Practical strategies include increasing the number of background points until model fitting and predictive performance stabilize \citep{Phillips2008}.

This paper introduces a different approach to improve computational efficiency in Maxent and, equivalently, in PPM, by using all locations of background data and applying the cumulant-based approximation (CBA) to the normalization constant of Maxent or the numerical integration of the intensity function of PPM. Known as the Thouless-Anderson-Palmer (TAP) approximation based on the Plefka expansion in Boltzmann Machine learning \citep{Kappen1997, Tanaka1998}, CBA also unveils a compelling relationship between Maxent and Fisher linear discriminant analysis, where the estimated coefficients of Maxent's parameters are equivalent to those of Fisher's under a normality assumption, resulting in an extremely speedy estimation of species distribution. This normality assumption aligns with the log Gaussian Cox process \citep{Moller1998}. Furthermore, we apply CBA to the loss function of $\gamma$-divergence \citep{Basu1998, Fujisawa2008}, deriving a new method called the regularized geometric-mean method (rGM), which offers both low computational costs and high prediction accuracy. The loss function of rGM includes a regularization term based on the sample variance of the background data, helping to avoid overfitting compared to the original geometric-mean method (GM) derived from geometric-mean divergence.

This paper is organized as follows: First, we introduce PPM and the $\gamma$-loss function, clarifying their relation to Maxent and its statistical properties. We then apply CBA to the $\gamma$-loss function to demonstrate the equivalence of Maxent (PPM) to Fisher linear discrimination under the normality assumption. Next, we show that the first-order CBA of the $\gamma$-loss function corresponds to that of GM, and the second-order CBA corresponds to that of rGM. We also formulate the estimating algorithm for rGM, incorporating an $L_1$ penalty for variable selection. Lastly, we evaluate the performance of our proposed methods across various simulation settings and real data analysis, including a dataset of Japanese vascular plants \citep{Kubota2015}. We conclude with a discussion on the relationship with existing methods and suggestions for further research. All R codes used in the simulation studies and real data analysis are provided in the {\tt CBA} package. A brief explanation of how to use {\tt CBA} is also provided in a supplementary file.

\section{Materials and Methods}
\subsection*{Poisson point process and its fast estimating algorithm}
In the Poisson point process regression model (PPM), we deal with presence-only data, which consists of $m$ mutually independent locations $\{x_1,\ldots,x_m\}$, and background data (also called pseudo-absences), consisting of locations $\{x_{m+1},\ldots, x_n\}$ in a study region $\mathcal A$. For a target species, we consider an intensity function $\lambda_0(x_i,\bm \alpha,c)$ parameterized by a slope vector $\bm \alpha$ and an intercept $c$. Typically, we assume a log-linear model such as
\begin{equation}\label{eq1}
    \log \lambda_0(x_i,\bm \alpha,c)=c+\bm\alpha^\top \bm f(x_i),
\end{equation}
where $\bm f(x_i)$ is a $p$-dimensional feature vector associated with the presence of the target species.
Then the log-likelihood of PPM \citep{Warton2010,Renner2013} is given as 
\begin{eqnarray}\label{eq2}
\ell_{\rm PPM}(\bm\alpha,c)&=&
\sum_{i=1}^m  \log \lambda_0(x_i,\bm\alpha,c)- \Lambda_0(\bm\alpha,c),
\end{eqnarray}
where $\Lambda_0(\bm\alpha,c)=\sum_{i=1}^n w_i \lambda_0(x_i,\bm\alpha,c)$ is an approximation of $\int_{\cal A}  \lambda_0(x ,\bm\alpha,c)dx$ using a quadrature weight $w_i$ for a location $x_i$. Hereafter, we standardize the weights and assume $\sum_{i=1}^n w_i=1$ without loss of generality. Equation (\ref{eq2}) can be regarded as a weighted Poisson likelihood \citep{Berman1992}, denoted as
\begin{equation}
    \sum_{i=1}^n w_i\{z_i\log \lambda_0(x_i,\bm \alpha,c)-\lambda_0(x_i,\bm \alpha,c)\},
\end{equation}
where $z_i=I(i\in\{1,\ldots,m\})/w_i$ and $I(\cdot)$ is an indicator function. Then the estimator of PPM is defined as
\begin{eqnarray}
 (\hat{\bm\alpha}_{\rm PPM},\hat{c}_{\rm PPM})=\argmax_{\boldmath\alpha,c} \ell_{\rm PPM}(\bm\alpha,c).
\end{eqnarray}
The runtime for seeking estimators in both PPM and Maxent \citep{Phillips2006} is prohibitively slow for large $n$ in the iteration process of learning algorithms \citep{Phillips2008}. \cite{Valavi2022} illustrate the relation between the runtime and $n$ together with the estimation accuracy using a species in Australia. The evaluation of $\Lambda_0(\bm\alpha,c)$ is time-consuming for PPM. To mitigate this issue, we take the CBA approach, resulting in $\Lambda_0(\bm\alpha,c)\approx \exp(c+\bm\alpha^\top \bar{\bm f}+\bm\alpha^\top S\bm\alpha)$, where $\bar {\bm f}$ and $S$ are the sample mean and variance defined as $\bar {\bm f}=\sum_{i=1}^n w_i \bm f(x_i)$ and $S=\sum_{i=1}^n w_i (\bm f(x_i)-\bar {\bm f})(\bm f(x_i)-\bar {\bm f})^\top$. This yields a simple approximation of $\ell_{\rm PPM}(\bm\alpha,c)$ as
\begin{equation}
\ell_{\rm PPM}(\bm\alpha,c)\approx m(c+\bm\alpha^\top\bar{\bm f}_m)-\exp(c+\bm\alpha^\top \bar{\bm f}+\bm\alpha^\top S\bm\alpha),
\end{equation}
where $\bar{\bm f}_m$ is the sample mean over presence locations $\bar{\bm f}_m=1/m\sum_{i=1}^m \bm f(x_i)$. We note that this approximation is defined only by sufficient statistics $\bar{\bm f}_m$, $\bar{\bm f}$ and $S$ without any further computations over $\{x_1,\ldots,x_n\}$ in the iteration process. See the following subsections for the technical discussion for the approximation.
In the log-linear model, we have a fast iterative estimation algorithm
\begin{equation}
\begin{bmatrix}
   \bm\alpha_{new}  \\
   c_{new}
\end{bmatrix}
\leftarrow
\begin{bmatrix}
     S^{-1}\left\{\frac{\displaystyle m}{\displaystyle\exp(c+\bm\alpha^\top \bar{\bm f}+\bm\alpha^\top S\bm\alpha)}\bar {\bm f}_m-\bar {\bm f}\right\}\vspace{0.3cm}\\
  \log m-(\bm\alpha^\top \bar{\bm f}+\bm\alpha^\top S\bm\alpha)
\end{bmatrix}
 \label{alpha.c.new}
\end{equation}
Note that $\exp(c+\bm\alpha^\top \bar{\bm f}+\bm\alpha^\top S\bm\alpha)$ is the second-order approximation of the expectation of $m$ in the Poisson point process; hence, the top-right-hand side in (\ref{alpha.c.new}) is almost the same as the coefficient of the Fisher linear discriminant analysis $\hat{\bm\alpha}_{\rm Fisher}=S^{-1}(\bar {\bm f}_m-\bar {\bm f})$ when the iteration properly converges. 
It is widely known that the estimation of the slope parameter $\bm \alpha$ in (\ref{eq1}) of PPM is equivalent to that of Maxent \citep{Renner2013}. Hence, equation (\ref{alpha.c.new}) also corresponds to the fast estimating algorithm for Maxent. In the subsequent subsection, we derive a faster and more efficient estimating method for Maxent, where no iteration is needed.

\subsection*{$\gamma$-loss function}
The $\gamma$-loss function for an intensity function $\lambda(x_i,\bm\alpha)$ \citep{Komori2023} is given as
\begin{eqnarray}\label{g_loss}
L_\gamma(\bm\alpha) &=& -\frac{1}{\gamma}\left[\frac{\sum_{i=1}^m \lambda(x_i,\bm\alpha)^\gamma}{\left\{\sum_{i=1}^n w_i \lambda(x_i, \bm\alpha)^{\gamma+1}\right\}^{\frac{\gamma}{\gamma+1}}}-m\right],
\end{eqnarray}
where $\gamma>-1$. The estimator is defined as
\begin{equation}
    \hat{\bm\alpha}_\gamma = \argmin_{\boldmath\alpha} L_\gamma(\bm\alpha).
\end{equation}
Note that we omit an intercept $c$ from the intensity function because it is canceled out in (\ref{g_loss}) when we assume the log-linear model in (\ref{eq1}). The statistical properties of the $\gamma$-loss function, particularly its impact on estimator performance and bias, are intensively investigated by \cite{Fujisawa2008}. As a special case, we have
\begin{eqnarray}
L_0(\bm\alpha) \equiv \lim_{\gamma\rightarrow0} L_\gamma(\bm\alpha) = -\sum_{i=1}^m \log\frac{\lambda(x_i,\bm\alpha)}{\Lambda(\bm\alpha)},
\end{eqnarray}
where $\Lambda(\bm\alpha) = \sum_{i=1}^n w_i \lambda(x_i, \bm\alpha)$.
This is equivalent to the negative log-likelihood of Maxent. Hence, similarly to the theorem of \cite{Renner2013}, under the assumption of a log-linear model for intensity functions for PPM and $L_0(\bm\alpha)$, we establish an important relation between $\hat{\bm\alpha}_{\rm PPM}$ and $\hat{\bm\alpha}_0$ as
\begin{eqnarray}\label{eq8}
\hat{\bm\alpha}_{\rm PPM} = \hat{\bm\alpha}_0, \ \hat c_{\rm PPM} = \log \frac{m}{\Lambda(\hat{\bm\alpha}_0)}.
\end{eqnarray}
See Appendix \ref{appendA} for the details of the derivation. From equation (\ref{eq8}), we observe that the estimated intensity function of PPM, i.e., $\lambda_0(x_i, \hat{\bm\alpha}_{\rm PPM}, \hat c_{\rm PPM})$, is completely reproduced by the estimator of the slope of Maxent, $\hat{\bm\alpha}_0$, and the number of presence locations, $m$. This relationship can be extended to that between the $\gamma$-loss function and the $\beta$-loss function derived from $\beta$-divergence \citep{Basu1998, Minami2002, Eguchi2022}. See the details in Appendix \ref{appendB}.

Next, we consider the statistical consistency of $\hat{\bm\alpha}_\gamma$ based on the $\gamma$-divergence defined as
\begin{eqnarray}
D_\gamma(\lambda_1, \lambda_2) = -\frac{1}{\gamma} \frac{\sum_{i=1}^n w_i \lambda_2(x_i)^\gamma \lambda_1(x_i)}{\left\{\sum_{i=1}^n w_i \lambda_2(x_i)^{\gamma+1}\right\}^{\frac{\gamma}{\gamma+1}}} + \frac{1}{\gamma} \left\{\sum_{i=1}^n w_i \lambda_1(x_i)^{\gamma+1}\right\}^{\frac{1}{\gamma+1}},
\end{eqnarray}
where $\lambda_1(x)$ and $\lambda_2(x)$ are intensity functions. Cf. \cite{Saigusa2024} for robust divergences.
\begin{proposition}\label{prop1}
If there exists $\bm\alpha^*$ such that a true intensity function of a Poisson point process is expressed as $\lambda^*(x) \propto \lambda(x, \bm\alpha^*)$, then the expected loss is expressed as
\begin{eqnarray}
{\mathbb E}\{L_\gamma(\bm\alpha)\} - {\mathbb E}\{L_\gamma(\bm\alpha^*)\} &=& D_\gamma(\lambda(\cdot, \bm\alpha^*), \lambda(\cdot, \bm\alpha)),
\end{eqnarray}
where ${\mathbb E}$ denotes the expectation under $\lambda^*(x)$.
Moreover, we have the unbiasedness of the estimating equation as
\begin{eqnarray}
\mathbb E\left[\frac{\partial}{\partial\bm\alpha}L_\gamma(\bm\alpha)\right] = 0.
\end{eqnarray}
\end{proposition}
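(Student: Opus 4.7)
The plan is to compute the expectation $\mathbb E[L_\gamma(\bm\alpha)]$ by applying Campbell's formula for the Poisson point process with intensity $\lambda^*$: for any non-random function $g$, $\mathbb E[\sum_{i=1}^m g(x_i)] = \int g(x)\lambda^*(x)\,dx$, which in the weighted-quadrature form used in the paper becomes $\sum_{i=1}^n w_i \lambda^*(x_i) g(x_i)$. Because the denominator $\sum_i w_i\lambda(x_i,\bm\alpha)^{\gamma+1}$ and the background grid are fixed, the expectation acts only on the numerator of $L_\gamma$ and on $m$ (with $\mathbb E[m]=\sum_i w_i\lambda^*(x_i)$). This reduces the problem to identifying the resulting deterministic functional of $\bm\alpha$ with $D_\gamma$.

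Writing $\lambda^*=C\,\lambda(\cdot,\bm\alpha^*)$ for the unknown proportionality constant, I would obtain
\[
\mathbb E[L_\gamma(\bm\alpha)] \;=\; -\frac{C}{\gamma}\,\frac{\sum_i w_i\,\lambda(x_i,\bm\alpha^*)\,\lambda(x_i,\bm\alpha)^{\gamma}}{\{\sum_i w_i\,\lambda(x_i,\bm\alpha)^{\gamma+1}\}^{\gamma/(\gamma+1)}} \;+\; \text{const},
\]
where the constant does not depend on $\bm\alpha$. Subtracting $\mathbb E[L_\gamma(\bm\alpha^*)]$ cancels the constant, and the remaining expression matches the definition of $D_\gamma$ once I note that the second, $\bm\alpha$-free term in $D_\gamma$ comes from evaluating the cross expression at $\bm\alpha=\bm\alpha^*$. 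The homogeneity of $D_\gamma$ (scale invariance in its second argument and linearity in its first) is then used to absorb $C$ and recover exactly $D_\gamma(\lambda(\cdot,\bm\alpha^*),\lambda(\cdot,\bm\alpha))$ as stated.

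For the unbiasedness claim, I would differentiate the just-derived identity with respect to $\bm\alpha$ and evaluate at $\bm\alpha=\bm\alpha^*$. After interchanging $\mathbb E$ and $\partial_{\bm\alpha}$ (justified by standard dominated-convergence hypotheses on the log-linear model $\lambda(\cdot,\bm\alpha)$), the left-hand side becomes $\mathbb E[\partial L_\gamma(\bm\alpha)/\partial\bm\alpha]$ at $\bm\alpha^*$, while the right-hand side is the gradient at a minimizer and hence zero. Non-negativity of $D_\gamma$, with equality iff $\lambda_2\propto\lambda_1$, follows from Hölder's inequality applied to the cross integral with conjugate exponents $\gamma+1$ and $(\gamma+1)/\gamma$, combined with the scale invariance of $D_\gamma$ in $\lambda_2$; this identifies $\bm\alpha^*$ (up to the scale already quotiented out by $D_\gamma$) as a stationary point.

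The main obstacle is the careful bookkeeping of the normalization constant $C$ relating $\lambda^*$ to $\lambda(\cdot,\bm\alpha^*)$: it must be tracked through both terms of $L_\gamma$ and then collapsed using the correct homogeneity property of $D_\gamma$ so that neither the constant nor any stray factor of $\mathbb E[m]$ survives in the final identity. A secondary technical point is the sharp form of Hölder's inequality needed to get \emph{strict} positivity of $D_\gamma$ off the proportionality class, which is what ensures $\bm\alpha^*$ is an isolated solution of the estimating equation and not merely a critical point.
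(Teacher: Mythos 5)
Your proof of the first identity is essentially the paper's: apply Campbell's formula (the ``random sums'' identity) to the numerator of $L_\gamma$ and to $m$, note that the denominator is deterministic, and subtract the value at $\bm\alpha^*$ so that the $\bm\alpha$-free term collapses to the second term of $D_\gamma$. One small bookkeeping remark: writing $\lambda^*=c^*\lambda(\cdot,\bm\alpha^*)$, the homogeneity you invoke ($D_\gamma$ is degree-one in its first argument and scale-invariant in its second) does not make $c^*$ disappear --- it factors it out, giving $c^*D_\gamma(\lambda(\cdot,\bm\alpha^*),\lambda(\cdot,\bm\alpha))$. This is exactly what the paper's own appendix obtains before identifying it with the stated divergence, and since $c^*>0$ it changes neither the sign nor the minimizer, so it is not a gap, but your phrase ``absorb $C$'' slightly overstates what homogeneity delivers.

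For the unbiasedness claim your route genuinely differs from the paper's. The paper computes $\mathbb E[\partial_{\bms\alpha}L_\gamma(\bm\alpha)]$ directly: it writes out the score as a difference of two weighted sums and observes that, under a Poisson process with intensity proportional to $\lambda(\cdot,\bm\alpha)$, the two terms cancel term by term after applying Campbell's formula. You instead differentiate the already-established identity $\mathbb E[L_\gamma(\bm\alpha)]=c^*D_\gamma(\lambda(\cdot,\bm\alpha^*),\lambda(\cdot,\bm\alpha))+\mathrm{const}$ and evaluate at $\bm\alpha=\bm\alpha^*$, using that $D_\gamma\ge 0$ with equality on the proportionality class (which you prove via H\"older with exponents $\gamma+1$ and $(\gamma+1)/\gamma$, handling the reverse inequality for $-1<\gamma<0$); the paper merely cites this non-negativity as a known property of the $\gamma$-power divergence. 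Your approach buys economy --- the second claim becomes a corollary of the first --- at the cost of needing to justify interchanging $\mathbb E$ and $\partial_{\bms\alpha}$ and of requiring $\bm\alpha^*$ to be an interior stationary point of a smooth function; the paper's direct cancellation avoids both issues and exhibits the unbiasedness as an algebraic identity rather than a first-order optimality condition. Both arguments are correct and establish the same statement (unbiasedness at the data-generating parameter).
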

See Appendix \ref{appendC} for the proof.
This observation implies that
\begin{equation}\label{consistency}
{\mathbb E}\{L_\gamma(\bm\alpha^*)\} = \min_{\boldmath\alpha}{\mathbb E}\{L_\gamma(\bm\alpha)\},
\end{equation}
which confirms the consistency of the estimator $\hat{\bm\alpha}_\gamma$ for $\bm\alpha^*$.

In the following theorem, we clarify the relationship between the $\gamma$-loss function and the cumulant of $\log \lambda(x, \bm\alpha)$ calculated by the entire locations. This theorem is key to deriving a computationally efficient estimating algorithm for the intensity function of $\lambda(x, \bm\alpha)$ in the subsequent sections.

\subsection*{Cumulant-based approximation (CBA)}

We consider the cumulant generating function of $\log \lambda(x,\bm\alpha)$, defined as
\begin{equation}
K(t)=\log\sum_{i=1}^n w_i\exp\{t\log \lambda(x_i,\bm\alpha)\}.
\end{equation}
It has the power series expansion
\begin{equation}
K(t)=\sum_{j=1}^\infty\kappa_j(\bm\alpha)\frac{t^j}{j!},
\end{equation}
where $\kappa_j(\bm\alpha)$ is the $j$th cumulant.
Because $\kappa_1(\bm \alpha)=\bm \alpha^\top \bar {\bm f}$ and $\kappa_2(\bm \alpha)=\bm\alpha^\top S\bm\alpha$, we have the second-order approximation of $K(t)$ as
\begin{equation}
K(t)\approx t\bm\alpha^\top\bar{\bm f}+\frac{t^2}2\bm\alpha^\top S\bm\alpha.\label{Kt.approx}
\end{equation}
If we assume normality for $\bm\alpha^\top{\bm f}(x)$, then
\begin{equation}
K(t)= t\bm\alpha^\top\bar{\bm f}+\frac{t^2}2\bm\alpha^\top S\bm\alpha
\end{equation}
for any real number $t$.

\begin{theorem}\label{thm4}
Under the log-linear model assumption, $L_\gamma(\bm\alpha)$ adopts a cumulant-based exponential form:
\begin{equation}
 L_\gamma(\bm\alpha)
=
- \frac{1}{\gamma}\Bigg[\sum_{i=1}^m\exp\bigg\{\gamma\bm\alpha^\top \bm f(x_i)- \gamma\sum_{j=1}^\infty\kappa_j(\bm\alpha)\frac{(\gamma+1)^{j-1}}{j!}\bigg\}-m\Bigg].
\end{equation}
\end{theorem}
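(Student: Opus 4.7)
The statement is essentially an algebraic identity once one recognises that the denominator of $L_\gamma(\bm\alpha)$ is, up to a power, the moment generating function of $\bm\alpha^\top\bm f(x)$ evaluated at $\gamma+1$. My plan is therefore to reduce everything to the cumulant generating function $K(t)$ and then expand.

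First, I would substitute the log-linear form $\log\lambda(x,\bm\alpha)=c+\bm\alpha^\top\bm f(x)$ into the definition \eqref{g_loss} of $L_\gamma$ and observe that the intercept $c$ factors out of both numerator and denominator and cancels, exactly as noted before Proposition \ref{prop1}; hence it suffices to work with $\lambda(x,\bm\alpha)=\exp(\bm\alpha^\top\bm f(x))$. The numerator then becomes $\sum_{i=1}^m\exp(\gamma\bm\alpha^\top\bm f(x_i))$, which is already the ``$\gamma\bm\alpha^\top\bm f(x_i)$'' piece inside the exponential in the target formula.

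Next, I would rewrite the denominator. By the definition of $K(t)$,
\begin{equation}
\sum_{i=1}^n w_i\exp\bigl((\gamma+1)\bm\alpha^\top\bm f(x_i)\bigr)=\exp\bigl(K(\gamma+1)\bigr),
\end{equation}
so the denominator equals $\exp\bigl(\tfrac{\gamma}{\gamma+1}K(\gamma+1)\bigr)$. Substituting the power series for $K$ gives
\begin{equation}
\frac{\gamma}{\gamma+1}K(\gamma+1)=\frac{\gamma}{\gamma+1}\sum_{j=1}^\infty\kappa_j(\bm\alpha)\frac{(\gamma+1)^j}{j!}=\gamma\sum_{j=1}^\infty\kappa_j(\bm\alpha)\frac{(\gamma+1)^{j-1}}{j!},
\end{equation}
which matches the subtractive term appearing inside the exponential in the theorem.

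Finally, I would combine the two pieces: dividing the numerator by the denominator pulls the cumulant term into the exponential with a minus sign, yielding
\begin{equation}
\frac{\sum_{i=1}^m\exp(\gamma\bm\alpha^\top\bm f(x_i))}{\exp\bigl(\tfrac{\gamma}{\gamma+1}K(\gamma+1)\bigr)}=\sum_{i=1}^m\exp\Bigl\{\gamma\bm\alpha^\top\bm f(x_i)-\gamma\sum_{j=1}^\infty\kappa_j(\bm\alpha)\frac{(\gamma+1)^{j-1}}{j!}\Bigr\},
\end{equation}
and inserting this into \eqref{g_loss} gives the claimed formula. There is no real obstacle here beyond careful bookkeeping of the exponents $\gamma$, $\gamma+1$, and $\gamma/(\gamma+1)$; the only point that deserves a brief comment is convergence of the cumulant series, which is automatic because the underlying distribution is the discrete weighted empirical measure on $\{x_1,\dots,x_n\}$ and hence $K(t)$ is entire in $t$.
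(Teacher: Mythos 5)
Your proof is correct and follows essentially the same route as the paper's: identify the denominator of $L_\gamma(\bm\alpha)$ with $\exp\{\tfrac{\gamma}{\gamma+1}K(\gamma+1)\}$ and expand $K$ in its cumulant power series to obtain the term $\gamma\sum_{j=1}^\infty\kappa_j(\bm\alpha)(\gamma+1)^{j-1}/j!$ inside the exponential. The only quibble is your closing remark that $K(t)$ is entire --- $K$ is the logarithm of an entire function that may vanish off the real axis, so the cumulant series need not converge for every real $t$; but this caveat applies equally to the paper's own derivation and does not affect the substance of the argument.
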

See Appendix \ref{appendD} for the details of the proof.

\begin{remark}\label{rem1}
Under the log-linear model assumption, we have the first and second-order approximations of $L_\gamma(\bm\alpha)$ as follows:
\begin{align}
 L_\gamma^{(1)}(\bm\alpha)
=&
- \frac{1}{\gamma}\Bigg[\sum_{i=1}^m\exp\bigg\{\gamma\bm\alpha^\top(\bm f(x_i)-\bar {\bm f})\bigg\}-m\Bigg]\label{fapprox}\\
 L_\gamma^{(2)}(\bm\alpha)
=&
- \frac{1}{\gamma}\Bigg[\sum_{i=1}^m\exp\bigg\{\gamma\bigg(\bm\alpha^\top(\bm f(x_i)-\bar {\bm f})-\frac{\gamma+1}2\bm\alpha^\top S\bm\alpha\bigg)\bigg\}-m\Bigg].\label{sapprox}
\end{align}
\end{remark}
The second-order approximation is also employed for the Gibbs free energy in Boltzmann machine learning \citep{Kappen1997}. Equation (\ref{Kt.approx}) is closely related to Gaussian approximations similar to the Laplace approximation. If we choose a higher order of approximation, then the normalizing constant is better approximated but with more computational costs. Hence, we strike a balance between them in practice. We can show that the estimating equation of $L_\gamma^{(2)}(\bm\alpha)$ is unbiased under the normality assumption. See Appendix \ref{appendE}.
\subsection*{Equivalence of PPM to Fisher linear discrimination under normality assumption}
From Remark \ref{rem1} and equation (\ref{eq8}), we derive the following relationship between PPM and the Fisher linear function \citep{Fisher1936}.

\begin{remark}\label{rem2}
Under the log-linear model assumption, the second-order approximations of $L_0(\bm\alpha)$ are as follows:
\begin{eqnarray}
 L_0^{(2)}(\bm\alpha)
&=&m\left\{\bm\alpha^\top(\bar {\bm f}-\bar {\bm f}_m)+\frac12\bm\alpha^\top S\bm\alpha\right\}
\end{eqnarray}
Thus, we have:
\begin{eqnarray}
 \hat{\bm\alpha}_{\rm Fisher}&=&\argmin_{\boldmath\alpha}  L_0^{(2)}(\bm\alpha)\\
&=&S^{-1}(\bar {\bm f}_m-\bar {\bm f}).
\end{eqnarray}
\end{remark}
If feature vectors $\bm f(x_i)$ $(i=1,\ldots,n)$ are standardized preliminarily using $\bar {\bm f}$ and $S$ so that the sample mean and variance are $\bm 0$ and identity matrix $I$, then $\hat{\bm\alpha}_{\rm Fisher}=\bar {\bm f}_m$, which is simple and extremely computationally efficient.

If we make a clear distinction between presence and entire locations, then it is more natural to consider a two-class Fisher coefficient $(\pi_mS_m+S)^{-1}(\bar {\bm f}_m-\bar {\bm f})$,
where $\pi_m$ and $S_m$ are a proportion and a sample variance of the feature vector $f(x_i)$ for presence locations ($i=1,\ldots,m$), respectively. Hence $\hat{\bm\alpha}_{\rm Fisher}$ can be regarded as a special case, in which we put more importance on $S$ rather than $S_m$. Similar ideas are employed in generalized $t$-statistic \citep{Komori2015} and statistical methods for imbalanced data \citep{Komori2019}.

\begin{corollary}\label{coro1}
Under the assumption of the log-linear model and normality assumption for ${\bm\alpha}^\top \bm f(x)$ in the entire region: $\kappa_j(\bm\alpha)=0$ for $j=3,4,\ldots$, we have 
\begin{eqnarray}
\hat{\bm\alpha}_{\rm PPM}=\hat{\bm\alpha}_{\rm Fisher}.
\end{eqnarray}
\end{corollary}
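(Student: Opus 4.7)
The plan is to chain together three results already in hand: equation (\ref{eq8}), which reduces $\hat{\bm\alpha}_{\rm PPM}$ to $\hat{\bm\alpha}_0$; Theorem \ref{thm4}, which gives the cumulant-series form of $L_\gamma(\bm\alpha)$; and Remark \ref{rem2}, which identifies the minimizer of $L_0^{(2)}(\bm\alpha)$ as $\hat{\bm\alpha}_{\rm Fisher}$. The corollary then reduces to showing that, under the normality hypothesis $\kappa_j(\bm\alpha)=0$ for $j\geq 3$, the second-order approximation $L_0^{(2)}(\bm\alpha)$ is in fact exact, i.e.\ coincides with $L_0(\bm\alpha)$.

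First I would invoke Theorem \ref{thm4} and set to zero every cumulant $\kappa_j$ with $j\geq3$, so that only the $j=1$ and $j=2$ terms survive. Substituting $\kappa_1(\bm\alpha)=\bm\alpha^\top\bar{\bm f}$ and $\kappa_2(\bm\alpha)=\bm\alpha^\top S\bm\alpha$ produces
\begin{equation*}
L_\gamma(\bm\alpha)=-\frac{1}{\gamma}\left[\sum_{i=1}^m\exp\left\{\gamma\bm\alpha^\top(\bm f(x_i)-\bar{\bm f})-\gamma\frac{\gamma+1}{2}\bm\alpha^\top S\bm\alpha\right\}-m\right],
\end{equation*}
which is precisely $L_\gamma^{(2)}(\bm\alpha)$ of equation (\ref{sapprox}), but now as an identity rather than an approximation. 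This identity holds for every admissible $\gamma>-1$.

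Next I would take $\gamma\to0$ on both sides. Using the Taylor expansion $\exp(\gamma u)=1+\gamma u+O(\gamma^2)$ with $u_i=\bm\alpha^\top(\bm f(x_i)-\bar{\bm f})-\frac{\gamma+1}{2}\bm\alpha^\top S\bm\alpha$, the limit yields
\begin{equation*}
L_0(\bm\alpha)= -\sum_{i=1}^m\bm\alpha^\top(\bm f(x_i)-\bar{\bm f})+\frac{m}{2}\bm\alpha^\top S\bm\alpha
= m\Big\{\bm\alpha^\top(\bar{\bm f}-\bar{\bm f}_m)+\tfrac{1}{2}\bm\alpha^\top S\bm\alpha\Big\},
\end{equation*}
which is exactly $L_0^{(2)}(\bm\alpha)$ from Remark \ref{rem2}. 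Combining this identity with Remark \ref{rem2} gives $\hat{\bm\alpha}_0=\argmin_{\boldmath\alpha}L_0(\bm\alpha)=\hat{\bm\alpha}_{\rm Fisher}=S^{-1}(\bar{\bm f}_m-\bar{\bm f})$, and finally equation (\ref{eq8}) delivers $\hat{\bm\alpha}_{\rm PPM}=\hat{\bm\alpha}_0=\hat{\bm\alpha}_{\rm Fisher}$.

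The only subtlety, and the one point worth stating explicitly, is that normality makes the cumulant expansion in Theorem \ref{thm4} terminate exactly at order two; once that is recognized, the rest is a clean substitution plus a routine $\gamma\to0$ limit, so there is no substantive obstacle to the argument.
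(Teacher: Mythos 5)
Your proposal is correct and follows the same route as the paper: the paper's proof is the one-line observation that $L_0(\bm\alpha)=L_0^{(2)}(\bm\alpha)$ under normality, combined with equation (\ref{eq8}) and Remark \ref{rem2}. You simply fill in the details the paper calls ``obvious'' --- the termination of the cumulant series in Theorem \ref{thm4} and the $\gamma\to0$ limit --- and both computations check out.
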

\begin{proof}
It is obvious from equation (\ref{eq8}) and Remark \ref{rem2} because $L_0(\bm\alpha)=L_0^{(2)}(\bm\alpha)$ under the normality assumption.
\end{proof}

\subsection*{Geometric-mean divergence}
The geometric-mean (GM) divergence between intensity functions $\lambda_1(x)$ and $\lambda_2(x)$ is defined as
\begin{eqnarray}
D_{\rm GM}(\lambda_1,\lambda_2) &=& 
{\sum_{i=1}^n w_i{\lambda_2(x_i)^{-1} \lambda_1(x_i)}}
{\prod_{i=1}^n \lambda_2(x_i )^{w_i}}-
{\prod_{i=1}^n \lambda_1(x_i )^{w_i}}.
\end{eqnarray}
The arithmetic and geometric means of $\big\{\frac{\lambda_1(x_i)}{\lambda_2(x_i)}\big\}_{i=1}^n$ satisfy the following inequality:
\begin{eqnarray}
\sum_{i=1}^n \frac{\lambda_1(x_i)}{\lambda_2(x_i)}w_i
 \geq
 \prod_{k=1}^n \left\{\frac{\lambda_1(x_i)}{\lambda_2(x_i)}\right\}^{w_i}.
\end{eqnarray}
This implication directly leads to $D_{\rm GM}(\lambda_1,\lambda_2)\geq0$. The GM-loss function is given as
\begin{eqnarray}\label{GM_ori}
 L_{\rm GM}(\bm\alpha)
=\sum_{i=1}^m 
 \lambda(x_i,\bm\alpha)^{-1}  \prod_{i=1}^n \lambda(x_i,\bm\alpha)^{w_i},
\end{eqnarray}
and the estimator is defined by
\begin{equation}
    \hat{\bm\alpha}_{\rm GM}=\argmin_{\boldmath\alpha} L_{\rm GM}(\bm\alpha). 
\end{equation}
The expected loss function is expressed as
\begin{eqnarray}
\mathbb E [  L_{\rm GM}(\bm\alpha)]
=\sum_{i=1}^n w_i
 \lambda(x_i,\bm\alpha)^{-1} \lambda^*(x_i)  \prod_{i=1}^n \lambda(x_i,\bm\alpha)^{w_i},
\end{eqnarray}
where $\lambda^*(x)$ is the true intensity function of a Poisson point process. Then we have
\begin{eqnarray}
{\mathbb E}\{L_{\rm GM}(\bm\alpha)\}-{\mathbb E}\{L_{\rm GM}(\bm\alpha^*)\}
&=&D_{\rm GM}(\lambda(\cdot,\bm\alpha^*),\lambda(\cdot,\bm\alpha)).
\end{eqnarray}
Similarly to (\ref{consistency}), we conclude that
\begin{equation}
{\mathbb E}\{L_{\rm GM}(\bm\alpha^*)\}=\min_{\boldmath\alpha}{\mathbb E}\{L_{\rm GM}(\bm\alpha)\},
\end{equation}
demonstrating the consistency of the estimator $\hat{\bm\alpha}_{\rm GM}$ for $\bm\alpha^*$. Under the log-linear model, from (\ref{GM_ori}) we have a GM-loss function as
\begin{align}\label{GM}
 L_{\rm GM}(\bm\alpha)
=&\sum_{i=1}^m\exp\left\{-\bm\alpha^\top(\bm f(x_i)-\bar {\bm f})\right\},
\end{align}
which is equivalent to $L^{(1)}_{-1}(\bm \alpha)$ in (\ref{fapprox}). Hence, $L^{(2)}_{\gamma}(\bm \alpha)$ in (\ref{sapprox}) can be seen as the regularized GM loss function (rGM), with the regularization term $-\gamma(\gamma+1)\bm\alpha^\top S\bm\alpha/2$, and is positive if and only if $-1<\gamma<0$.

\subsection*{Estimating algorithm with $L_1$ penalty}

We consider the $L_\gamma^{(2)}(\bm\alpha)$ function penalized by an $L_1$-penalty, defined as
\begin{equation*}
    L_{\gamma,\tau}^{(2)}(\bm\alpha) = L_\gamma^{(2)}(\bm\alpha) + \frac{\tau}{\sqrt{m}} \sum_{j=1}^p s_j |\alpha_j|,
\end{equation*}
where $s_j$ is the sample standard deviation of $f_j(x_i)$ for presence locations ($i=1,\ldots,m$). For details on this type of penalty term, which places more penalty on $f_j(x_i)$ with larger variance, see \cite{Phillips2008}.
Then, for $\bm\alpha'=(\alpha_1,\ldots,\alpha_j+\delta,\ldots,\alpha_p)^\top$, we calculate the upper bound of the loss functions as
\begin{eqnarray}
    L_{\gamma,\tau}^{(2)}(\bm\alpha') - L_{\gamma,\tau}^{(2)}(\bm\alpha)
    \leq D_j^\tau(\delta, \bm\alpha).  
\end{eqnarray}
We propose the following estimating algorithm, similar to that of \cite{Dudik2004}, based on the \textit{soft thresholding operator} used in the gradient vector, as employed in \cite{Goeman2010}.

\noindent
\hrulefill
\begin{enumerate}
\item Set ${\bm\alpha^{(1)}}^\top = (0,\ldots,0)$.
\item For $t=1,\cdots,T$
\begin{enumerate}
\item[(a)] Choose $j^* \in \{1,\ldots,p\}$ and $\delta^* \in \mathbb R$ such that
\begin{equation*}
(j^*, \delta^*) = \underset{j,\delta}{\rm argmin}\ D_j^\tau(\delta, \bm\alpha^{(t)}).
\end{equation*} 
\item[(b)] Update $\bm\alpha^{(t)}$ as 
\begin{equation*}
\alpha^{(t+1)}_j = \left\{
    \begin{array}{l}
      \alpha^{(t)}_j + \delta^* \ {\rm if}\ j = j^* \\
      \alpha^{(t)}_j\hspace{1cm} {\rm otherwise}
    \end{array}
  \right.
\end{equation*}
\end{enumerate}
\item Output $\hat{\bm\alpha} = \bm\alpha^{(T+1)}$.
\end{enumerate}
\hrulefill

Steps (a) and (b) are iterated until convergence, where important features are selected during this process. Similar penalized loss functions and estimating algorithms are also considered for $L_\gamma(\bm \alpha)$, $L_{\rm GM}(\bm\alpha)$ and Fisher. See the details in Appendix \ref{appendF}.

\subsection*{Simulation studies}

We consider simulation settings based on a Cox process, where the intensity $\lambda(x,\bm\alpha)$ varies randomly. Initially, we define the elements of a correlation matrix \( R \) as follows:
\begin{equation*}
r_{jj'}=\left\{
\begin{array}{ll}
\rho & \text{if } j \neq j' \\
1 & \text{if } j = j' \quad (j, j' = 1, \ldots, p).
\end{array}
\right.
\end{equation*}
We then consider three cases:
\begin{enumerate}
\item Poisson case: We generate the dependent variable $u_j \sim \text{Unif}([0,1])$ using a Gaussian copula with the correlation matrix $R$. Subsequently, we generate $f_j(x_i) \sim \text{Pois}(\lambda_0)$ based on $u_j$ and the inverse of the distribution function for $j = 1, \ldots, p$, where $\lambda_0 = 3$.
\item Gaussian case: We generate $\bm f(x_i) \sim N(0, R)$.
\item Uniform case: We generate the dependent variable $f_j(x_i) \sim \text{Unif}([0,1])$ using a Gaussian copula with the correlation matrix $R$ for $j = 1, \ldots, p$.
\end{enumerate}
For each case, we calculate $\lambda(x_i, \bm \alpha^*)$ as
\begin{equation*}
\lambda(x_i, \bm \alpha^*) = \exp(\bm{\alpha^*}^\top \bm f(x_i)),
\end{equation*}
where $\alpha^*_j = (j-1)/\{10(p-1)\}$ for $j = 1, \ldots, p$. The second case corresponds to the log Gaussian Cox process \citep{Moller1998}.
Then, $m$ presence locations are generated by a multinomial distribution with probabilities $\lambda(x_i, \bm \alpha^*) / \sum_{i=1}^{n} \lambda(x_i, \bm \alpha^*)$.

The performance measures used in these simulation studies are the sum of squared errors $(\hat{\bm\alpha} - \bm\alpha^*)^\top(\hat{\bm\alpha} - \bm\alpha^*)$ and computational costs. These measures are calculated 100 times to compare Maxent, Gamma: $L_\gamma(\bm \alpha)$ with $\gamma=10^{-5}$, GM: $L_{\rm GM}(\bm \alpha)$, ${\rm rGM_{\gamma \approx 0}}$: $L_\gamma^{(2)}(\bm\alpha)$ with $\gamma=10^{-5}$, ${\rm rGM_{\gamma = -0.5}}$: $L_\gamma^{(2)}(\bm\alpha)$ with $\gamma=-0.5$, and Fisher. We omit the results for Gamma with $\gamma=-0.5$ due to its poor estimation accuracy and computational efficiency compared with the case of $\gamma=10^{-5}$. In the simulation studies, we compare the properties of the loss functions themselves, hence we apply no penalty term to all methods ($\tau=0$).

\subsection*{NCEAS data}

Data from the National Centre for Ecological Analysis and Synthesis (NCEAS) have been openly released by \cite{Elith2020}, which include presence-only (PO) and presence-absence (PA) data across six regions: Australian Wet Tropics (AWT), Canada (CAN), New South Wales (NSW), New Zealand (NZ), South America (SA), and Switzerland (SWI). The dataset encompasses records for 226 species, including birds (in AWT, CAN, NSW), bats (in NSW), plants (in AWT, NSW, NZ, SA, SWI), and reptiles (in NSW). The number of presence records in PO data ranges from 2 to 5822, and the number of environmental variables ranges from 11 to 21, including counts of levels for factor variables. For more details on the datasets, see \cite{Elith2020, Valavi2022}.

We use the NCEAS data to determine the optimal values of the tuning parameter $\tau$ for our proposed methods. Rather than determining $\tau$ from scratch, we establish the optimal multipliers based on the results of Maxent as detailed in \cite{Phillips2008}. Additionally, the performance of our proposed methods is evaluated using the same variables that were selected in \cite{Valavi2022}.

\subsection*{Japanese vascular plants data}
We analyzed data from 709 species of vascular plants from the Japanese Biodiversity Mapping Project (J-BMP) database. The study area $\mathcal A$ is divided into 4684 grid cells, each with a resolution of $10\times 10$ km, covering all presence locations for these species \citep{Kubota2015}. To minimize spatial clumping, duplications within each cell are removed, resulting in a range of presence locations from 4 (for {\it Cirsium chokaiense}) to 3077 (for {\it Pteridium aquilinum}). The environmental variables used to model the intensity function are sourced from Mesh Climate Data 2000 and the Bioclim database, as detailed in Table S. \ref{37vari}. Each variable is standardized to fall within the [0,1] range, aligning with the requirements of the Maxent algorithm.

We evaluate the performance of Maxent and our proposed methods based on several criteria: the training Area Under the Curve (AUC) calculated using presence and pseudo-absence locations, the test AUC using presence-absence locations as used in \cite{Kubota2015}, relative computational costs based on Fisher's method, and the number of environmental variables selected. Additionally, we compare habitat maps produced by the estimated intensity functions from both Maxent and our methods. The penalty terms $\tau$ for Maxent are set to the default values for linear features \citep{Phillips2008}, while those for our methods are determined based on the test AUC results from NCEAS data, which assist in estimating species distributions and calculating training and test AUCs.

\section{Results}
\subsection*{Simulation studies}
Figure \ref{fig1} demonstrates the results from the Poisson case in scenario 1, characterized by parameters $\rho=0.5$, $p=50$, $m=500$, and $n=10,000$. 
Panel (a) shows the estimation accuracy of Maxent and the proposed methods, with Gamma performing the best. It is important to note that with a smaller sample size ($m=500$), the likelihood-based method (Maxent) is not always the best in terms of squared errors. We also confirmed that the performances of Maxent and Gamma almost coincide as $m$ increases. The performance of GM is the worst in terms of squared errors primarily because it lacks a regularization term, resulting in $\hat{\bm\alpha}_{\rm GM}$ being much larger than the true value $\bm\alpha^*$. However, $\hat{\bm\alpha}_{\rm GM}$ is almost proportional to $\bm\alpha^*$, which leads to high values of training AUC (not shown here). Panel (b) shows relative computational costs compared with Fisher. Fisher is approximately 100 times faster than GM, 400 times faster than ${\rm rGM_{\gamma\approx0}}$, 3,000 times faster than ${\rm rGM_{\gamma=-0.5}}$, 10,000 times faster than Gamma, and 20,000 times faster than Maxent. Gamma is slightly faster than Maxent due to requiring fewer iterations to converge.

Panels (c) and (d) demonstrate results from the Poisson case with no correlation among $f_j(x)$'s ($\rho=0$). In this scenario, each environmental variable is independent and $p$ is large, resulting in $\bm\alpha^\top \bm f(x_i)$ being approximately normally distributed due to the central limit theorem. Hence, the second-order approximation methods such as ${\rm rGM_{\gamma\approx0}}$, ${\rm rGM_{\gamma=-0.5}}$, and Fisher perform well, as seen in panel (c). Fisher is approximately 10,000 times faster than Maxent, yet their performances measured by squared errors are at almost the same level. This trend is also observed in the Gaussian case and uniform case in Figures S\ref{fig_box05_g} and S\ref{fig_box05_u} in Appendix \ref{appendG}.

Overall, Gamma excels in terms of estimation accuracy; Fisher is the most efficient computationally with good estimation accuracy; ${\rm rGM_{\gamma\approx0}}$ and ${\rm rGM_{\gamma=-0.5}}$ are intermediate between the two methods.

\subsection*{NCEAS data}
The optimal values of $\tau$ for Gamma and ${\rm rGM_{\gamma\approx0}}$ are the same as those of Maxent. It is noteworthy that Gamma nearly equals Maxent because $\gamma=10^{-5}$, and ${\rm rGM_{\gamma\approx0}}$ is the second-order approximation of Gamma. The optimal values of $\tau$ for other methods are one-tenth of that for Maxent, suggesting that GM, Fisher, and ${\rm rGM_{\gamma=-0.5}}$ function effectively with less $L_1$ penalty compared to Maxent. Further details are provided in Figure S\ref{fig_multi} in Appendix \ref{appendH}.

The test AUCs for six regions are illustrated in Figure \ref{fig_elith}. For AWT, the best performing methods are Gamma and ${\rm rGM_{\gamma\approx0}}$. For CAN, NSW, NZ, and SWI, ${\rm rGM_{\gamma=-0.5}}$ shows the best performances. Maxent performs best for NSW and SA. Surprisingly, Fisher also yields the best results for CAN and shows comparable performance in other regions.

The relative computational costs are depicted in Figure S\ref{fig_elith_time}. Fisher is approximately 1000 times faster than Maxent, and ${\rm rGM_{\gamma=-0.5}}$ is about 10 times faster than Maxent. These results confirm the improvements in computational efficiency brought about by the cumulant-based approximation (CBA) in this dataset.

\subsection*{Japanese vascular plants data analysis}

Figure \ref{fig_habitat} illustrates the habitat maps produced by Maxent and our proposed methods for {\it Pteridium aquilinum}, a fern species that is distributed worldwide and noted for its high biological activity \citep{Vetter2009}. The maps show that regions in eastern Japan, including Tokyo (Kanto district), the Southwest Island (Kyushu district), and a central part of Japan (Kinki district), are estimated to be abundant. Conversely, the Hokkaido and Tohoku districts in northeast Japan are estimated to have lower abundance.

Figure \ref{fig_coef} illustrates the estimated coefficients $\bm{\hat{\alpha}}$ for Maxent and our proposed methods. Variables that significantly impact the habitat include potential evapotranspiration (PET), land area, bio\_5 (maximum temperature of the warmest month), and sunshine. PET, indicative of water loss through transpiration and evaporation, suggests water requirements. Large land areas may experience various disturbances, such as fires, storms, or human activities, which could favor the bracken fern, known for its ability to quickly colonize disturbed soils \citep{Stewart2007}. The negative impact of soil pH on {\it Pteridium aquilinum}, also noted in Southwest Asia \citep{Amouzgar2020}, underscores environmental influences on fern distribution.

Figure \ref{fig_path} traces the coefficients' paths as the $L_1$-penalty term $\tau$ varies. Interestingly, Maxent, GM, and ${\rm rGM_{\gamma=-0.5}}$ show similar trends, as illustrated by the estimated coefficients in Figure \ref{fig_coef}. Influential variables such as land area, PET, bio\_5, radiation, and pH remain significant even with a large $\tau$. Conversely, panel (d) shows that Fisher's coefficients paths shrink to 0 even when $\tau$ is small, highlighting a substantial difference from Maxent. This discrepancy suggests that the normality assumption for $\hat{\bm \alpha}_{\rm Fisher}^\top \bm f(x)$ in Corollary \ref{coro1} does not hold, as indicated by a Lilliefors' test for normality p-value of less than $0.001$ \citep{Dallal1986}.

Table \ref{Vascular_result79(ALL)).txt} summarizes the training and test AUC, relative computational costs, and the number of selected variables for 709 vascular plants analyzed using Maxent and our proposed methods. ${\rm rGM_{\gamma=-0.5}}$ achieves the highest training and test AUC, and is nearly seven times more computationally efficient than Maxent. Fisher is the most efficient, being nearly 200 times faster in parameter estimation than Maxent, albeit with slightly inferior accuracy. GM ranks highest in terms of model simplicity, followed by Gamma and Fisher. More detailed results for vascular plant data are listed in the tables in Appendix \ref{appendI}.

\section{Discussion}
In Bayesian inference, variational Bayes is used to enhance computational efficiency by maximizing the variational free energy to approximate the posterior distribution based on the variational density \citep{Friston2007}. This density is simplified under the independence assumption of parameters in a mean field approach combined with the Laplace approximation. The latter provides a local Gaussian approximation for the posterior distribution and aids in evaluating the marginal likelihood using a maximum a posteriori (MAP) estimate.

Unlike traditional Bayesian approaches that focus on reducing computational costs, our proposed method approximates the calculation of a normalizing constant using the cumulants of the logarithm of the intensity function. This approximation technique is also explored in \cite{Takenouchi2017}, employing an empirical localization trick to limit the summation range needed for the normalizing constant. In SDM contexts, such approximations resemble random sampling or target-group sampling of background data, where only a subset of total locations is utilized to estimate species distributions \citep{Phillips2008}. Our CBA method differs in that it becomes exact when the normality assumption on the logarithm of the intensity function holds, aligning with scenarios such as the log Gaussian Cox process \citep{Moller1998}. Furthermore, normality is asymptotically guaranteed even with discrete environmental variables when \(p\) is sufficiently large, as demonstrated in our simulation studies. Empirical results show that Maxent and its second-order approximations (Fisher) as well as ${\rm rGM_{\gamma=-0.5}}$ produce similar habitat maps, as depicted in Figure \ref{fig_habitat} and Figures S\ref{fig_habitatS1}, S\ref{fig_habitatS2}, and S\ref{fig_habitatS3} in Appendix \ref{appendI}. This similarity is quantitatively supported by the low Jeffreys divergence between estimated species distributions by Maxent and other methods, detailed in Table S\ref{Vascular_result79(jd)}. It may be beneficial to use the estimate obtained by Fisher as a starting point for Maxent estimation, similar to using contrastive divergence followed by a maximum likelihood approach to refine solutions \citep{Carreira2005}.

The challenge of sample selection bias in SDM using PO data, which reflects environmental suitability and sampling intensity, is significant. Accurate information on sampling bias is often unavailable \citep{Phillips2009}. One straightforward method is target-group background sampling, where background locations are sampled similarly to PO data \citep{Dudik2005}, although this requires additional information about the target species group observed by similar methods. Another strategy is to use PA data alongside PO data to adjust for sampling bias \citep{Fithian2015}, which also necessitates extra information. Alternatively, \cite{Komori2020} proposes using quasi-linear modeling in Poisson point processes, where variables associated with both species suitability and sampling bias are modeled separately in different clusters, effectively addressing both suitability and bias. Combining quasi-linear modeling with CBA could be a promising approach to manage sampling bias while enhancing computational efficiency.

An important extension for Poisson point processes, or Maxent, defined over the study region $\mathcal A$, is to include observed times of presence data to capture the temporal trends of the intensity function. In this spatio-temporal point process \citep{Diggle2010}, the intensity function's summation extends over locations and observation times, significantly increasing computational demands. The CBA approach thus becomes increasingly crucial for improving computational efficiency in the analysis of spatio-temporal Poisson point processes.

\section*{Conflict of Interest statement}
The authors declare no potential conflict of interest.
\section*{Author Contributions}
OK wrote the first draft of manuscript; OK analyzed the data;
SE and OK proposed the methodology, and YS and YK edited the final
version of the manuscript.
\section*{Acknowledgements}
Financial support was provided by the Japan Society for the
Promotion of Science KAKENHI Grant Number JP22K11938.
\section*{Data Availability}
NCEAS data is available in \cite{Elith2020,Van1974}.
The datasets of Japanese vascular plants are available at Japan Biodiversity Mapping Project (J-BMP) database (https://biodiversity-map.thinknature-japan.com/en/) upon request.

\bibliographystyle{jae} 

\begin{thebibliography}{46}
\providecommand{\natexlab}[1]{#1}
\providecommand{\url}[1]{\texttt{#1}}
\providecommand{\urlprefix}{URL }

\bibitem[{Amouzgar \emph{et~al.}(2020)Amouzgar, Ghorbani, Shokri, Marrs \&
  Alday}]{Amouzgar2020}
Amouzgar, L., Ghorbani, J., Shokri, M., Marrs, R.H. \& Alday, J.G. (2020) {\it
  Pteridium aquilinum} performance is driven by climate, soil and land-use in
  Southwest Asia.
\newblock \emph{Folia Geobotanica}, \textbf{55}, 301--314.

\bibitem[{Basu \emph{et~al.}(1998)Basu, Harris, Hjort \& Jones}]{Basu1998}
Basu, A., Harris, I.R., Hjort, N. \& Jones, M. (1998) Robust and efficient
  estimation by minimising a density power divergence.
\newblock \emph{Biometrika}, \textbf{85}, 549--559.

\bibitem[{Berman \& Turner(1992)}]{Berman1992}
Berman, M. \& Turner, T.R. (1992) Approximating point process likelihoods with
  GLIM.
\newblock \emph{Journal of the Royal Statistical Society. Series C},
  \textbf{41}, 31--38.

\bibitem[{Carreira-Perpi{\~n}\'an \& Hinton(2005)}]{Carreira2005}
Carreira-Perpi{\~n}\'an, M.{\'A}. \& Hinton, G.E. (2005) On contrastive
  divergence learning.
\newblock \emph{Proceedings of Artificial Intelligence and Statistics '05}.

\bibitem[{Dallal \& Wilkinson(1986)}]{Dallal1986}
Dallal, G.E. \& Wilkinson, L. (1986) An analytic approximation to the
  distribution of Lilliefors's test statistic for normality.
\newblock \emph{The American Statistician}, \textbf{40}, 294--296.

\bibitem[{Diggle \emph{et~al.}(2010)Diggle, Kaimi \& Abellana}]{Diggle2010}
Diggle, P.J., Kaimi, I. \& Abellana, R. (2010) {Partial-likelihood analysis of
  spatio-temporal point-process data}.
\newblock \emph{Biometrics}, \textbf{66}, 347--354.

\bibitem[{Dud{\'i}k \emph{et~al.}(2004)Dud{\'i}k, Phillips \&
  Schapire}]{Dudik2004}
Dud{\'i}k, M., Phillips, S.J. \& Schapire, R.E. (2004) Performance Guarantees
  for Regularized Maximum Entropy Density Estimation.
\newblock  \emph{Learning Theory} (eds. J.~Shawe-Taylor \& Y.~Singer), pp.
  472--486. Springer Berlin Heidelberg, Berlin, Heidelberg.

\bibitem[{Dud{\'i}k \emph{et~al.}(2005)Dud{\'i}k, Schapire \&
  Phillips}]{Dudik2005}
Dud{\'i}k, M., Schapire, R.E. \& Phillips, S.J. (2005) Correcting sample
  selection bias in maximum entropy density estimation.
\newblock \emph{Advances in Neural Information Processing System 18},
  \textbf{18}, 323--330.

\bibitem[{Eguchi \& Komori(2022)}]{Eguchi2022}
Eguchi, S. \& Komori, O. (2022) \emph{Minimum Divergence Methods in Statistical
  Machine Learning: From an Information Geometric Viewpoint}.
\newblock Springer, Tokyo.

\bibitem[{Elith \emph{et~al.}(2020)Elith, Graham, Valavi, Abegg, Bruce,
  Ferrier, Ford, Guisan, Hijmans, Huettmann, Lohmann, Loiselle, Moritz,
  Overton, Peterson, Phillips, Richardson, Williams, Wiser, Wohlgemuth \&
  Zimmermann}]{Elith2020}
Elith, J., Graham, C., Valavi, R., Abegg, M., Bruce, C., Ferrier, S., Ford, A.,
  Guisan, A., Hijmans, R.J., Huettmann, F., Lohmann, L., Loiselle, B., Moritz,
  C., Overton, J., Peterson, A.T., Phillips, S., Richardson, K., Williams, S.,
  Wiser, S.K., Wohlgemuth, T. \& Zimmermann, N.E. (2020) Presence-only and
  presence-absence data for comparing species distribution modeling methods.
\newblock \emph{Biodiversity Informatics}, \textbf{15}, 69--57.

\bibitem[{Elith \emph{et~al.}(2006)Elith, Graham, Anderson, Dud{\'i}k, Ferrier,
  Guisan, Hijmans, Huettmann, Leathwick, Lehmann, Li, Lohmann, Loiselle,
  Manion, Moritz, Nakamura, Nakazawa, Overton, Peterson, Phillips, Richardson,
  Scachetti-Pereira, Schapire, Sober{\'o}n, Williams, Wisz \&
  Zimmermann}]{Elith2006}
Elith, J., Graham, C.H., Anderson, R.P., Dud{\'i}k, M., Ferrier, S., Guisan,
  A., Hijmans, R.J., Huettmann, F., Leathwick, J.R., Lehmann, A., Li, J.,
  Lohmann, L.G., Loiselle, B.A., Manion, G., Moritz, C., Nakamura, M.,
  Nakazawa, Y., Overton, J.M., Peterson, A.T., Phillips, S.J., Richardson, K.,
  Scachetti-Pereira, R., Schapire, R.E., Sober{\'o}n, J., Williams, S., Wisz,
  M.S. \& Zimmermann, N.E. (2006) Novel methods improve prediction of species’
  distributions from occurrence data.
\newblock \emph{Ecography}, \textbf{29}, 129--151.

\bibitem[{Elith \emph{et~al.}(2011)Elith, Phillips, Hastie, Dud{\'i}k, Chee \&
  Yates}]{Elith2011}
Elith, J., Phillips, S.J., Hastie, T., Dud{\'i}k, M., Chee, Y.E. \& Yates, C.J.
  (2011) A statistical explanation of MaxEnt for ecologists.
\newblock \emph{Diversity and Distributions}, \textbf{17}, 43--57.

\bibitem[{Fisher(1936)}]{Fisher1936}
Fisher, R.A. (1936) The use of multiple measurements in taxonomic problems.
\newblock \emph{Annals of Eugenics}, \textbf{7}, 179--188.

\bibitem[{Fithian \emph{et~al.}(2015)Fithian, Elith, Hastie \&
  Keith}]{Fithian2015}
Fithian, W., Elith, J., Hastie, T. \& Keith, D.A. (2015) Bias correction in
  species distribution models: pooling survey and collection data for multiple
  species.
\newblock \emph{Methods in Ecology and Evolution}, \textbf{6}, 424--438.

\bibitem[{Fithian \& Hastie(2013)}]{Fithian2013}
Fithian, W. \& Hastie, T. (2013) Finite-sample equivalence in statistical
  models for presence-only data.
\newblock \emph{Annals of Applied Statistics}, \textbf{7}, 1917--1939.

\bibitem[{Friston \emph{et~al.}(2007)Friston, Mattout, Trujillo-Barreto,
  Ashburner \& Penny}]{Friston2007}
Friston, K., Mattout, J., Trujillo-Barreto, N., Ashburner, J. \& Penny, W.
  (2007) Variational free energy and the Laplace approximation.
\newblock \emph{NeuroImage}, \textbf{34}, 220--234.

\bibitem[{Fujisawa \& Eguchi(2008)}]{Fujisawa2008}
Fujisawa, H. \& Eguchi, S. (2008) Robust parameter estimation with a small bias
  against heavy contamination.
\newblock \emph{Journal of Multivariate Analysis}, \textbf{99}, 2053--2081.

\bibitem[{Gaston(2003)}]{Gaston2003}
Gaston, K.J. (2003) \emph{The Structure and Dynamics of Geographic Ranges}.
\newblock Oxford University Press, Oxford.

\bibitem[{Goeman(2010)}]{Goeman2010}
Goeman, J.J. (2010) L$_1$ penalized estimation in the Cox proportional hazards
  model.
\newblock \emph{Biometrical Journal}, \textbf{52}, 70--84.

\bibitem[{Guisan \emph{et~al.}(2005)Guisan, Broennimann, Engler, Vust, Yoccoz,
  Lehmann \& Zimmermann}]{Guisan2005}
Guisan, A., Broennimann, O., Engler, R., Vust, M., Yoccoz, N.G., Lehmann, A. \&
  Zimmermann, N.E. (2005) Using niche-based models to improve the sampling of
  rare species.
\newblock \emph{Conservation Biology}, \textbf{20}, 501--511.

\bibitem[{Guisan \emph{et~al.}(2013)Guisan, Tingley, Baumgartner,
  Naujokaitis-Lewis, Sutcliffe, Tulloch, Regan, Brotons, McDonald-Madden,
  Mantyka-Pringle, Martin, Rhodes, Maggini, Setterfield, Elith, Schwartz,
  Wintle, Broennimann, Austin, Ferrier, Kearney, Possingham \&
  Buckley}]{Guisan2013}
Guisan, A., Tingley, R., Baumgartner, J.B., Naujokaitis-Lewis, I., Sutcliffe,
  P.R., Tulloch, A.I.T., Regan, T.J., Brotons, L., McDonald-Madden, E.,
  Mantyka-Pringle, C., Martin, T.G., Rhodes, J.R., Maggini, R., Setterfield,
  S.A., Elith, J., Schwartz, M.W., Wintle, B.A., Broennimann, O., Austin, M.,
  Ferrier, S., Kearney, M.R., Possingham, H.P. \& Buckley, Y.M. (2013)
  Predicting species distributions for conservation decisions.
\newblock \emph{Ecology Letters}, \textbf{16}, 1424--1435.

\bibitem[{Heshmati \emph{et~al.}(2019)Heshmati, Khorasani, Shams-Esfandabad \&
  Riazi}]{Heshmati2019}
Heshmati, I., Khorasani, N., Shams-Esfandabad, B. \& Riazi, B. (2019)
  Forthcoming risk of Prosopis juliflora global invasion triggered by climate
  change: implications for environmental monitoring and risk assessment.
\newblock \emph{Environmental Monitoring and Assessment}, \textbf{191}, 72.

\bibitem[{Hijmans \emph{et~al.}(2023)Hijmans, Phillips, Leathwick \&
  Elith}]{Hijmans2023}
Hijmans, R.J., Phillips, S., Leathwick, J. \& Elith, J. (2023) \emph{dismo:
  Species Distribution Modeling}.
\newblock R package version 1.3-14.

\bibitem[{Kappen \& de~Borja Rodr\'{\i}guez~Ortiz(1997)}]{Kappen1997}
Kappen, H. \& de~Borja Rodr\'{\i}guez~Ortiz, F. (1997) Boltzmann Machine
  Learning Using Mean Field Theory and Linear Response Correction.
\newblock  \emph{Advances in Neural Information Processing Systems} (eds.
  M.~Jordan, M.~Kearns \& S.~Solla), vol.~10. MIT Press.

\bibitem[{Komori \& Eguchi(2019)}]{Komori2019}
Komori, O. \& Eguchi, S. (2019) \emph{Statistical Methods for Imbalanced Data
  in Ecological and Biological Studies}.
\newblock Springer, Tokyo.

\bibitem[{Komori \emph{et~al.}(2015)Komori, Eguchi \& Copas}]{Komori2015}
Komori, O., Eguchi, S. \& Copas, J.B. (2015) Generalized $t$-statistic for
  two-group classification.
\newblock \emph{Biometrics}, \textbf{71}, 404--416.

\bibitem[{Komori \emph{et~al.}(2020)Komori, Eguchi, Saigusa, Kusumoto \&
  Kubota}]{Komori2020}
Komori, O., Eguchi, S., Saigusa, Y., Kusumoto, B. \& Kubota, Y. (2020) Sampling
  bias correction in species distribution models by quasi-linear Poisson point
  process.
\newblock \emph{Ecological Informatics}, \textbf{55}, 1--11.

\bibitem[{Komori \emph{et~al.}(2023)Komori, Saigusa \& Eguchi}]{Komori2023}
Komori, O., Saigusa, Y. \& Eguchi, S. (2023) Statistical learning for species
  distribution models in ecological studies.
\newblock \emph{Japanese Journal of Statistics and Data Science}, \textbf{6},
  803--826.

\bibitem[{Kubota \emph{et~al.}(2015)Kubota, Shiono \& Kusumoto}]{Kubota2015}
Kubota, Y., Shiono, T. \& Kusumoto, B. (2015) Role of climate and geohistorical
  factors in driving plant richness patterns and endemicity on the east Asian
  continental islands.
\newblock \emph{Ecography}, \textbf{38}, 639--648.

\bibitem[{Lee \emph{et~al.}(2021)Lee, Lee, Kwon, Athar \& Park}]{Lee2021}
Lee, C.M., Lee, D.S., Kwon, T.S., Athar, M. \& Park, Y.S. (2021) Predicting the
  global distribution of Solenopsis geminata (Hymenoptera: Formicidae) under
  climate change using the MaxEnt model.
\newblock \emph{Insects}, \textbf{12}, 229.

\bibitem[{Minami \& Eguchi(2002)}]{Minami2002}
Minami, M. \& Eguchi, S. (2002) Robust blind source separation by beta
  divergence.
\newblock \emph{Neural Computation}, \textbf{14}, 1859--1886.

\bibitem[{M{\o}ller \emph{et~al.}(1998)M{\o}ller, Syversveen \&
  Waagepetersen}]{Moller1998}
M{\o}ller, J., Syversveen, A.R. \& Waagepetersen, R.P. (1998) Log Gaussian Cox
  Processes.
\newblock \emph{Scandinavian Journal of Statistics}, \textbf{25}, 451--482.

\bibitem[{Phillips \& Dud{\'i}k(2008)}]{Phillips2008}
Phillips, S.J. \& Dud{\'i}k, M. (2008) Modeling of species distributions with
  Maxent: new extensions and a comprehensive evaluation.
\newblock \emph{Ecography}, \textbf{31}, 161--175.

\bibitem[{Phillips \emph{et~al.}(2006)Phillips, Anderson \&
  Schapire}]{Phillips2006}
Phillips, S.J., Anderson, R.P. \& Schapire, R.E. (2006) Maximum entropy
  modeling of species geographic distributions.
\newblock \emph{Ecological Modelling}, \textbf{190}, 231--259.

\bibitem[{Phillips \emph{et~al.}(2009)Phillips, Dud{\'i}k, Elith, Graham,
  Lehmann, Leathwick,  \& Ferrier}]{Phillips2009}
Phillips, S.J., Dud{\'i}k, M., Elith, J., Graham, C.H., Lehmann, A., Leathwick,
  J.,  \& Ferrier, S. (2009) Sample selection bias and presence-only
  distribution models: implications for background and pseudo-absence data.
\newblock \emph{Ecological Applications}, \textbf{19}, 181--197.

\bibitem[{Renner \& Warton(2013)}]{Renner2013}
Renner, I.W. \& Warton, D.I. (2013) Equivalence of MAXENT and Poisson point
  process models for species distribution modeling in ecology.
\newblock \emph{Biometrics}, \textbf{69}, 274--281.

\bibitem[{Renner \emph{et~al.}(2015)Renner, Elith, Baddeley, Fithian, Hastie,
  Phillips, Popovic \& I.Warton}]{Renner2015}
Renner, I., Elith, J., Baddeley, A., Fithian, W., Hastie, T., Phillips, S.J.,
  Popovic, G. \& I.Warton, D. (2015) Point process models for presence-only
  analysis.
\newblock \emph{Methods in Ecology and Evolution}, \textbf{6}, 366--379.

\bibitem[{Saigusa \emph{et~al.}(2024)Saigusa, Eguchi \& Komori}]{Saigusa2024}
Saigusa, Y., Eguchi, S. \& Komori, O. (2024) Robust minimum divergence
  estimation in a spatial Poisson point process.
\newblock \emph{Ecological Informatics}, \textbf{81}, 102569.

\bibitem[{Stewart \emph{et~al.}(2007)Stewart, Pullin \& Tyler}]{Stewart2007}
Stewart, G.B., Pullin, A.S. \& Tyler, C. (2007) The effectiveness of asulam for
  bracken (Pteridium aquilinum) control in the United Kingdom: A meta-analysis.
\newblock \emph{Environmental Management}, \textbf{40}, 704--760.

\bibitem[{Streit(2010)}]{Streit2010}
Streit, R.L. (2010) \emph{Poisson Point Processes: Imaging, Tracking, and
  Sensing}.
\newblock Springer, New York.

\bibitem[{Takenouchi \& Kanamori(2017)}]{Takenouchi2017}
Takenouchi, T. \& Kanamori, T. (2017) Statistical inference with unnormalized
  discrete models and localized homogeneous divergences.
\newblock \emph{Journal of Machine Learning Research}, \textbf{18}, 1--26.

\bibitem[{Tanaka(1998)}]{Tanaka1998}
Tanaka, T. (1998) A Theory of Mean Field Approximation.
\newblock  \emph{Advances in Neural Information Processing Systems} (eds.
  M.~Kearns, S.~Solla \& D.~Cohn), vol.~11. MIT Press.

\bibitem[{Valavi \emph{et~al.}(2022)Valavi, Guillera-Arroita, Lahoz-Monfort \&
  Elith}]{Valavi2022}
Valavi, R., Guillera-Arroita, G., Lahoz-Monfort, J.J. \& Elith, J. (2022)
  Predictive performance of presence-only species distribution models: a
  benchmark study with reproducible code.
\newblock \emph{Ecological Monographs}, \textbf{92}, e01486.

\bibitem[{{Van Rijsbergen}(1974)}]{Van1974}
{Van Rijsbergen}, C. (1974) Foundation of evaluation.
\newblock \emph{Journal of Documentation}, \textbf{30}, 365--373.

\bibitem[{Vetter(2009)}]{Vetter2009}
Vetter, J. (2009) A biological hazard of our age: Bracken fern [Pteridium
  aquilinum (L.) Kuhn] ― A Review.
\newblock \emph{Acta Veterinaria Hungarica}, \textbf{57}, 183--196.

\bibitem[{Warton \& Shepherd(2010)}]{Warton2010}
Warton, D.I. \& Shepherd, L.C. (2010) Poisson point process models solve the
  ``pseudo-absence problem'' for presence-only data in ecology.
\newblock \emph{The Annals of Applied Statistics The Annals of Applied
  Statistics}, \textbf{4}, 1383--1402.

\end{thebibliography}

\newpage
\begin{figure}[H]
 \begin{minipage}{0.5\hsize}
  \begin{center}
   \includegraphics[width=8cm,height=8cm]{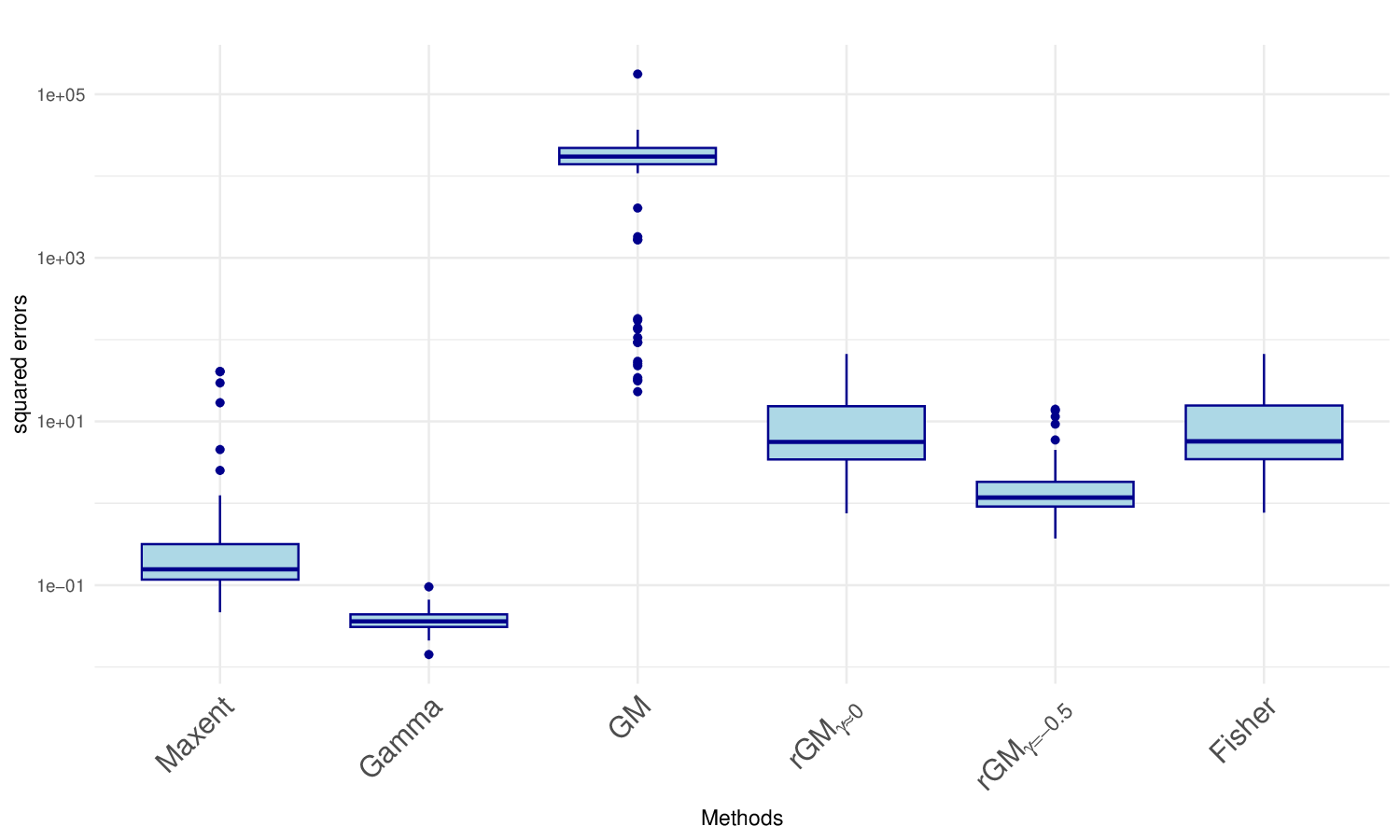}\\
(a) squared errors
  \end{center}
 \end{minipage}
  \begin{minipage}{0.5\hsize}
  \begin{center}
   \includegraphics[width=8cm,height=8cm]{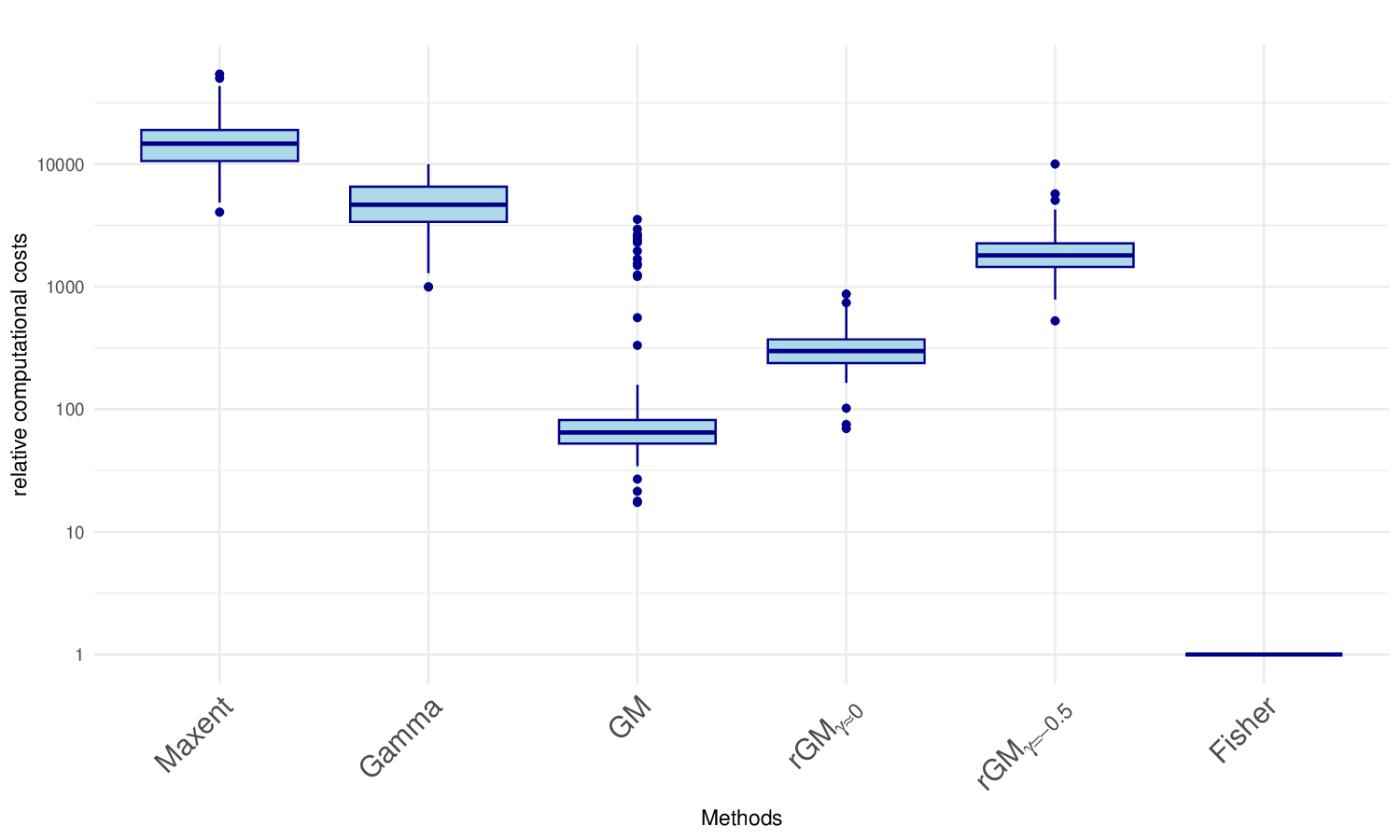}\\
(b) relative computational costs
  \end{center}
 \end{minipage}
  \begin{center}
correlation setting $\rho=0.5$
  \end{center}
  
  \begin{minipage}{0.5\hsize}
  \begin{center}
  \includegraphics[width=8cm,height=8cm]{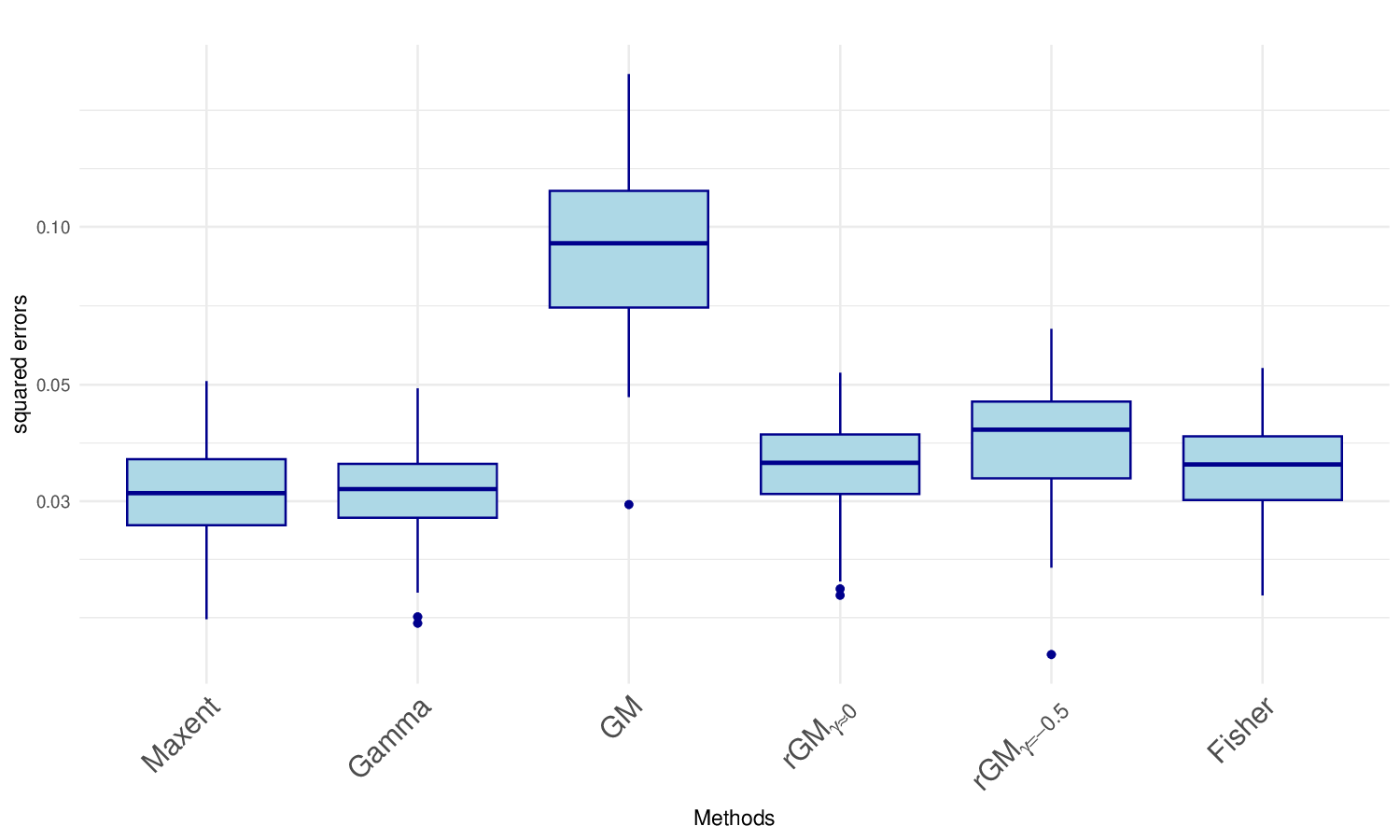}\\
(c) squared errors
  \end{center}
 \end{minipage}
 \begin{minipage}{0.5\hsize}
  \begin{center}
   \includegraphics[width=8cm,height=8cm]{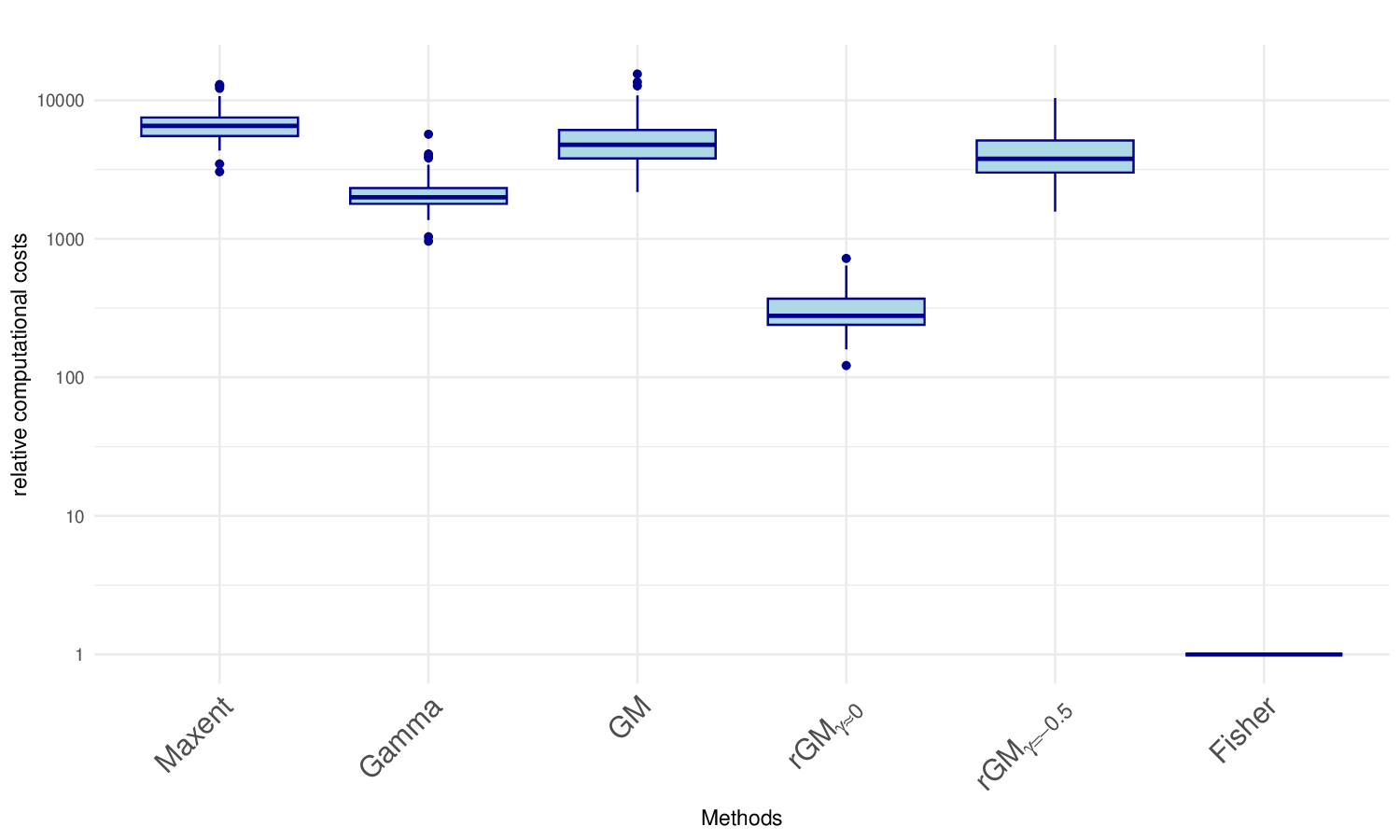}\\
(d) relative computational costs
  \end{center}
 \end{minipage}
 \begin{center}
correlation setting $\rho=0$
  \end{center}
  
\caption{Comparison of performances in Poisson case with $p=50$, $m=500$ and $n=10,000$}\label{fig1}
\end{figure}

\begin{figure}[H]
 \begin{minipage}{0.5\hsize}
  \begin{center}
   \includegraphics[width=8cm,height=6cm]{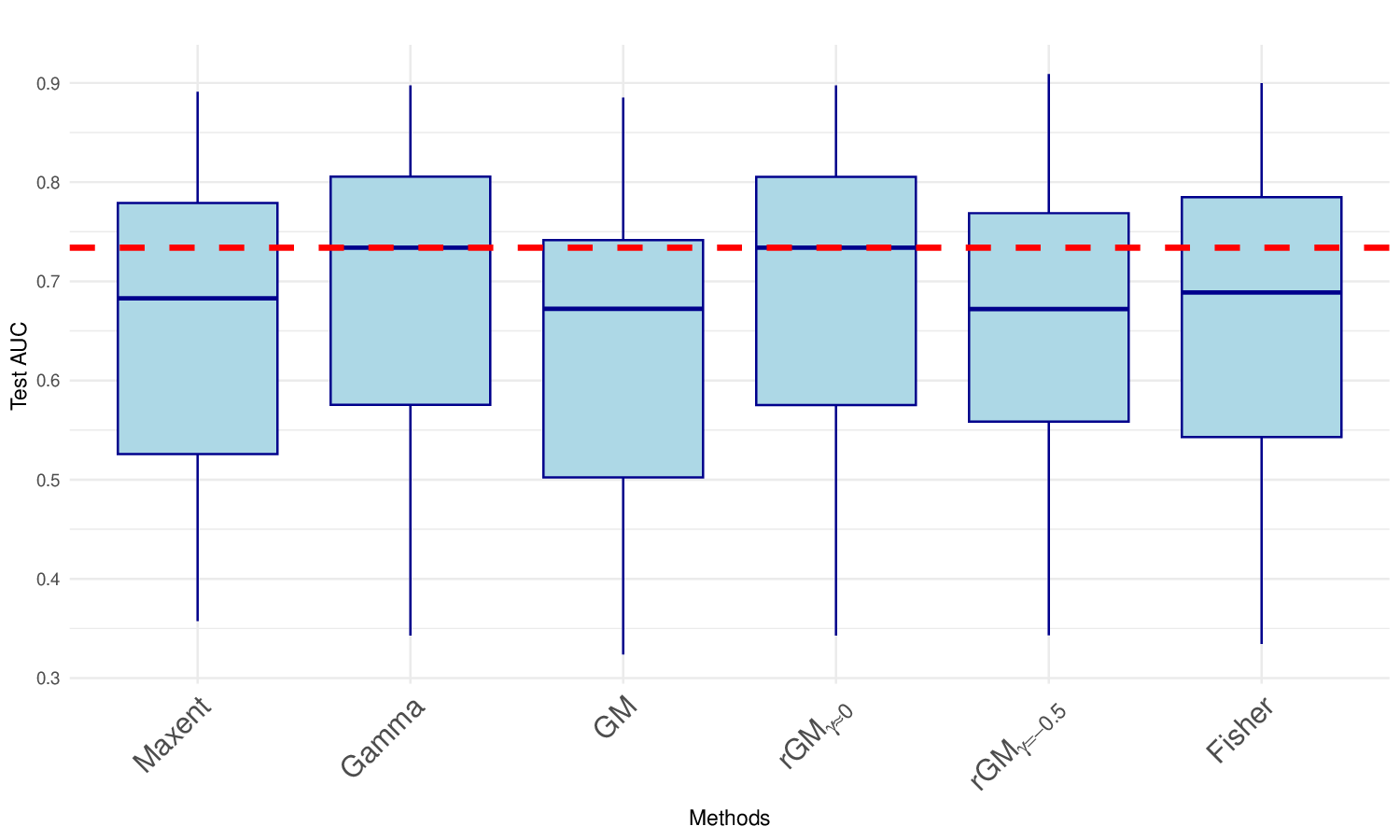}\\
(a) AWT ($p=8$, $9\leq m\leq 265$)
  \end{center}
 \end{minipage}
 \begin{minipage}{0.5\hsize}
  \begin{center}
  \includegraphics[width=8cm,height=6cm]{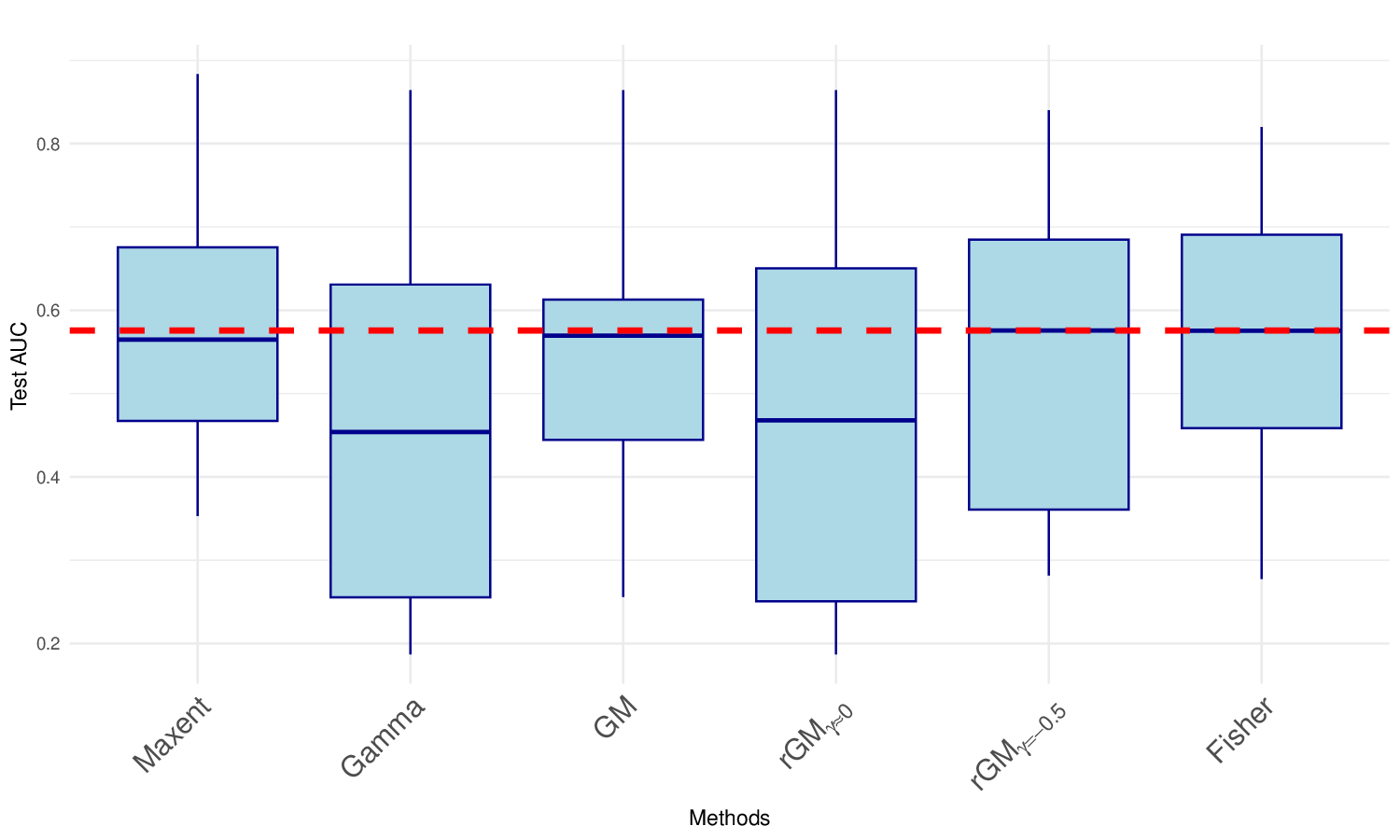}\\
(b)CAN ($p=12$, $16\leq m\leq 740$)
  \end{center}
 \end{minipage}
 \begin{minipage}{0.5\hsize}

 \end{minipage}

 \begin{minipage}{0.5\hsize}
  \begin{center}
   \includegraphics[width=8cm,height=6cm]{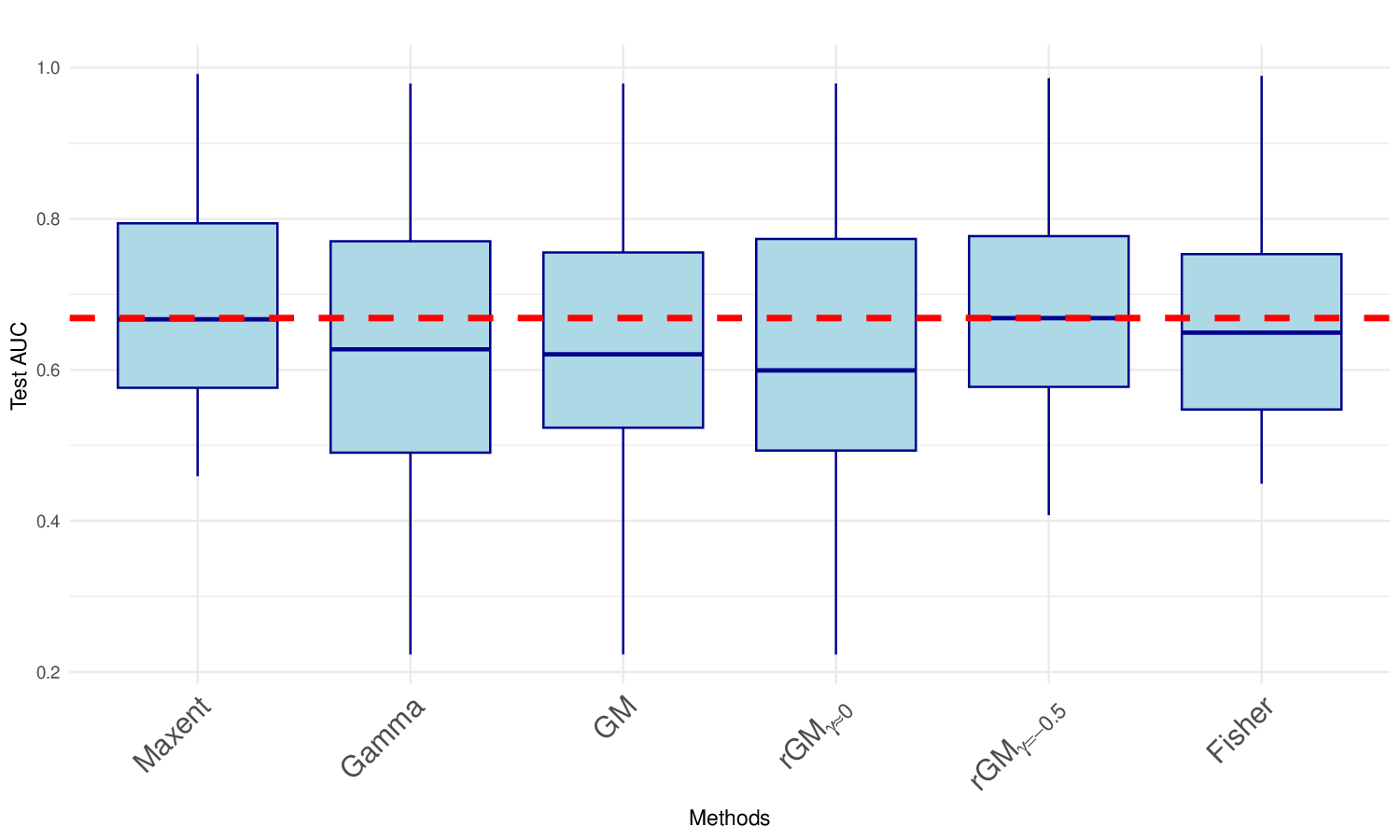}\\
(c) NSW ($p=20$, $2\leq m \leq 426$)
  \end{center}
 \end{minipage}
  \begin{minipage}{0.5\hsize}
  \begin{center}
   \includegraphics[width=8cm,height=6cm]{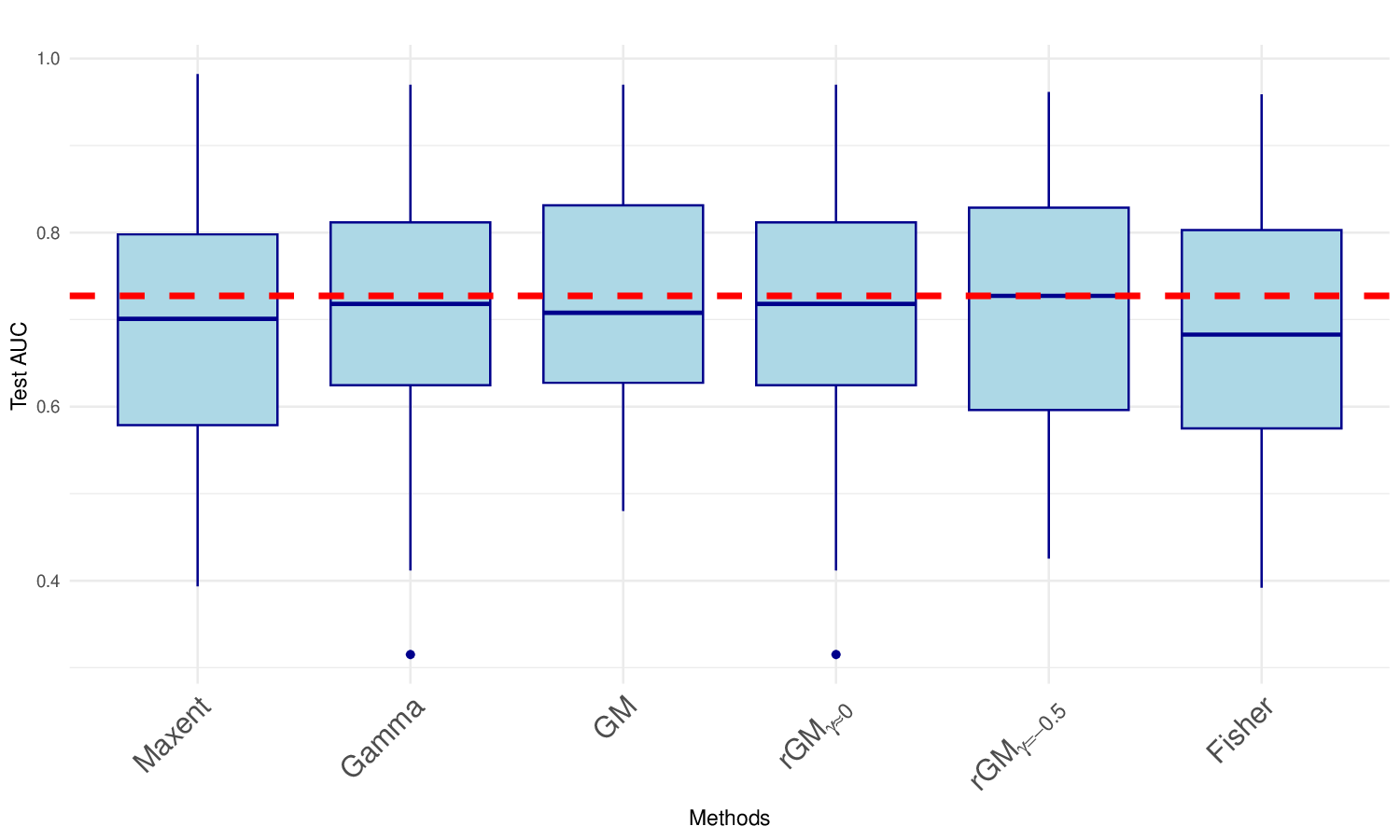}\\
(d) NZ ($p=15$, $18\leq m \leq 211$)
  \end{center}
 \end{minipage}

\begin{minipage}{0.5\hsize}
  \begin{center}
   \includegraphics[width=8cm,height=6cm]{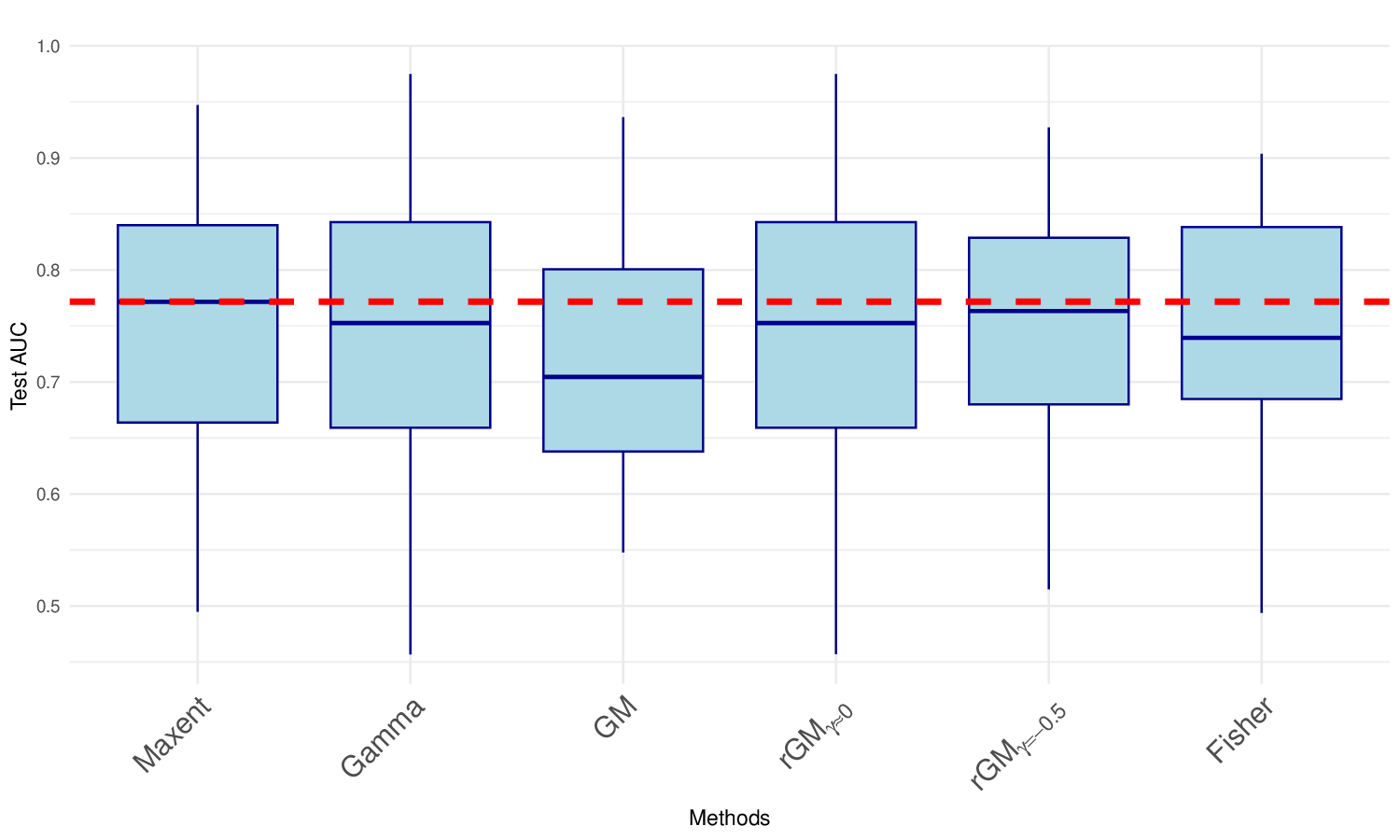}\\
(e) SA ($p=8$, $17\leq m \leq 216$)
  \end{center}
 \end{minipage}
  \begin{minipage}{0.5\hsize}
  \begin{center}
   \includegraphics[width=8cm,height=6cm]{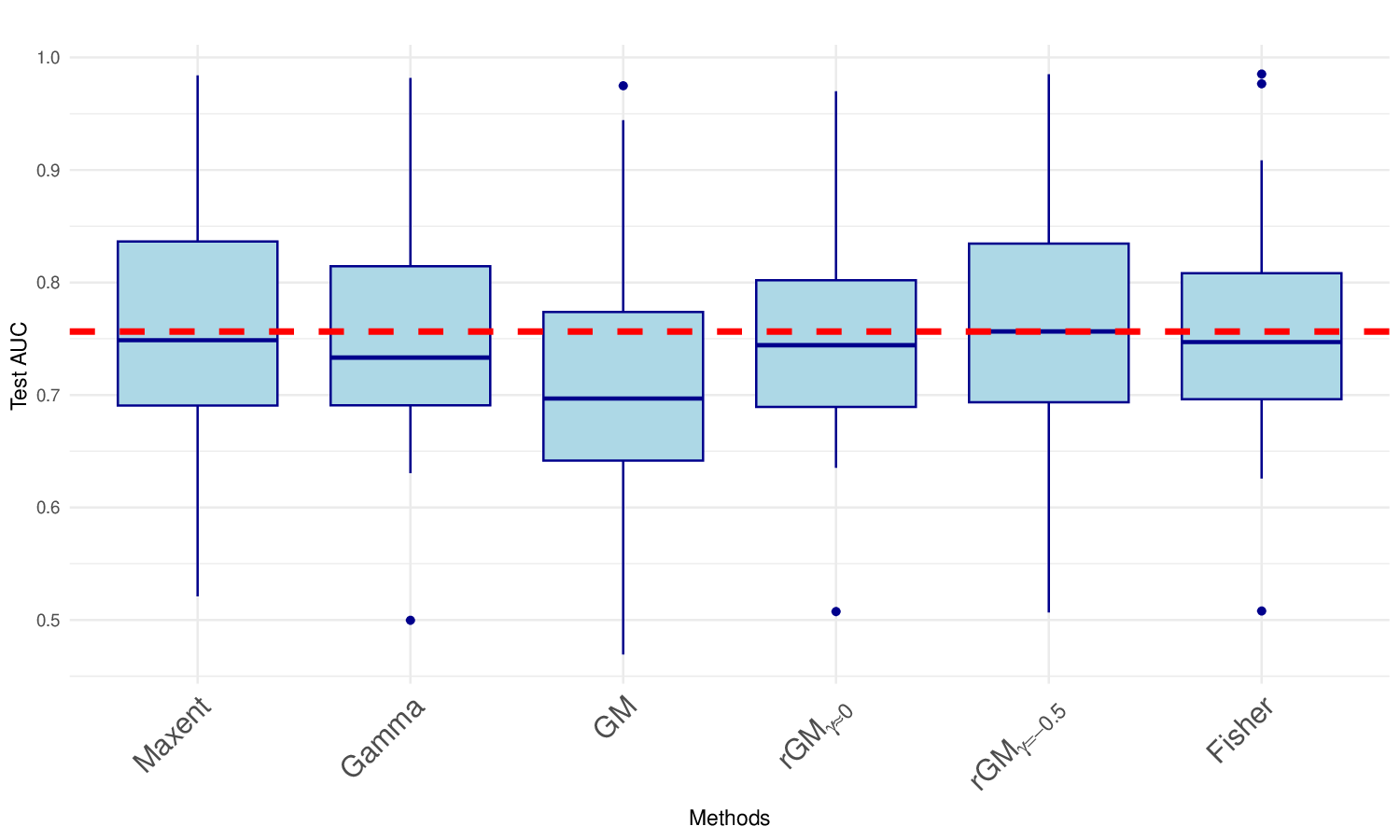}\\
(f) SWI ($p=14$, $36\leq m \leq 5822$)
  \end{center}
 \end{minipage}

\caption{Comparison of performances in terms of test AUC based on NCEAS data with background size $n=10000$}\label{fig_elith}
\end{figure}

\begin{figure}[H]
 \begin{minipage}{0.5\hsize}
  \begin{center}
   \includegraphics[width=8cm,height=10cm]{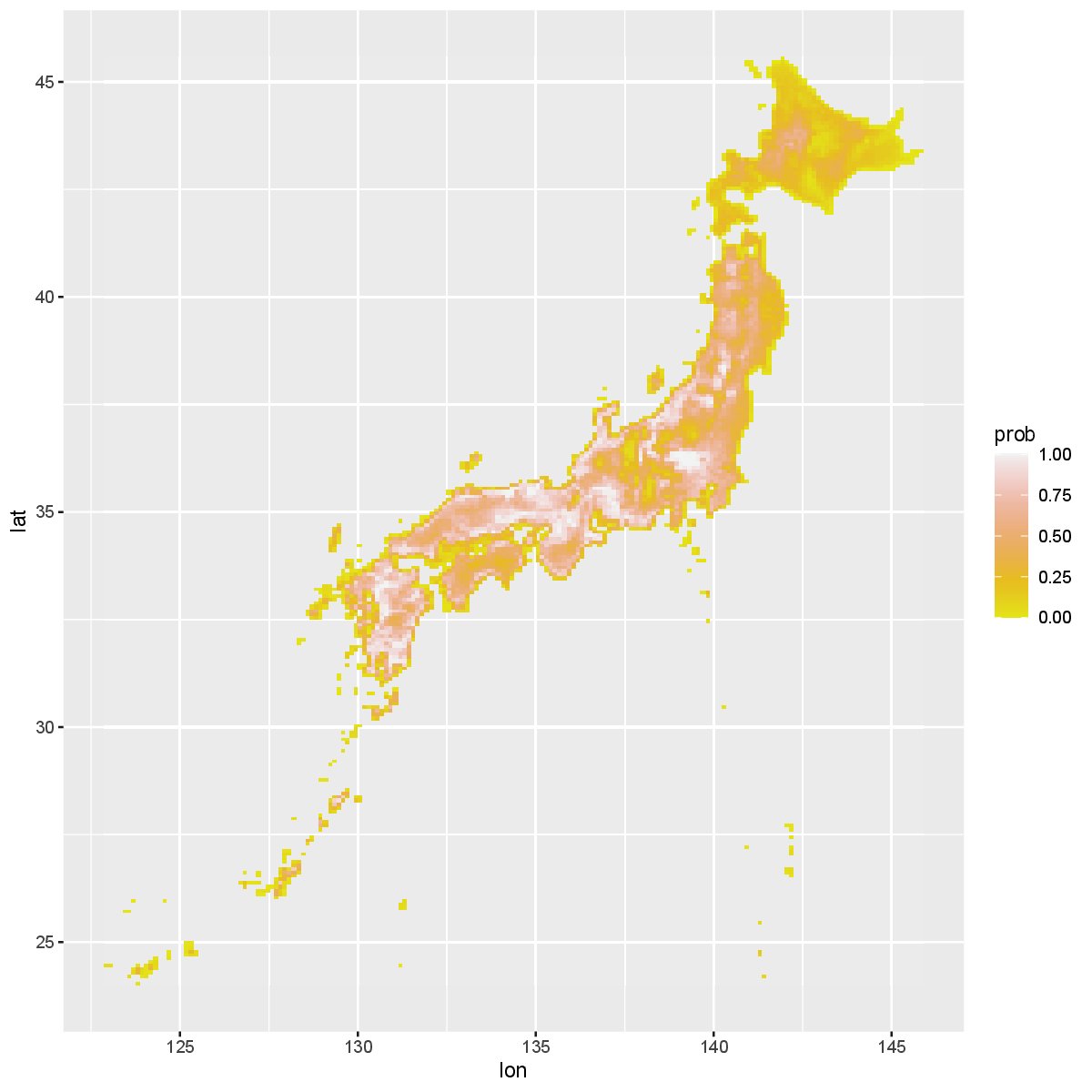}\\
(a) Maxent
  \end{center}
 \end{minipage}
 \begin{minipage}{0.5\hsize}
  \begin{center}
   \includegraphics[width=8cm,height=10cm]{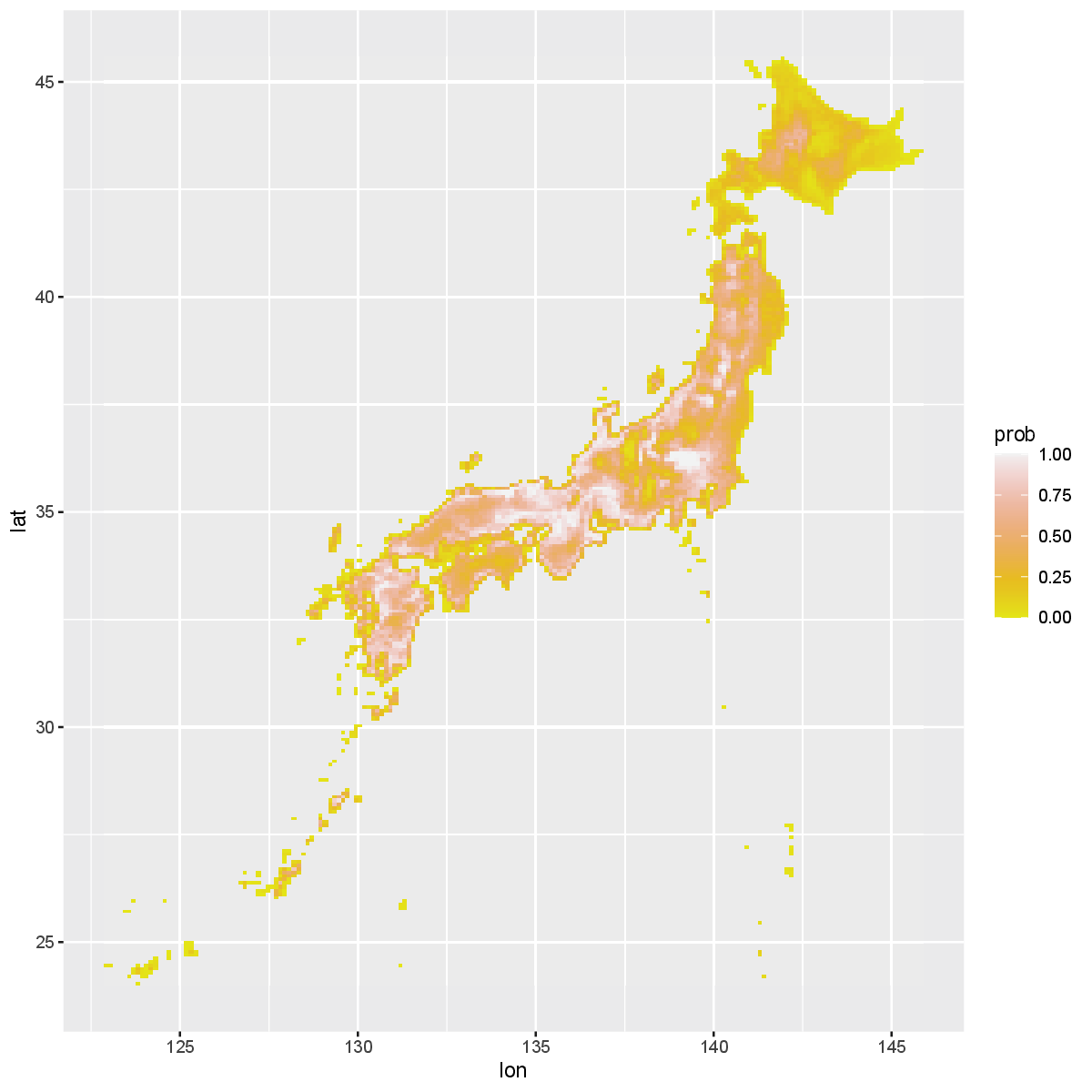}\\
(b) $\rm{rGM_{\gamma=-0.5}}$
  \end{center}
 \end{minipage}
 \begin{minipage}{0.5\hsize}

 \end{minipage}

 \begin{minipage}{0.5\hsize}
  \begin{center}
   \includegraphics[width=8cm,height=10cm]{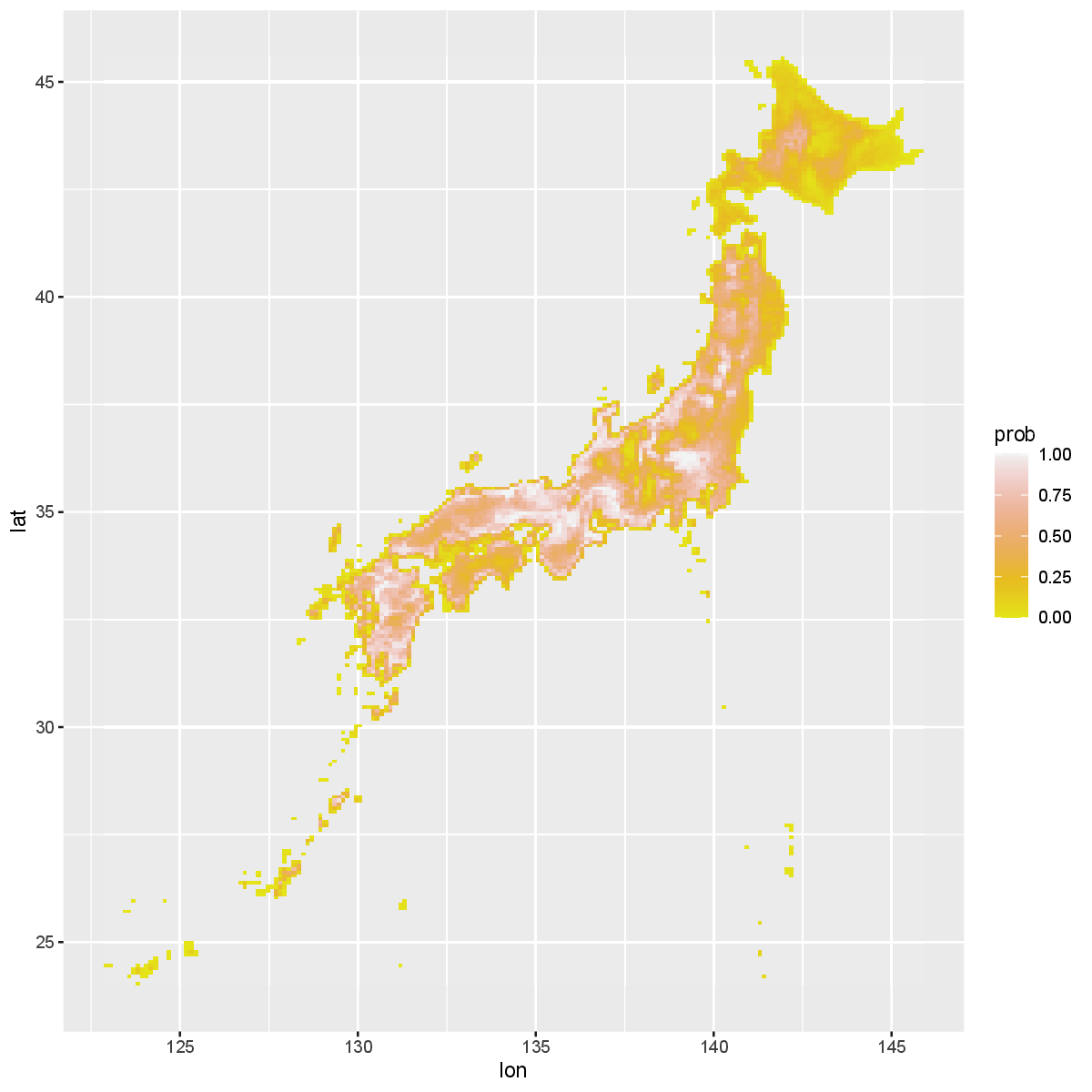}\\
(c) GM
  \end{center}
 \end{minipage}
  \begin{minipage}{0.5\hsize}
  \begin{center}
   \includegraphics[width=8cm,height=10cm]{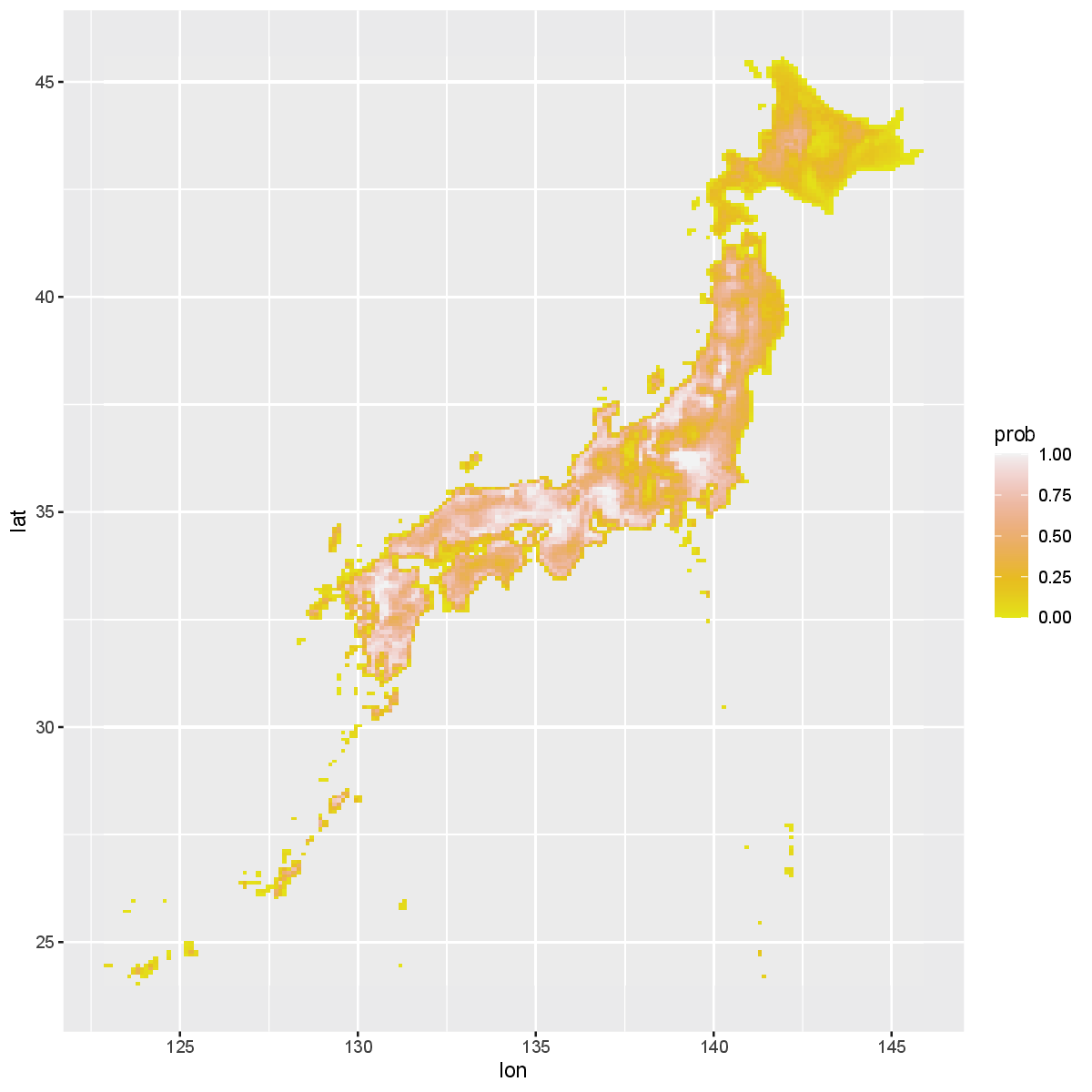}\\
(d) Fisher
  \end{center}
 \end{minipage}

\caption{Estimated habitat maps for {\it Pteridium aquilinum}}\label{fig_habitat}
\end{figure}

\begin{figure}[H]

  \begin{center}
   \includegraphics[width=18cm,height=20cm]{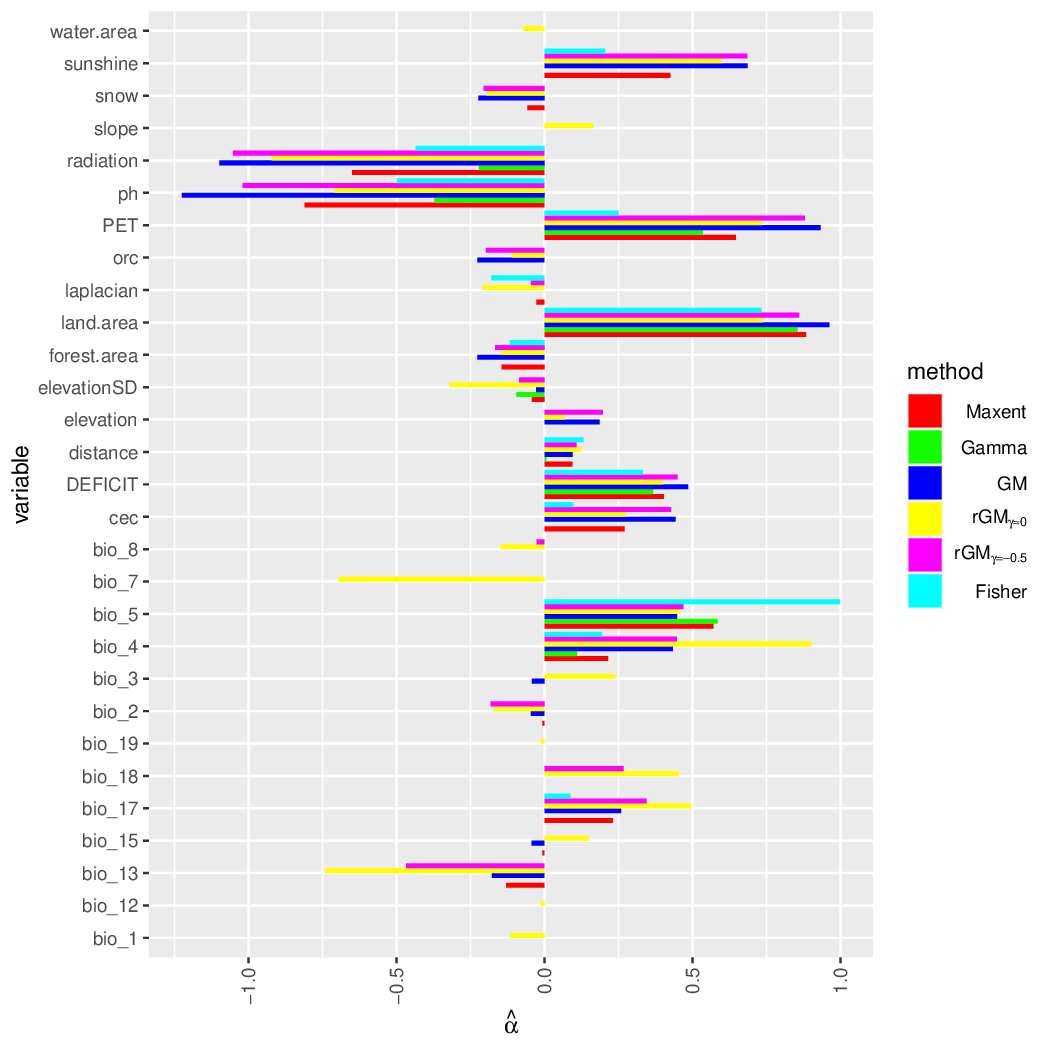}
  \end{center}
 
\caption{Estimated coefficients for {\it Pteridium aquilinum}}\label{fig_coef}

\end{figure}

\begin{figure}[H]
 \begin{minipage}{0.5\hsize}
  \begin{center}
   \includegraphics[width=8cm,height=10cm]{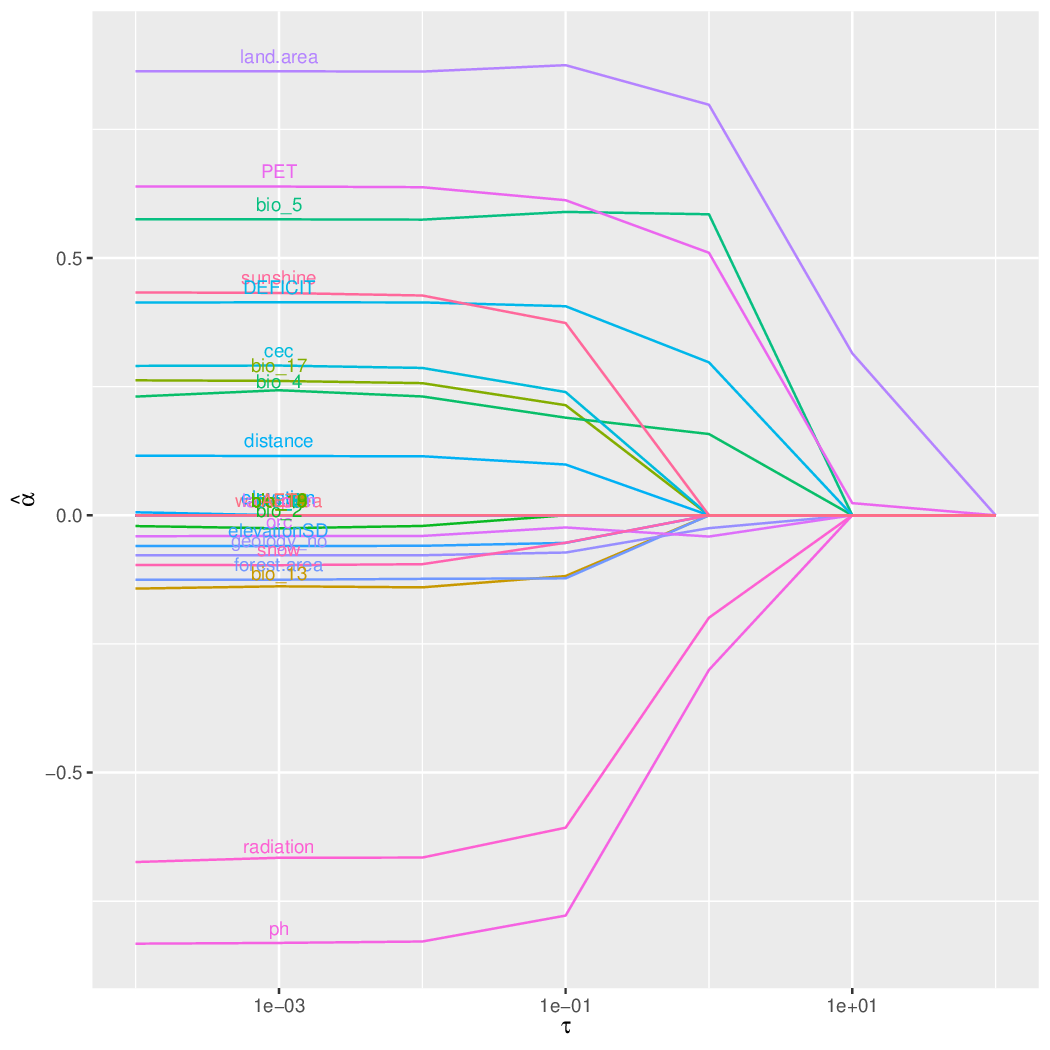}\\
(a) Maxent
  \end{center}
 \end{minipage}
 \begin{minipage}{0.5\hsize}
  \begin{center}
  \includegraphics[width=8cm,height=10cm]{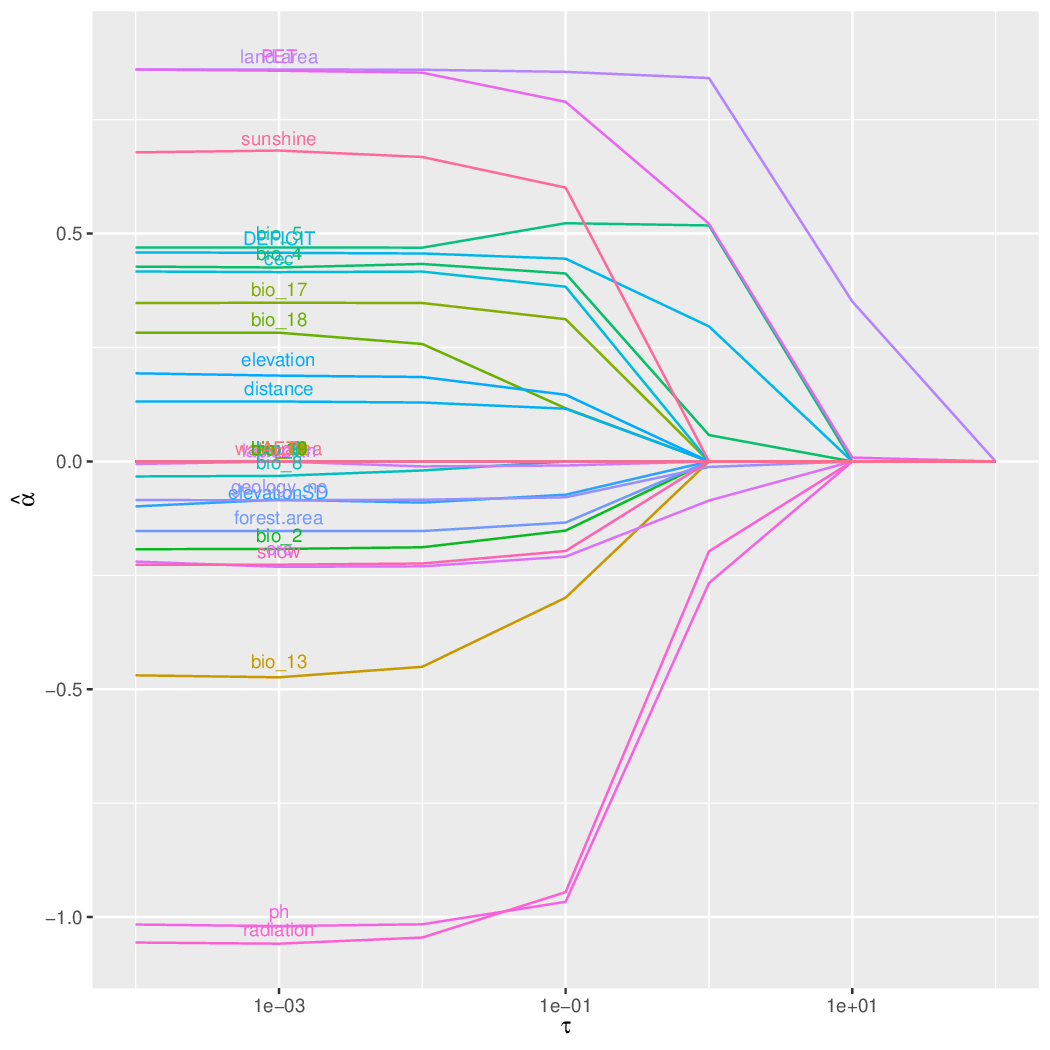}\\
(b)${\rm rGM_{\gamma=-0.5}}$
  \end{center}
 \end{minipage}
 \begin{minipage}{0.5\hsize}

 \end{minipage}

 \begin{minipage}{0.5\hsize}
  \begin{center}
   \includegraphics[width=8cm,height=10cm]{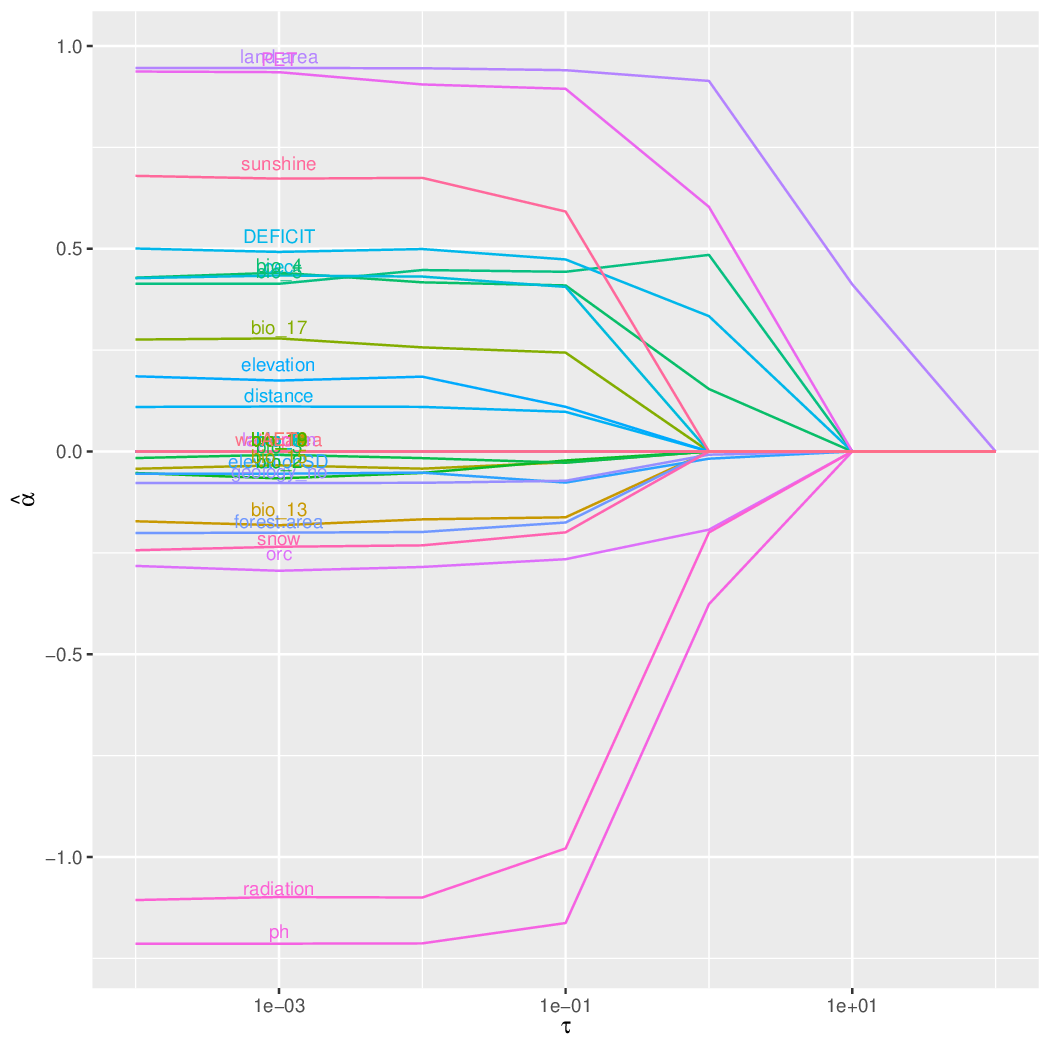}\\
(c) GM
  \end{center}
 \end{minipage}
  \begin{minipage}{0.5\hsize}
  \begin{center}
   \includegraphics[width=8cm,height=10cm]{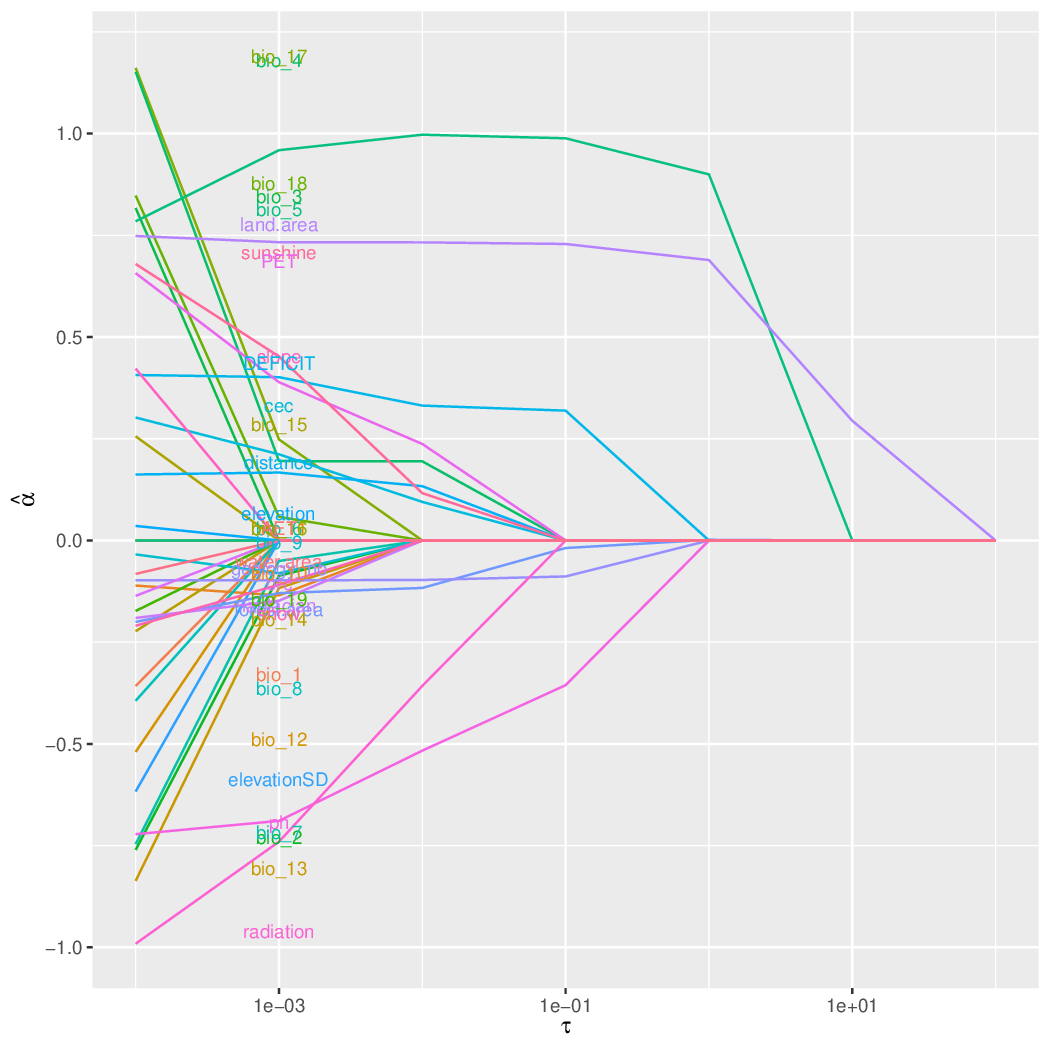}\\
(d) Fisher
  \end{center}
 \end{minipage}

\caption{Estimated coefficients' paths for {\it Pteridium aquilinum}}\label{fig_path}
\end{figure}

\begin{table}[H]
\caption{Mean values of performance measures for Japanese vascular plants\label{Vascular_result79(ALL)).txt}} 
\begin{center}
\begin{tabular}{lllllll}
\hline\hline
\multicolumn{1}{c}{}&\multicolumn{1}{c}{Maxent}&\multicolumn{1}{c}{Gamma}&\multicolumn{1}{c}{GM}&\multicolumn{1}{c}{$\rm{rGM_{\gamma\approx 0}}$}&\multicolumn{1}{c}{$\rm{rGM_{\gamma=-0.5}}$}&\multicolumn{1}{c}{Fisher}\tabularnewline
\hline
training AUC&0.904&0.901&0.833&0.9&\bf{0.905}&0.892\tabularnewline
test AUC&0.863&0.862&0.781&0.858&\bf{0.864}&0.854\tabularnewline
comp. cost&197.6&124&28.5&37.8&39.4&\bf{1}\tabularnewline
No of variables&17.5&14.5&\bf{9.1}&26.5&20.5&15.1\tabularnewline
\hline
\end{tabular}\end{center}
\end{table}

\clearpage
\newpage

\section*{Supporting Information for ``Cumulant-based approximation for fast and efficient prediction for species distribution"}

\author{Osamu Komori\\
Department of Science and Technology, Seikei University\\
 3-3-1 Kichijoji-kitamachi, Musashino, Tokyo 180-8633, Japan
\\
Yusuke Saigusa\\
Department of Biostatistics, School of Medicine, Yokohama City University\\
3-9 Fukuura, Kanazawa, Yokohama, Kanagawa 236-0004, Japan\\
 Shinto Eguchi\\
The Institute of Statistical Mathematics\\
 10-3 Midori-cho, Tachikawa, Tokyo 190-8562, Japan\\
 Yasuhiro Kubota}\\
 Faculty of Science, University of the Ryukyus\\
 1 Senbaru, Nishihara, Okinawa 903-0213, Japan

\appendix
 \renewcommand{\figurename}{Figure.S}
\renewcommand{\tablename}{Table.S}
\setcounter{figure}{0}
\setcounter{table}{0}
\def\thesection{\Alph{section}}

\section{Derivation of equation (\ref{eq8})}\label{appendA}

The estimation equations for PPM and $ L_0(\bm\alpha)$ under the log-linear model are given as
\begin{eqnarray}
\frac{\partial}{\partial\bm\alpha} \ell_{\rm PPM}(\bm\alpha,c)
&=&\sum_{i=1}^n\bm f(x_i)\bigg\{z_i-w_i\exp(\bm\alpha^\top \bm f(x_i)+c)\bigg\}=0\label{ppm}\\
\frac{\partial}{\partial\bm\alpha} L_0(\bm\alpha)
&=&\sum_{i=1}^n\bm f(x_i)\bigg\{z_i-\frac{m w_i}{\Lambda(\bm\alpha)}\exp(\bm\alpha^\top \bm f(x_i))\bigg\}=0,\label{gamma0}
\end{eqnarray}
Comparing with (\ref{ppm}) and (\ref{gamma0}), we have
\begin{eqnarray}
\hat{\bm \alpha}_{\rm PPM}^\top \bm f(x_i)+\hat c_{\rm PPM}=\hat{\bm \alpha}_0^\top \bm f(x_i)+\log \frac{m}{\Lambda(\hat{\bm\alpha}_0)},\ (i=1,\ldots,n).
\end{eqnarray}
Hence we have $\hat{\bm\alpha}_{\rm PPM}=\hat{\bm\alpha}_0$ and $\hat c_{\rm PPM}=\log \{m/\Lambda(\hat{\bm\alpha}_0)\}$.

\section{Relation between $\gamma$-loss and $\beta$-loss functions}\label{appendB}

The relationship between the estimators of PPM and Maxent can be generalized to that between the estimators derived from $\beta$-loss function and $\gamma$-loss function as follows. The $\beta$-loss function for an intensity function $\lambda( x_i,\bm\alpha)$ \citep{Komori2023} is given as
\begin{equation}
 L^{(\beta)}(\bm\alpha,c)=-\sum_{i=1}^m\frac{\lambda(x_i,\bm\alpha,c)^\beta-1}{\beta}+\frac{1}{1+\beta}\sum_{i=1}^nw_i\lambda(x_i,\bm\alpha,c)^{1+\beta},
\end{equation}
where $\beta>-1$ and the estimator of $ L^{(\beta)}(\bm\alpha,c)$ is defined as
\begin{eqnarray}
 (\hat{\bm\alpha}^{(\beta)},\hat{c}^{(\beta)})=\argmin_{\boldmath\alpha,c}  L^{(\beta)}(\bm\alpha,c).
\end{eqnarray}
The $\beta$-loss function is widely employed in the robust parameter estimation \citep{Basu1998,Minami2002,Eguchi2022} as well as estimation of species distribution \citep{Komori2019}. Then we have the following theorem.

\begin{theorem}\label{thm2}
Under an assumption of log-linear model for intensity functions for $\beta$-loss and $\gamma$-loss functions, we have
\begin{eqnarray}\label{eq11}
\hat{\bm\alpha}^{(\gamma)}=\hat{\bm\alpha}_\gamma,\ \hat c^{(\gamma)}=\log\frac{\sum_{i=1}^m\exp\{\gamma\hat{\bm\alpha}_\gamma^\top \bm f(x_i)\}}{\sum_{i=1}^nw_i\exp\{(1+\gamma)\hat{\bm\alpha}_\gamma^\top \bm f(x_i)\}}.
\end{eqnarray}

\end{theorem}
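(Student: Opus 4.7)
The plan is to match the estimating equations of $L^{(\beta)}(\bm\alpha,c)$ with those of $L_\gamma(\bm\alpha)$, setting $\beta=\gamma$. The structural observation driving the proof is that $c$ cancels out of $L_\gamma$ under the log-linear model (both the numerator $\sum_i\lambda^\gamma$ and the denominator $\{\sum_i w_i\lambda^{\gamma+1}\}^{\gamma/(\gamma+1)}$ scale as $e^{\gamma c}$), whereas $L^{(\beta)}$ carries separate scores in $\bm\alpha$ and $c$. Profiling out $c$ from $L^{(\beta)}$ should reduce it to the same first-order condition as $L_\gamma$.

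First I would differentiate $L^{(\beta)}$ with respect to $c$. Under the log-linear model, $\partial\lambda(x_i,\bm\alpha,c)/\partial c=\lambda(x_i,\bm\alpha,c)$, so the score reduces to $-\sum_{i=1}^m\lambda(x_i,\bm\alpha,c)^\beta+\sum_{i=1}^n w_i\lambda(x_i,\bm\alpha,c)^{1+\beta}=0$. Pulling out $e^{\beta c}$ and $e^{(1+\beta)c}$ and solving for $e^c$ yields
\begin{equation*}
e^{\hat c^{(\beta)}}=\frac{\sum_{i=1}^m\exp(\beta\bm\alpha^\top\bm f(x_i))}{\sum_{i=1}^n w_i\exp((1+\beta)\bm\alpha^\top\bm f(x_i))},
\end{equation*}
which is exactly the second identity in (\ref{eq11}) with $\beta=\gamma$.

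Next I would differentiate $L^{(\beta)}$ with respect to $\bm\alpha$: the same mechanism gives $-\sum_{i=1}^m\lambda(x_i,\bm\alpha,c)^\beta\bm f(x_i)+\sum_{i=1}^n w_i\lambda(x_i,\bm\alpha,c)^{1+\beta}\bm f(x_i)=0$. Substituting the log-linear form, dividing by $e^{\beta c}$ and then using the $c$-equation to replace $e^c$ by $B/A$, where $B=\sum_{i=1}^m\exp(\beta\bm\alpha^\top\bm f(x_i))$ and $A=\sum_{i=1}^n w_i\exp((1+\beta)\bm\alpha^\top\bm f(x_i))$, I obtain the $\bm\alpha$-only equation
\begin{equation*}
\sum_{i=1}^m\exp(\gamma\bm\alpha^\top\bm f(x_i))\bm f(x_i)=\frac{B}{A}\sum_{i=1}^n w_i\exp((\gamma+1)\bm\alpha^\top\bm f(x_i))\bm f(x_i).
\end{equation*}

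Finally I would compute the estimating equation for $L_\gamma$ directly. Writing $L_\gamma(\bm\alpha)=-\gamma^{-1}[B/A^{\gamma/(\gamma+1)}-m]$ and applying the quotient rule, the prefactor $A^{\gamma/(\gamma+1)}$ cleans up after multiplying through by $A^{\gamma/(\gamma+1)+1}/(\gamma B)$, and the resulting stationarity condition coincides term-by-term with the displayed equation above. Therefore $\hat{\bm\alpha}^{(\gamma)}$ and $\hat{\bm\alpha}_\gamma$ solve the same system, and substituting this common minimizer into the formula for $\hat c^{(\beta)}$ above recovers $\hat c^{(\gamma)}$. The main obstacle I expect is arguing that the common stationary point is the unique minimizer of both objectives, so that the argmins actually agree; for $\gamma\ge0$ this is immediate from the convexity of $L^{(\beta)}$ in $(\bm\alpha,c)$ under the log-linear model, while for $-1<\gamma<0$ one can appeal to the local identifiability guaranteed by Proposition \ref{prop1} together with strict convexity of the induced $\gamma$-divergence in a neighborhood of $\bm\alpha^*$.
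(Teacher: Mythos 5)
Your proof follows essentially the same route as the paper's Appendix \ref{appendB}: write the first-order conditions of $L^{(\beta)}$ and $L_\gamma$ under the log-linear model and observe that, once $e^{c}$ is identified with the ratio $B/A$, the two systems coincide; the paper performs exactly this comparison of estimating equations (more tersely, concluding directly that the fitted log-intensities agree and reading off both identities in (\ref{eq11})). Your explicit use of the $\partial/\partial c$ score to profile out the intercept is a slightly cleaner way to obtain the formula for $\hat c^{(\gamma)}$, since the paper never displays that score, and your gradient computations are correct. One caveat on your closing uniqueness remark: the convexity claim is reversed. Under the log-linear model the presence term $-\sum_{i=1}^m\{\lambda(x_i,\bm\alpha,c)^\beta-1\}/\beta$ equals $-\beta^{-1}\sum_{i=1}^m\{e^{\beta\eta_i}-1\}$ with $\eta_i=\bm\alpha^\top\bm f(x_i)+c$, which is \emph{concave} in $(\bm\alpha,c)$ when $\beta>0$ (since $e^{\beta\eta}$ is convex and the prefactor is negative), so $L^{(\beta)}$ is only a difference of convex functions for $\gamma>0$; it is the range $-1<\beta<0$ where both terms are convex and global uniqueness is immediate. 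This does not undermine the main argument --- the paper itself only matches stationary points --- but the ``argmin equals argmin'' step should not lean on convexity for $\gamma\ge 0$.
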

\begin{proof}

Under the log-linear model we have

\begin{align}
 L^{(\beta)}(\bm\alpha,c)=-\sum_{i=1}^m\frac{\exp\{\beta(\bm\alpha^\top \bm f(x_i)+c)\}-1}{\beta}+\frac{1}{1+\beta}\sum_{i=1}^nw_i\exp\{(1+\beta)(\bm\alpha^\top  \bm f(x_i)+c)\}
\end{align}
and 
\begin{align}\label{beta_loss}
\frac{\partial}{\partial\bm\alpha} L^{(\beta)}(\bm\alpha,c)=
-\sum_{i=1}^n\bm f(x_i)\exp(\beta\bm\alpha^\top \bm f(x_i))\{z_i-\exp(c)w_i\exp(\bm\alpha^\top \bm f(x_i))\}=0.
\end{align}
Similarly, we have
\begin{align}\label{gamma_loss}
&\frac{\partial}{\partial\bm\alpha} L_\gamma(\bm\alpha)\nonumber\\
=&\sum_{i=1}^n\bm f(x_i)\exp(\gamma\bm\alpha^\top \bm f(x_i))\bigg\{z_i-\frac{\sum_{i=1}^m\exp(\gamma\bm\alpha^\top \bm f(x_i))}{\sum_{i=1}^nw_i\exp((\gamma+1)\bm\alpha^\top \bm f(x_i))}w_i\exp(\bm\alpha^\top \bm f(x_i))\bigg\}=0.
\end{align}
Hence comparing with (\ref{beta_loss}) and (\ref{gamma_loss}), we have

\begin{eqnarray}
{\hat{\bm\alpha}^{(\gamma)\top}} \bm f(x_i)+\hat c^{(\gamma)}=
\hat{\bm\alpha}_\gamma^\top \bm f(x_i)+
\log\frac{\sum_{i=1}^m\exp\{\gamma\hat{\bm\alpha}_\gamma^\top \bm f(x_i)\}}{\sum_{i=1}^nw_i\exp\{(1+\gamma)\hat{\bm\alpha}_\gamma^\top \bm f(x_i)\}},
\end{eqnarray}
which leads to
\begin{eqnarray}
\hat{\bm\alpha}^{(\gamma)}=\hat{\bm\alpha}_\gamma,\ \hat c^{(\gamma)}=\log\frac{\sum_{i=1}^m\exp\{\gamma\hat{\bm\alpha}_\gamma^\top \bm f(x_i)\}}{\sum_{i=1}^nw_i\exp\{(1+\gamma)\hat{\bm\alpha}_\gamma^\top \bm f(x_i)\}}.
\end{eqnarray}
\end{proof}
From Theorem \ref{thm2}, when $\beta=\gamma$ we observe that the estimated intensity function derived from $\beta$-loss function, i.e. $\lambda(x_i,\hat{\bm\alpha}^{(\gamma)},\hat c^{(\gamma)})$,  is completely reproduced by the estimator of slope derived from $\gamma$-loss function $\hat{\bm\alpha}_\gamma$ and the number of presence locations $m$. Note that equation (\ref{eq11}) in Theorem \ref{thm2} is reduced to equation (\ref{eq8}) as $\gamma\to0$ because $L^{(0)}(\bm\alpha,c)=-\ell_{\rm PPM}(\bm\alpha,c)$.

\section{Proof of Proposition \ref{prop1}}\label{appendC}

\begin{proof}
Let $\lambda^*(x)=c^*\lambda(x,\bm\alpha^*)$ be a true intensity function of Poisson point process with a positive constant $c^*$, then 
we observe that
the expected loss is written by
\begin{eqnarray}
{\mathbb E}\{L_\gamma(\bm\alpha)\}
&=&- \frac{1}{\gamma}\left[\frac{{\mathbb E}\Big[\sum_{i=1}^m 
  { \lambda(x_i,\bm\alpha)^\gamma}\Big]}
{\Big\{\sum_{i=1}^n  w_i\lambda(x_i ,\bm\alpha) ^{\gamma+1}\Big\}^{\frac{\gamma}{\gamma+1}}}-{\mathbb E}[m]\right]\label{sum}\\
&=&- \frac{c^*}{\gamma}\Bigg[\frac{\sum_{i=1}^n
  w_i \lambda(x_i,\bm\alpha)^\gamma\lambda(x_i,\bm\alpha^*)}
{\Big\{\sum_{i=1}^n w_i  \lambda(x_i ,\bm\alpha) ^{\gamma+1}\Big\}^{\frac{\gamma}{\gamma+1}}}-\sum_{i=1}^n w_i \lambda(x_i,\bm\alpha^*)\Bigg].
\end{eqnarray}
We note that equation \eqref{sum} is derived by the formula of random sums \citep{Streit2010}.
Hence we have
 \begin{eqnarray}\nonumber
&&\hspace{-10mm}{\mathbb E}\{L_\gamma(\bm\alpha)\}-{\mathbb E}\{L_\gamma(\bm\alpha^*)\}\\[6mm]\label{beta}
&&\hspace{-15mm}=- \frac{c^*}{\gamma}\left[\frac{\sum_{i=1}^nw_i
  { \lambda(x_i,\bm\alpha)^\gamma}\lambda(x_i,\bm\alpha^*)}
{\Big\{\sum_{i=1}^nw_i  \lambda(x_i ,\bm\alpha) ^{\gamma+1}\Big\}^{\frac{\gamma}{\gamma+1}}}
-{\Big\{\sum_{i=1}^nw_i  \lambda(x_i ,\bm\alpha^*) ^{\gamma+1}\Big\}^{\frac{1}{\gamma+1}}}\right]
\end{eqnarray}
which is nothing but the $\gamma$-power divergence $c^*D_\gamma(\lambda(\cdot,\bm\alpha^*),\lambda(\cdot,\bm\alpha))$. We know that\\ $D_\gamma(\lambda(\cdot,\bm\alpha^*),\lambda(\cdot,\bm\alpha))\geq0$ with equality if and only if $\bm\alpha=\bm\alpha^*$ as one of basic properties for the $\gamma$-power divergence.

Similarly, we confirm of unbiasedness of the estimating equation as
\begin{eqnarray}\label{est-eq}
&&\mathbb E\bigg[\frac{\partial}{\partial\bm\alpha}L_\gamma(\bm\alpha)\bigg]\\
&&\propto\mathbb E\bigg[
- { \sum_{i=1}^m { \lambda(x_i,\bm\alpha) ^\gamma} \frac{\partial}{\partial\boldmath\alpha}\log  \lambda(x_i ,\bm\alpha) }
+{ \sum_{i=1}^m { \lambda(x_i,\bm\alpha) ^\gamma} }
\frac{\sum_{j=1}^n w_j\lambda(x_j,\bm\alpha) ^{\gamma+1} \frac{\partial}{\partial\boldmath\alpha}\log  \lambda(x_j ,\bm\alpha) }{\sum_{j=1}^nw_j  \lambda(x_j ,\bm\alpha) ^{\gamma+1}}\bigg],\nonumber
\end{eqnarray}
which vanishes when the expectation is taken by a Poisson point process with the intensity function $\lambda(x,\bm\alpha)$. 
\end{proof}

\section{Proof of Theorem \ref{thm4}}\label{appendD}

\begin{proof}
From (\ref{g_loss}), we have
\begin{eqnarray}
 L_\gamma(\bm\alpha)
=- \frac{1}{\gamma}\left[\frac{\displaystyle\sum_{i=1}^m 
{ \lambda(x_i,\bm\alpha)^\gamma }}{ \Big\{\displaystyle\sum_{i=1}^n w_i  \lambda(x_i ,\bm\alpha) ^{\gamma+1}\Big\}^{\frac{\gamma}{\gamma+1}}}-m\right].
\end{eqnarray}
The logarithm of the denominator becomes  
\begin{eqnarray}
\frac{\gamma}{\gamma+1}\log\bigg[\sum_{i=1}^n w_i  \lambda(x_i ,\bm\alpha) ^{\gamma+1}\bigg]
=\frac{\gamma}{\gamma+1} K(\gamma+1)
=\gamma\sum_{j=1}^\infty\kappa_j(\bm\alpha)\frac{(\gamma+1)^{j-1}}{j!}.
\end{eqnarray}
When $\log \lambda(x_i,\bm\alpha)=\bm\alpha^\top \bm f(x_i)$, we have
\begin{equation}
 L_\gamma(\bm\alpha)
=
- \frac{1}{\gamma}\Bigg[\sum_{i=1}^m\exp\bigg\{\gamma\bm\alpha^\top \bm f(x_i)- \gamma\sum_{j=1}^\infty\kappa_j(\bm\alpha)\frac{(\gamma+1)^{j-1}}{j!}\bigg\}-m\Bigg].
\end{equation}

\end{proof}

\section{Consistency of estimating function of $L_\gamma^{(2)}(\bm\alpha)$}\label{appendE}

We check the expected estimating function under normality assumption $\bm f(X)\sim {\cal N}(\bar {\bm f},S)$ for presence locations. This leads to $\bm\alpha^\top\bm f \bm(X)\sim {\cal N}(\bm\alpha^\top \bar {\bm f},\bm\alpha^\top S\bm\alpha).$  
Note that
\begin{equation}
\int \exp\{(\gamma+1)\bm\alpha^\top\bm f\}d {\cal N}(\bar {\bm f},{S})(\bm f)
=\exp\{(\gamma+1)\bm\alpha^\top \bar {\bm f}+\half(\gamma+1)^2  \bm\alpha^\top {S} \bm\alpha\},
\end{equation}
which leads to
\begin{eqnarray}\nonumber
\frac{\partial}{\partial\bm\alpha}L_\gamma^{(2)}(\bm\alpha)
=
- \sum_{i=1}^m\exp\big\{\gamma[\bm\alpha^\top( \bm f(x_i)-  \bar {\bm f})-\half(\gamma+1)  \bm\alpha^\top {S} \bm\alpha]\big\} \big(\bm f(x_i)- {\bar {\bm f}}-(\gamma+1){S}\bm\alpha\big).
\end{eqnarray}
Hence, we get
\begin{eqnarray}\nonumber
\mathbb E_{\boldmath\alpha}\Big[\frac{\partial}{\partial\bm\alpha}L_\gamma^{(2)}(\bm\alpha)\Big]
&\propto& \int \exp\{(\gamma+1)\bm\alpha^\top\bm f\}\big(\bm f- {\bar {\bm f}}-(\gamma+1){S}\bm\alpha\big)d {\cal N}(\bar {\bm f},{S})(\bm f)
\\[3mm]&\propto& \int \big(\bm f- {\bar {\bm f}}-(\gamma+1){S}\bm\alpha\big)
d {\cal N}(\bar {\bm f}+(\gamma+1){S}\bm\alpha,{S})(\bm f),
\end{eqnarray}
where $\mathbb E_{\boldmath\alpha}$ is the expectation regarding Poisson point process with a intensity $\exp(\bm\alpha^\top \bm f)$.
This becomes a zero vector since
the mean vector of the normal distribution ${\cal N}(\bar {\bm f}+(\gamma+1){S}\bm\alpha,{S})$ is equal to $\bar {\bm f}+(\gamma+1){S}\bm\alpha$.
Therefore, we conclude that $\frac{\partial}{\partial\boldmath\alpha}L_\gamma^{(2)}(\bm\alpha)$
is unbiased.

\section{Details of estimating algorithm for the proposed methods}\label{appendF}

For $\bm\alpha'=(\alpha_1,\ldots,\alpha_j+\delta,\ldots,\alpha_p)^\top$ and $a_j\leq f_j(x_i)-\bar{f_j}\leq b_j$, we have
\begin{eqnarray}
   L_{\gamma,\tau}^{(2)}(\bm\alpha')-L_{\gamma,\tau}^{(2)}(\bm\alpha)
    \leq
    D_j(\delta,\bm\alpha)+\tau_j(|\alpha_j+\delta|-|\alpha_j|)
    \equiv D_j^\tau(\delta,\bm\alpha)\label{ineq},  
\end{eqnarray}
 where $\tau_j=\tau\sigma_j/\sqrt{m}$ and 
 \begin{align}
 &D_j(\delta,\bm\alpha)\\
 =&\exp\left[-\gamma\bm\alpha^\top \bm{\bar f}-\frac{\gamma(\gamma+1)}{2}\bm\alpha^\top S\bm\alpha-\{\gamma\bar{f_j}+\gamma(\gamma+1)\bm s_j^\top\bm\alpha-\frac{\gamma(\gamma+1)}{2}s_{jj}\delta\}\delta\right]\nonumber\\
 &\hspace{1cm}\times\sum_{i=1}^m\exp\{\gamma\bm\alpha^\top\bm f(x_i)\}\left[\frac{\exp(\gamma b_j\delta)-\exp(\gamma a_j\delta)}{b_j-a_j}f_j(x_i)+\frac{b_j\exp(\gamma a_j\delta)-a_j\exp(\gamma b_j\delta)}{b_j-a_j}\right]\nonumber\\
 &\hspace{1cm}-L_\gamma^{(2)}(\bm\alpha)\\
=&\exp\left[-\frac12\gamma\{2\bm\alpha^\top\bm{\bar f}+(1+\gamma)\alpha^\top S\alpha\}+\frac{\{b_j\gamma-\bar f_j\gamma-\gamma(1+\gamma)\bm s_j^\top \bm \alpha\}^2}{2\gamma(1+\gamma)s_{jj}}\right]\nonumber\\
&\times\frac{\sum_{i=1}^m\exp(\gamma\bm\alpha^\top \bm{\bar f}(x_i))(f_j(x_i)-a_j)}{b_j-a_j}\exp\left[-\frac12\gamma(\gamma+1)s_{jj}\left\{\delta-\frac{(a_j-\bar f_j)\gamma-\gamma(\gamma+1)\bm s_j^\top \bm \alpha}{\gamma(\gamma+1)s_{jj}}\right\}^2\right]\nonumber\\
+&\exp\left[-\frac12\gamma\{2\bm\alpha^\top\bm{\bar f}+(1+\gamma)\alpha^\top S\alpha\}+\frac{\{a_j\gamma-\bar f_j\gamma-\gamma(1+\gamma)\bm s_j^\top \bm \alpha\}^2}{2\gamma(1+\gamma)s_{jj}}\right]\nonumber\\
 &\times\frac{\sum_{i=1}^m\exp(\gamma\bm\alpha^\top \bm{\bar f}(x_i))(b_j-f_j(x_i))}{b_j-a_j}\exp\left[-\frac12\gamma(\gamma+1)s_{jj}\left\{\delta-\frac{(b_j-\bar f_j)\gamma-\gamma(\gamma+1)\bm s_j^\top \bm \alpha}{\gamma(\gamma+1)s_{jj}}\right\}^2\right]\nonumber\\
 &-L_\gamma^{(2)}(\bm\alpha)\nonumber,
 \end{align}
 and $s_{jj}$ is the $j$th diagonal element of $S$ and $\bm s_j$ is the $j$th column vector of $S$.
 The inequality (\ref{ineq}) comes from $\exp(\gamma\delta f_j(x_i))\leq\{\exp(\gamma\delta b_j)-\exp(\gamma\delta a_j)\}/(b_j-a_j)f_j(x_i)+\{b_j\exp(\gamma a_j\delta)-a_j\exp(\gamma b_j\delta)\}/(b_j-a_j)$ for $a_j\leq f_j(x_i)\leq b_j$. Note that the calculation of $D_j(\delta,\bm\alpha)$ is efficient because it has order of $m$ time complexity rather than locations of the entire region $n$. This technique is also employed in the estimating algorithm of Maxent \citep{Dudik2004}. The soft thresholding operator \citep{Goeman2010} is defined as
 \begin{eqnarray}
&&\frac{\partial}{\partial\delta}D_j^\tau(\delta,\bm\alpha)\\
&=&\left\{
\begin{array}{ll}\displaystyle
\frac{\partial}{\partial\delta}D_j(\delta,\bm\alpha)+\tau_j{\rm sign}(\alpha_j+\delta) & if\ \alpha_j+\delta\not=0 \\
\displaystyle
\frac{\partial}{\partial\delta}D_j(\delta,\bm\alpha)-\tau_j{\rm sign}\left(\frac{\partial}{\partial\delta}D_j(\delta,\bm\alpha)\right) & if\ \alpha_j+\delta=0\ and\ |\frac{\partial}{\partial\delta}D_j(\delta,\bm\alpha)|>\tau_j \\
0&otherwise
\end{array}
\right.
\end{eqnarray}
 Hence $\delta^*$ is calculated using one-step Newton-Raphson method as
 \begin{equation}
\delta^*=-\frac{\frac{\partial}{\partial\delta}D_j^\tau(0,\bm\alpha)}{\frac{\partial^2}{\partial\delta^2}D_j(0,\bm\alpha)}.
 \end{equation}
 Note that $\delta^*$ for GM-loss function is obtained by setting $S=0$ in the above procedure.
 
For $\gamma$-loss function, we consider
\begin{equation}
L_{\gamma}^\tau(\bm\alpha)=\log L_{\gamma}(\bm\alpha)+\frac{\tau\gamma}{\sqrt{m}}\sum_{j=1}^p\sigma_j|\alpha_j|.
\end{equation}
Then we have
\begin{eqnarray}
   L_{\gamma}^\tau(\bm\alpha')-L_{\gamma}^\tau(\bm\alpha)
    \leq
    D_{\gamma,j}(\delta,\bm\alpha)+\tau_j(|\alpha_j+\delta|-|\alpha_j|)
    \equiv D_{\gamma,j}^\tau(\delta,\bm\alpha),  
\end{eqnarray}
 where
 \begin{align}
 &D_{\gamma,j}(\delta,\bm\alpha)\\
=&\log\left[\frac{\exp(\gamma\delta b_j)-\exp(\gamma\delta a_j)}{b_j-a_j}\sum_{i=1}^m\exp(\gamma\bm\alpha^\top \bm f(x_i))f_j(x_i)\right.\nonumber\\
&\hspace{1cm}+\left.\frac{b_j\exp(\gamma\delta a_j)-a_j\exp(\gamma\delta b_j)}{b_j-a_j}\sum_{i=1}^m\exp(\gamma\bm\alpha^\top \bm f(x_i))\right]\nonumber\\
&-\frac{\gamma}{\gamma+1}\log\left[\frac{\exp\{(\gamma+1)\delta b_j\}-\exp\{(\gamma+1)\delta a_j\}}{b_j-a_j}\sum_{i=1}^m w_i\exp\{(\gamma+1)\bm\alpha^\top \bm f(x_i)\}f_j(x_i)\right.\nonumber\\
&\hspace{1cm}+\left.\frac{b_j\exp\{(\gamma+1)\delta a_j\}-a_j\exp\{(\gamma+1)\delta b_j\}}{b_j-a_j}\sum_{i=1}^m w_i\exp\{(\gamma+1)\bm\alpha^\top \bm f(x_i)\}\right]-L_{\gamma}(\bm\alpha).\nonumber
 \end{align}
 The soft thresholding operator for $\gamma$-loss function is defined as
 \begin{eqnarray}
&&\frac{\partial}{\partial\delta}D_{\gamma,j}^\tau(\delta,\bm\alpha)\\
&=&\left\{
\begin{array}{ll}\displaystyle
\frac{\partial}{\partial\delta}D_{\gamma,j}(\delta,\bm\alpha)+\tau_j{\rm sign}(\alpha_j+\delta) & if\ \alpha_j+\delta\not=0 \\
\displaystyle
\frac{\partial}{\partial\delta}D_{\gamma,j}(\delta,\bm\alpha)-\tau_j{\rm sign}\left(\frac{\partial}{\partial\delta}D_{\gamma,j}(\delta,\bm\alpha)\right) & if\ \alpha_j+\delta=0\ and\ |\frac{\partial}{\partial\delta}D_{\gamma,j}(\delta,\bm\alpha)|>\tau_j \\
0&otherwise
\end{array}
\right.
\end{eqnarray}
 Hence $\delta^*$ for $\gamma$-loss function is calculated using one-step Newton-Raphson method as
 \begin{equation}
\delta^*=-\frac{\frac{\partial}{\partial\delta}D_{\gamma,j}^\tau(0,\bm\alpha)}{\frac{\partial^2}{\partial\delta^2}D_{\gamma,j}(0,\bm\alpha)}.
 \end{equation}

 For Fisher, we consider
\begin{equation}
L_{0,\tau}^{(2)}(\bm\alpha)=L_{0}^{(2)}(\bm\alpha)+\frac{\tau\gamma}{\sqrt{m}}\sum_{j=1}^p\sigma_j|\alpha_j|.
\end{equation}
Then we have
\begin{eqnarray}
L_{0,\tau}^{(2)}(\bm\alpha')-L_{0,\tau}^{(2)}(\bm\alpha)
\leq
    D_{0,j}(\delta,\bm\alpha)+\tau_j(|\alpha_j+\delta|-|\alpha_j|)
    \equiv D_{0,j}^\tau(\delta,\bm\alpha),  
\end{eqnarray}
 where
 \begin{align}
 &D_{0,j}(\delta,\bm\alpha)\\
 =&\frac12s_{jj}+\left(\delta+\frac{\bar f_j-\bar f_{m,j}+\bm s_j^\top\bm\alpha}{s_{jj}}\right)^2-\frac{(\bar f_j-\bar f_{m,j}+\bm s_j^\top\bm\alpha)^2}{s_{jj}}+\bm\alpha^\top(\bar{\bm f}-\bar{\bm f}_m)+\frac12\bm\alpha^\top S\bm\alpha-L_{0,\tau}^{(2)}(\bm\alpha)
 \end{align}

 The soft thresholding operator for Fisher is defined as
 \begin{eqnarray}
&&\frac{\partial}{\partial\delta}D_{0,j}^\tau(\delta,\bm\alpha)\\
&=&\left\{
\begin{array}{ll}\displaystyle
\frac{\partial}{\partial\delta}D_{0,j}(\delta,\bm\alpha)+\tau_j{\rm sign}(\alpha_j+\delta) & if\ \alpha_j+\delta\not=0 \\
\displaystyle
\frac{\partial}{\partial\delta}D_{0,j}(\delta,\bm\alpha)-\tau_j{\rm sign}\left(\frac{\partial}{\partial\delta}D_{0,j}(\delta,\bm\alpha)\right) & if\ \alpha_j+\delta=0\ and\ |\frac{\partial}{\partial\delta}D_{0,j}(\delta,\bm\alpha)|>\tau_j \\
0&otherwise
\end{array}
\right.
\end{eqnarray}
 Hence $\delta^*$ for Fisher is calculated using one-step method as
 \begin{equation}
\delta^*=-\frac{\frac{\partial}{\partial\delta}D_{0,j}^\tau(0,\bm\alpha)}{\frac{\partial^2}{\partial\delta^2}D_{0,j}(0,\bm\alpha)}.
 \end{equation}

\section{Additional results of simulation studies}\label{appendG}

\begin{figure}[H]
 \begin{minipage}{0.5\hsize}
  \begin{center}
    \includegraphics[width=8cm,height=8cm]{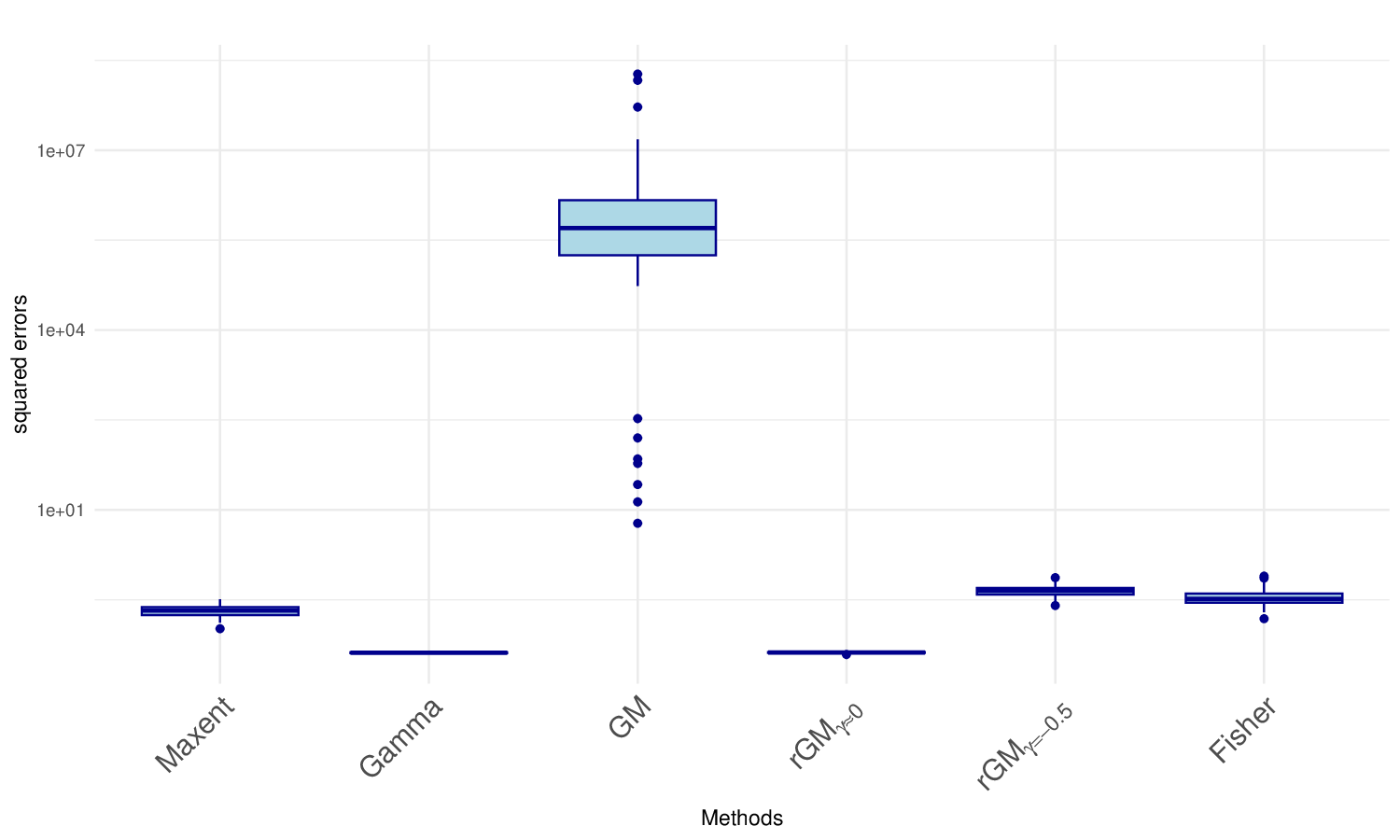}\\
(a) squared errors
  \end{center}
 \end{minipage}
 \begin{minipage}{0.5\hsize}
  \begin{center}
  \includegraphics[width=8cm,height=8cm]{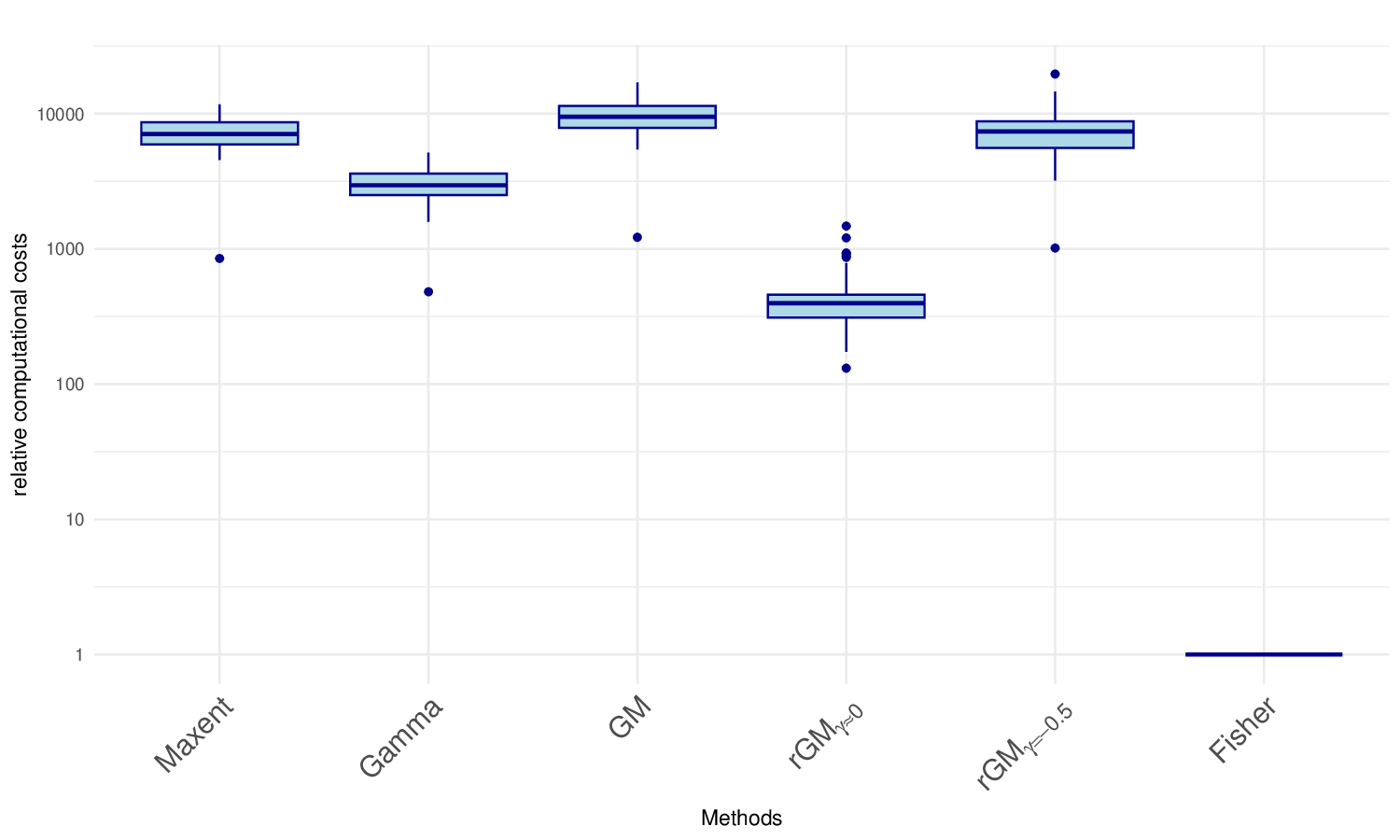}\\
(b) relative computational costs
  \end{center}
 \end{minipage}
 \begin{minipage}{0.5\hsize}

 \end{minipage}
\begin{center}
correlation setting $\rho=0.5$
  \end{center}
 \begin{minipage}{0.5\hsize}
  \begin{center}
  \includegraphics[width=8cm,height=8cm]{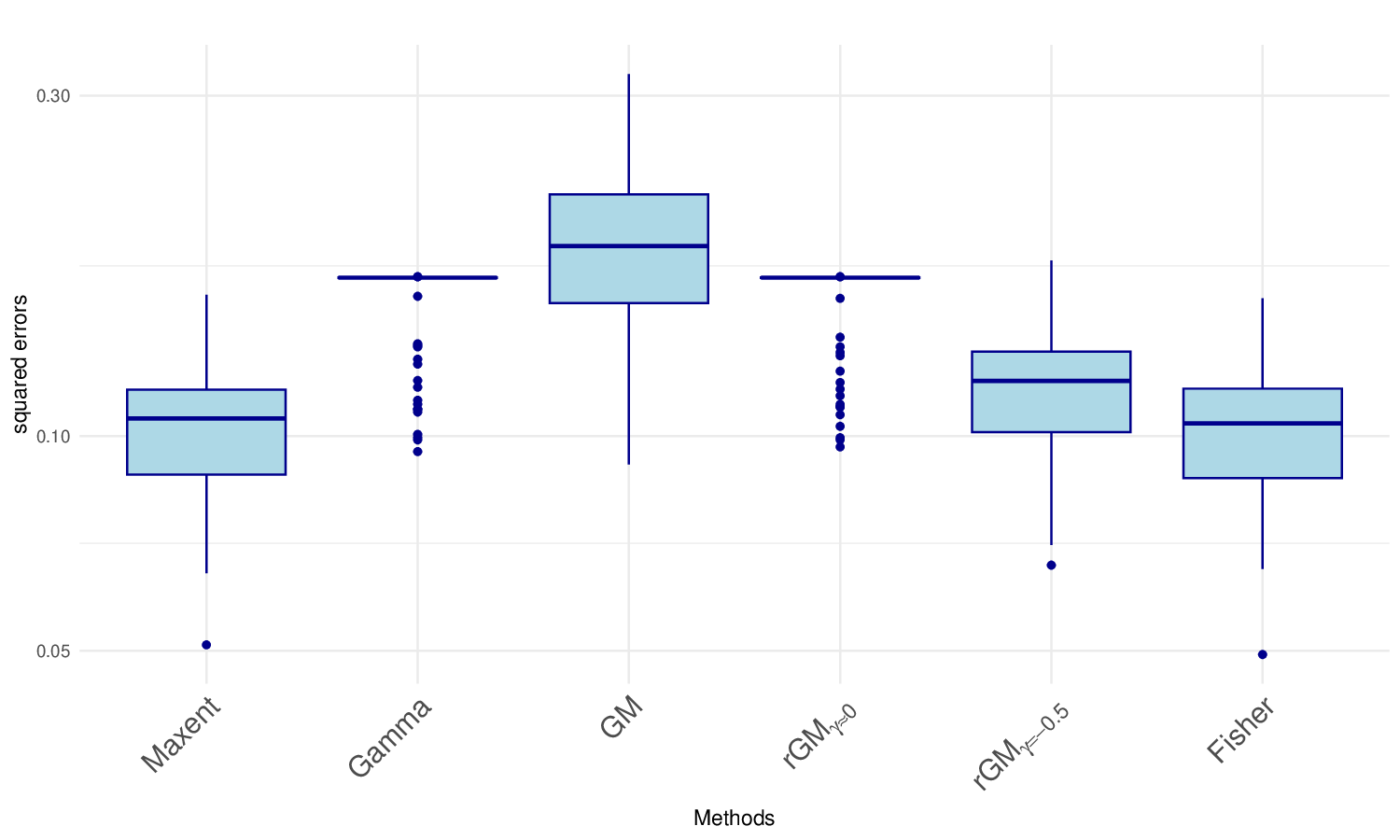}\\
(c) squared errors
  \end{center}
 \end{minipage}
  \begin{minipage}{0.5\hsize}
  \begin{center}
   \includegraphics[width=8cm,height=8cm]{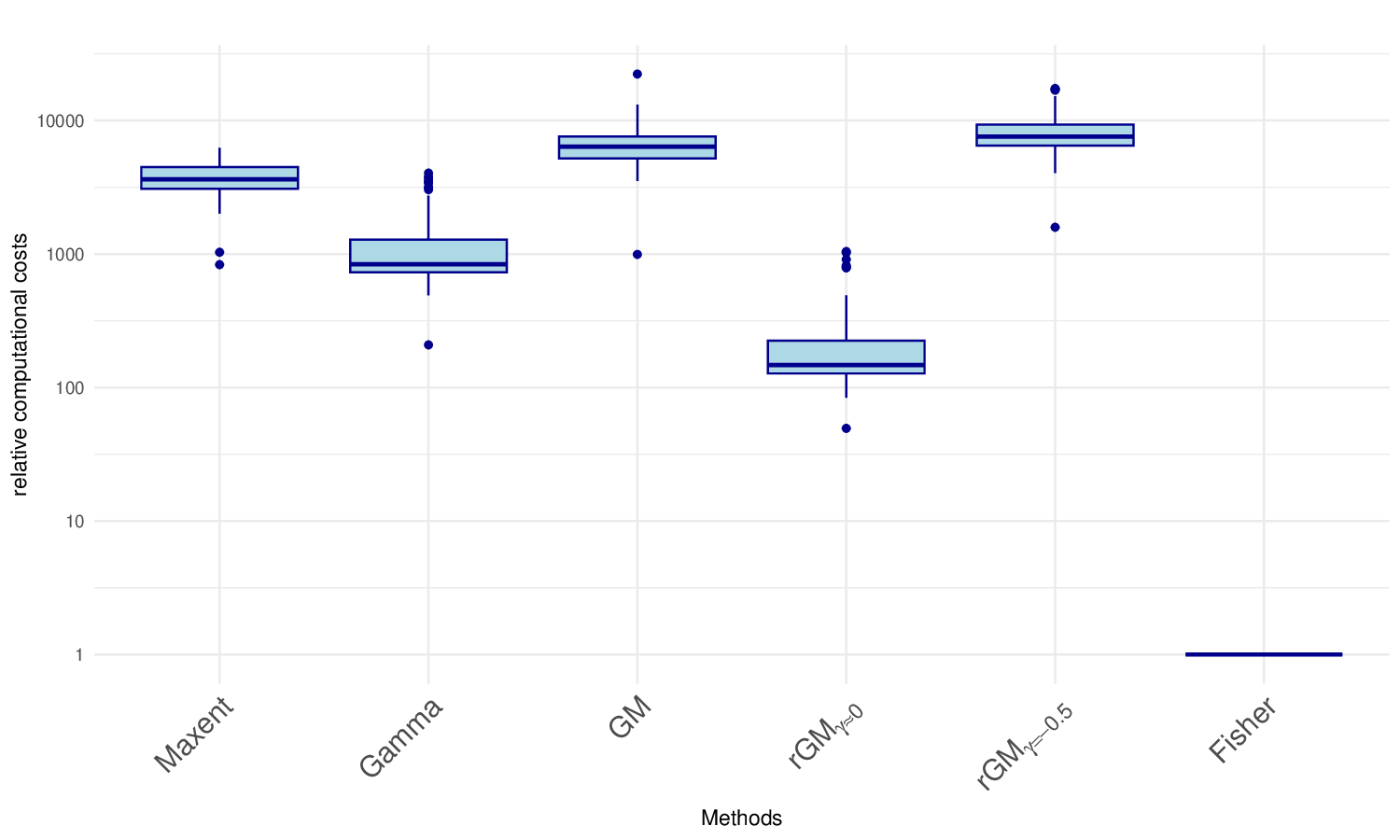}\\
(d) relative computational costs
  \end{center}
 
 \end{minipage}
 \begin{center}
correlation setting $\rho=0$
  \end{center}

\caption{Comparison of performances in Gaussian case with $p=50$, $m=500$ and $n=10000$}\label{fig_box05_g}
\end{figure}

\begin{figure}[H]
 \begin{minipage}{0.5\hsize}
  \begin{center}
   \includegraphics[width=8cm,height=9cm]{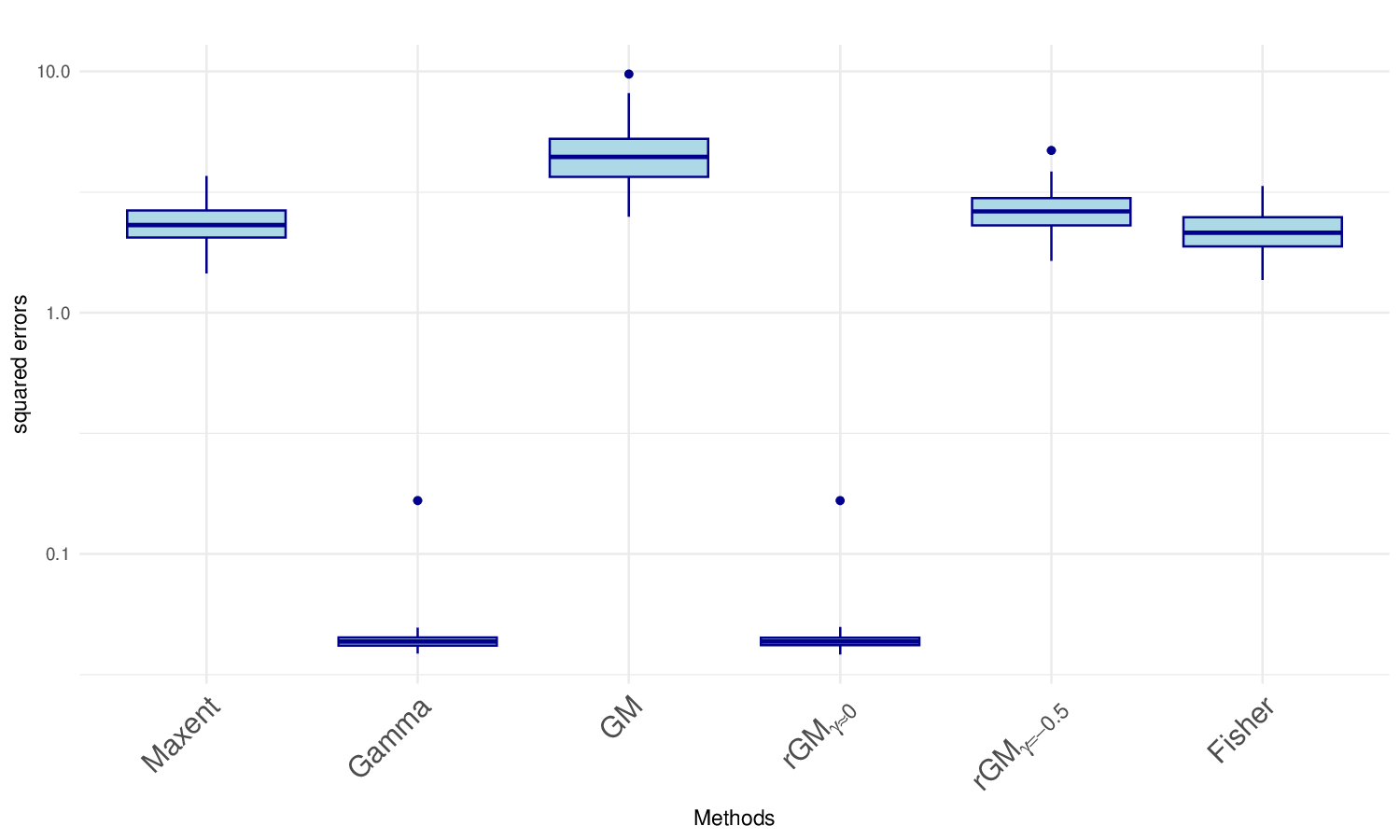}\\
(a) squared errors
  \end{center}
 \end{minipage}
 \begin{minipage}{0.5\hsize}
  \begin{center}
     \includegraphics[width=8cm,height=9cm]{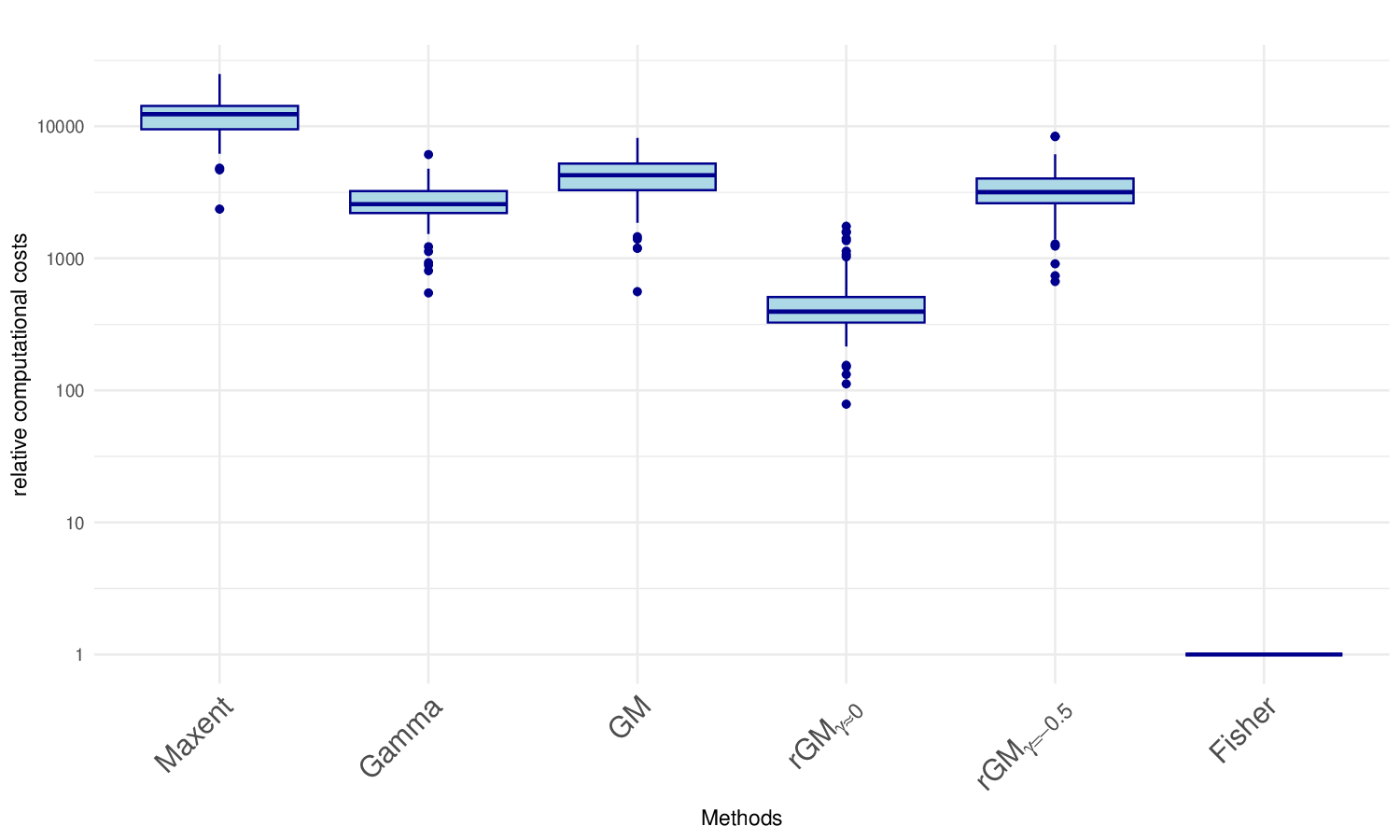}\\
(b) relative computational costs
  \end{center}
 \end{minipage}
 \begin{center}
correlation setting $\rho=0.5$
  \end{center}

 \begin{minipage}{0.5\hsize}
  \begin{center}
   \includegraphics[width=8cm,height=9cm]{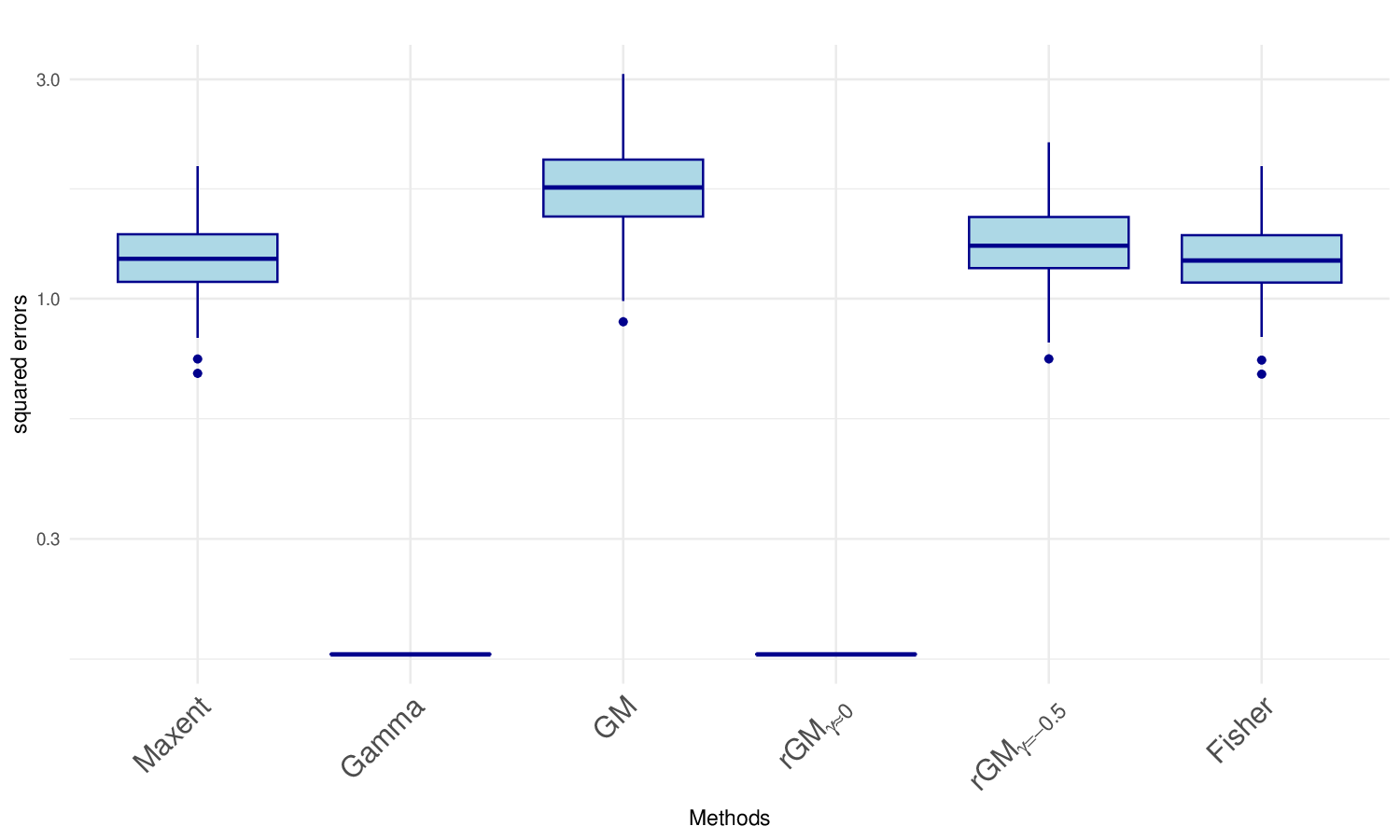}\\
(c) squared errors
  \end{center}
 \end{minipage}
  \begin{minipage}{0.5\hsize}
  \begin{center}
   \includegraphics[width=8cm,height=9cm]{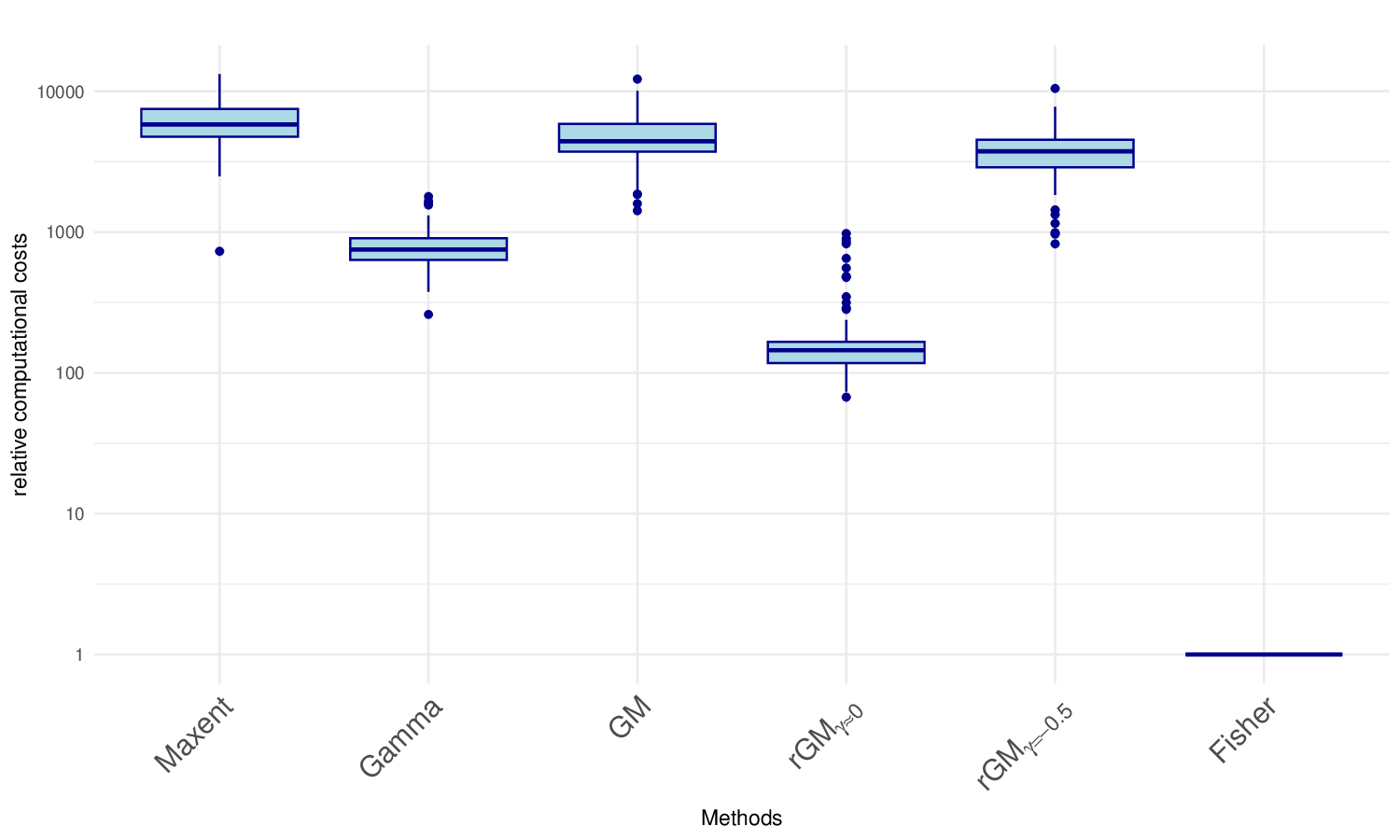}\\
(d) relative computational costs
  \end{center}
 \end{minipage}
 \begin{center}
correlation setting $\rho=0$
  \end{center}

\caption{Comparison of performances in Uniform case with $p=50$, $m=500$ and $n=10000$}\label{fig_box05_u}
\end{figure}

\section{Analysis of NCEAS data}\label{appendH}
\vspace{-0.5cm}

\begin{figure}[H]

  \begin{center}
   \includegraphics[width=15cm,height=12cm]{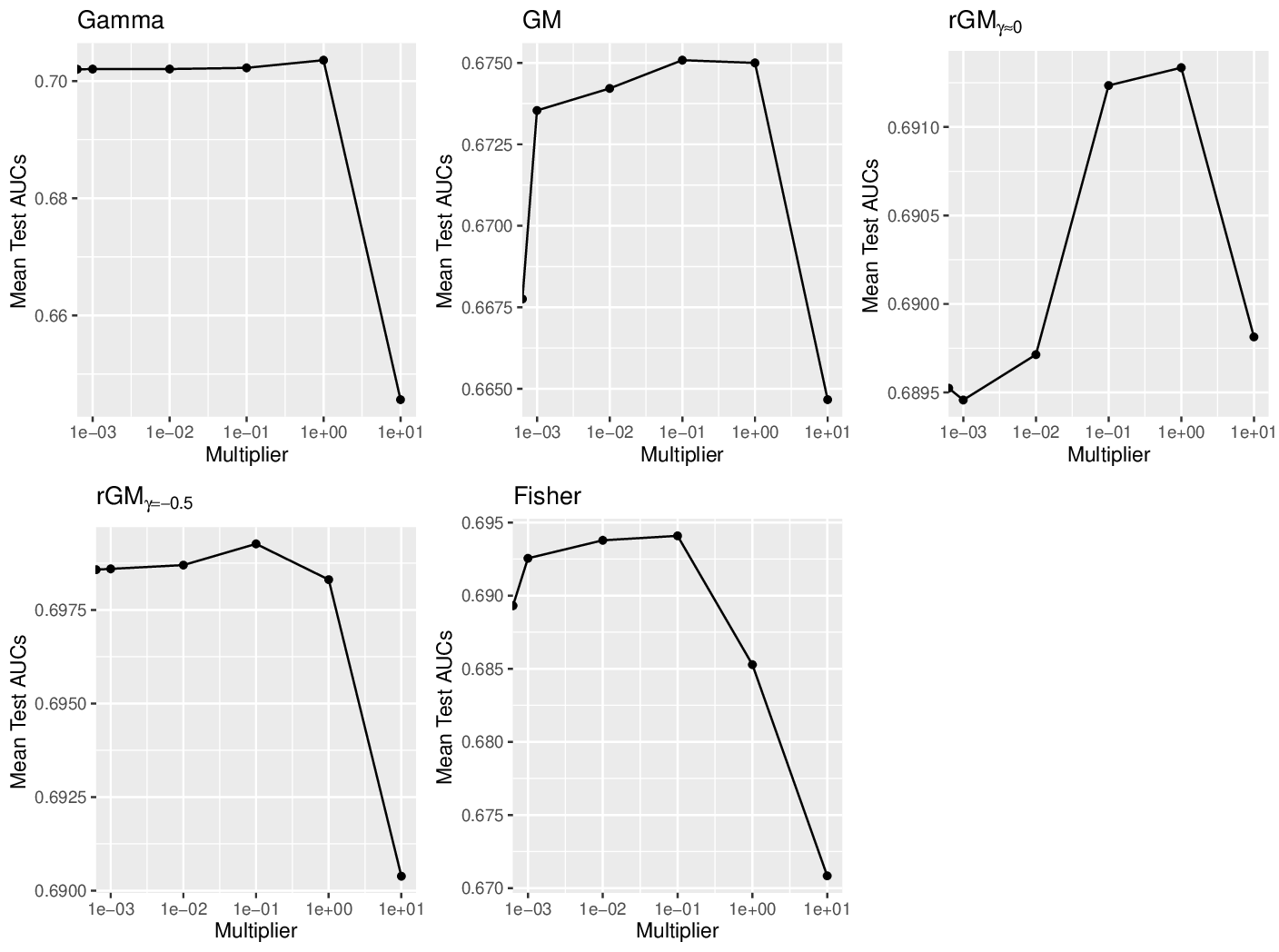}
  \end{center}
  \caption{Plots of mean test AUCs against multipliers of the optimal penalty term $\tau$ of linear features of Maxent.}\label{fig_multi}
\end{figure}

\newpage

\begin{figure}[H]
 \begin{minipage}{0.5\hsize}
  \begin{center}
   \includegraphics[width=8cm,height=6cm]{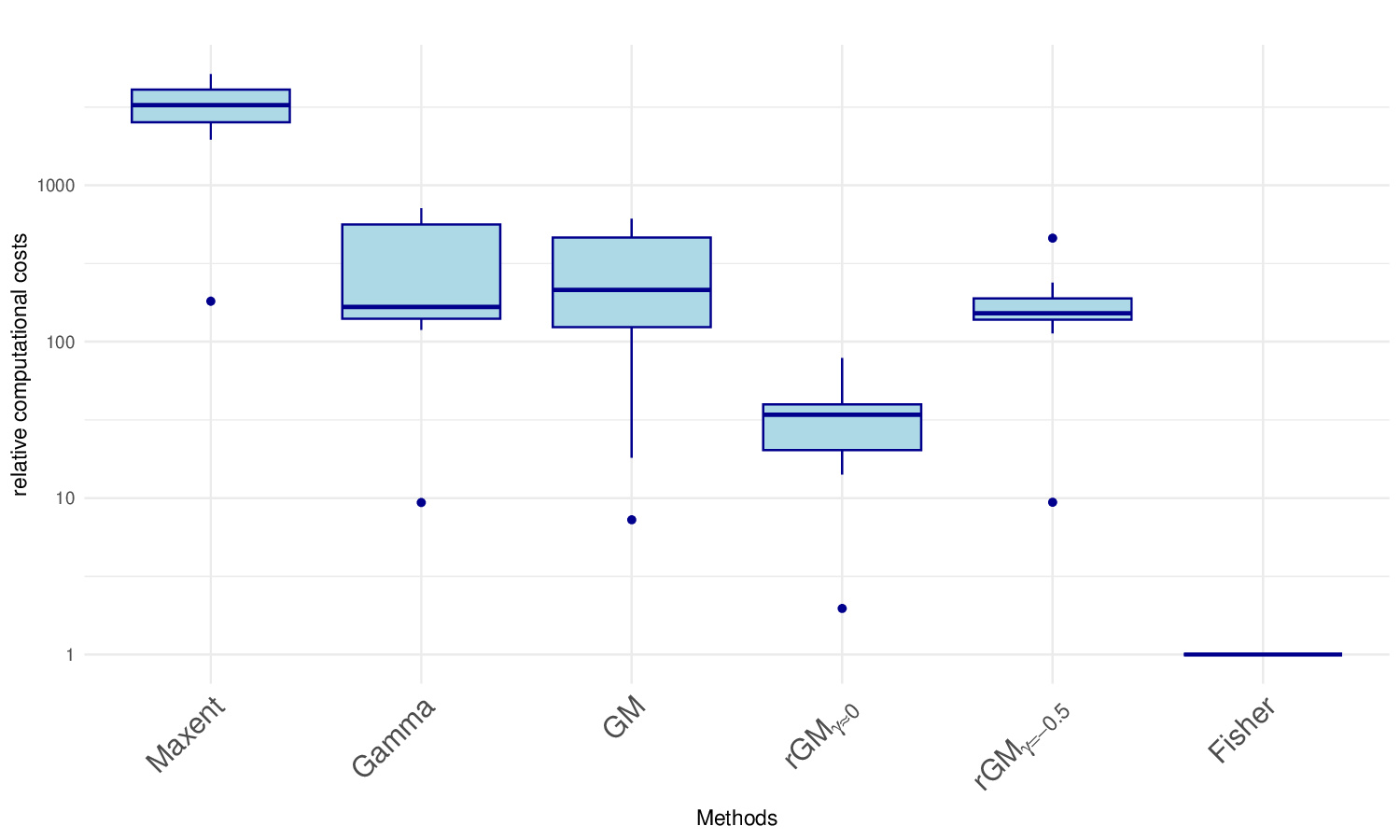}\\
(a) AWT ($p=8$, $9\leq m\leq 265$)
  \end{center}
 \end{minipage}
 \begin{minipage}{0.5\hsize}
  \begin{center}
  \includegraphics[width=8cm,height=6cm]{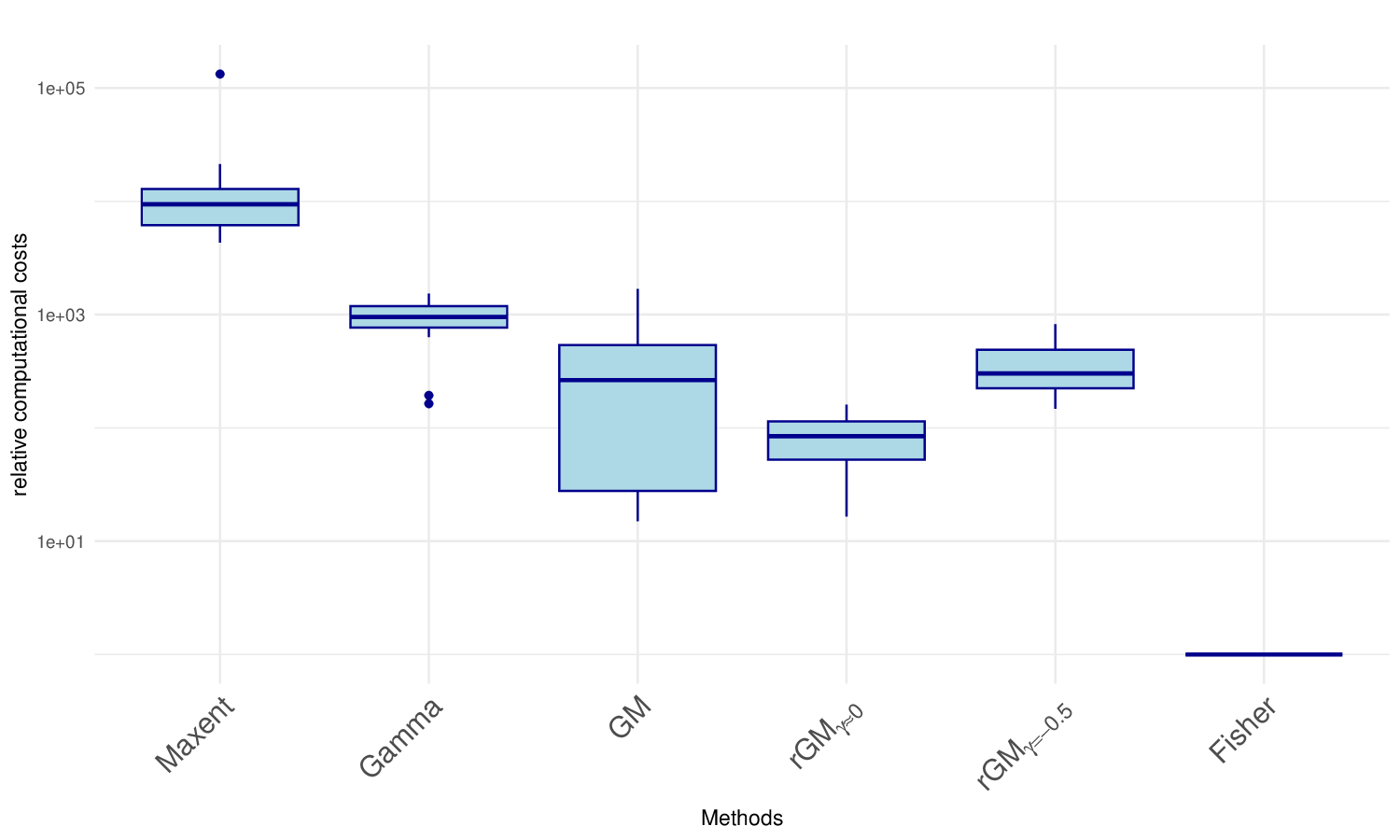}\\
(b)CAN ($p=12$, $16\leq m\leq 740$)
  \end{center}
 \end{minipage}
 \begin{minipage}{0.5\hsize}

 \end{minipage}

 \begin{minipage}{0.5\hsize}
  \begin{center}
   \includegraphics[width=8cm,height=6cm]{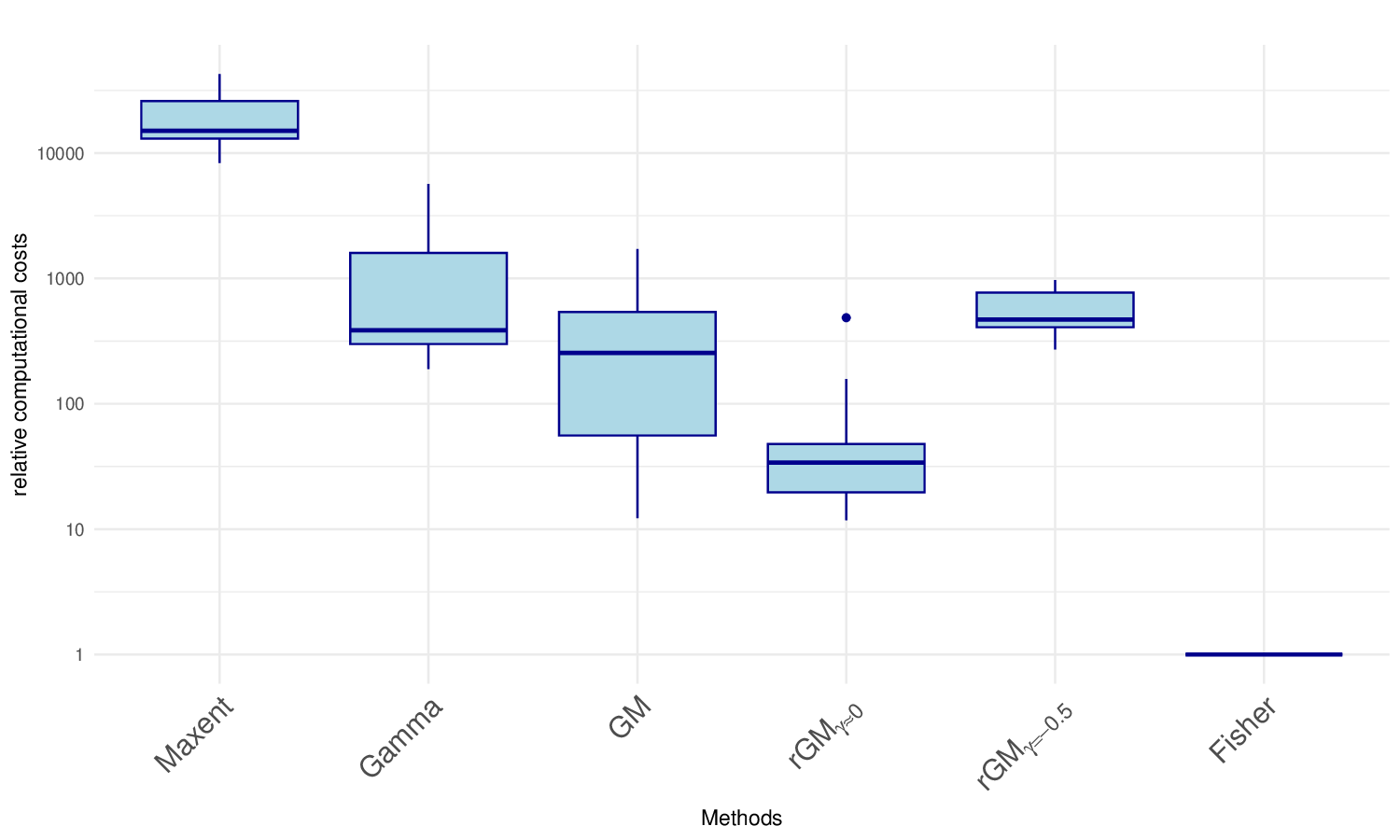}\\
(c) NSW ($p=20$, $2\leq m \leq 426$)
  \end{center}
 \end{minipage}
  \begin{minipage}{0.5\hsize}
  \begin{center}
   \includegraphics[width=8cm,height=6cm]{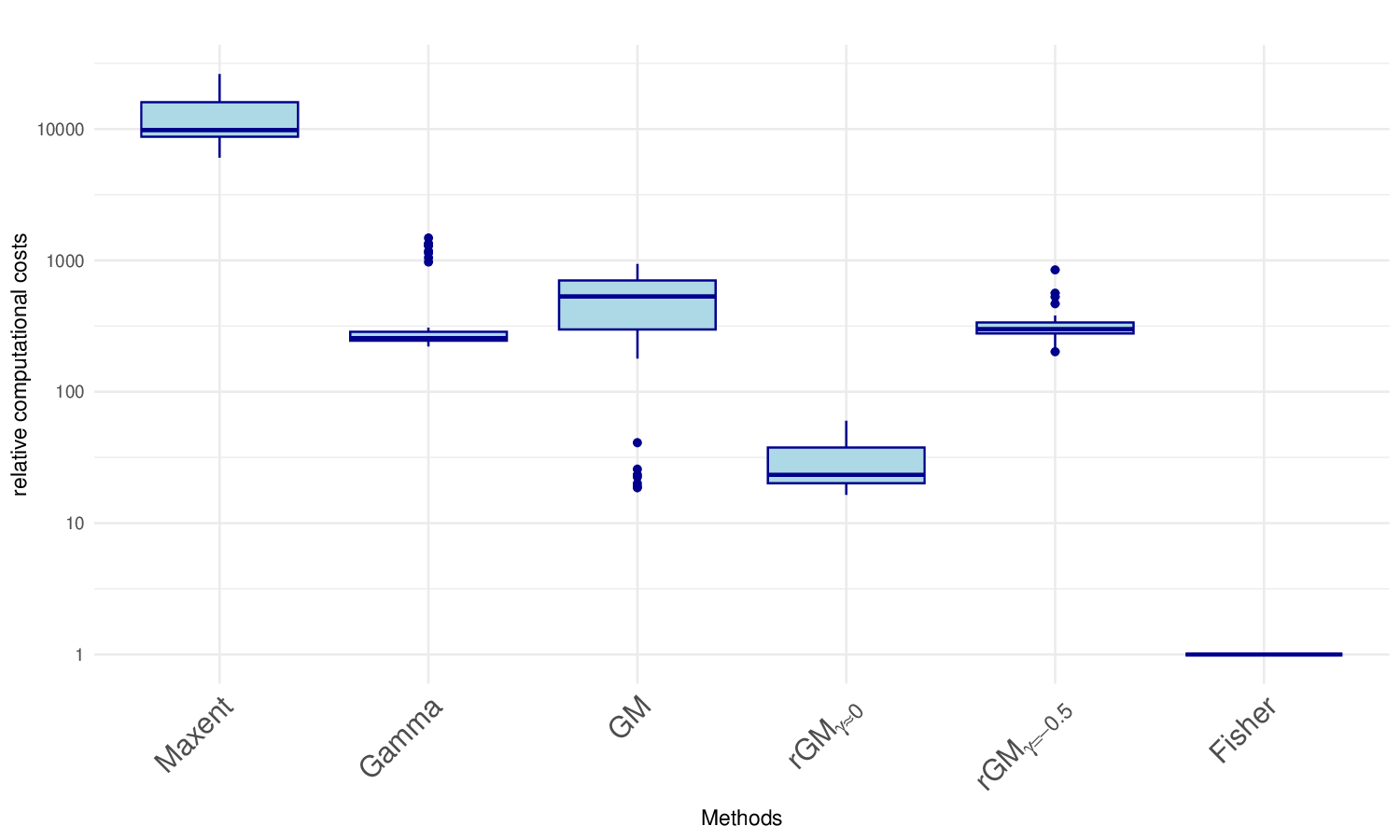}\\
(d) NZ ($p=15$, $18\leq m \leq 211$)
  \end{center}
 \end{minipage}

\begin{minipage}{0.5\hsize}
  \begin{center}
   \includegraphics[width=8cm,height=6cm]{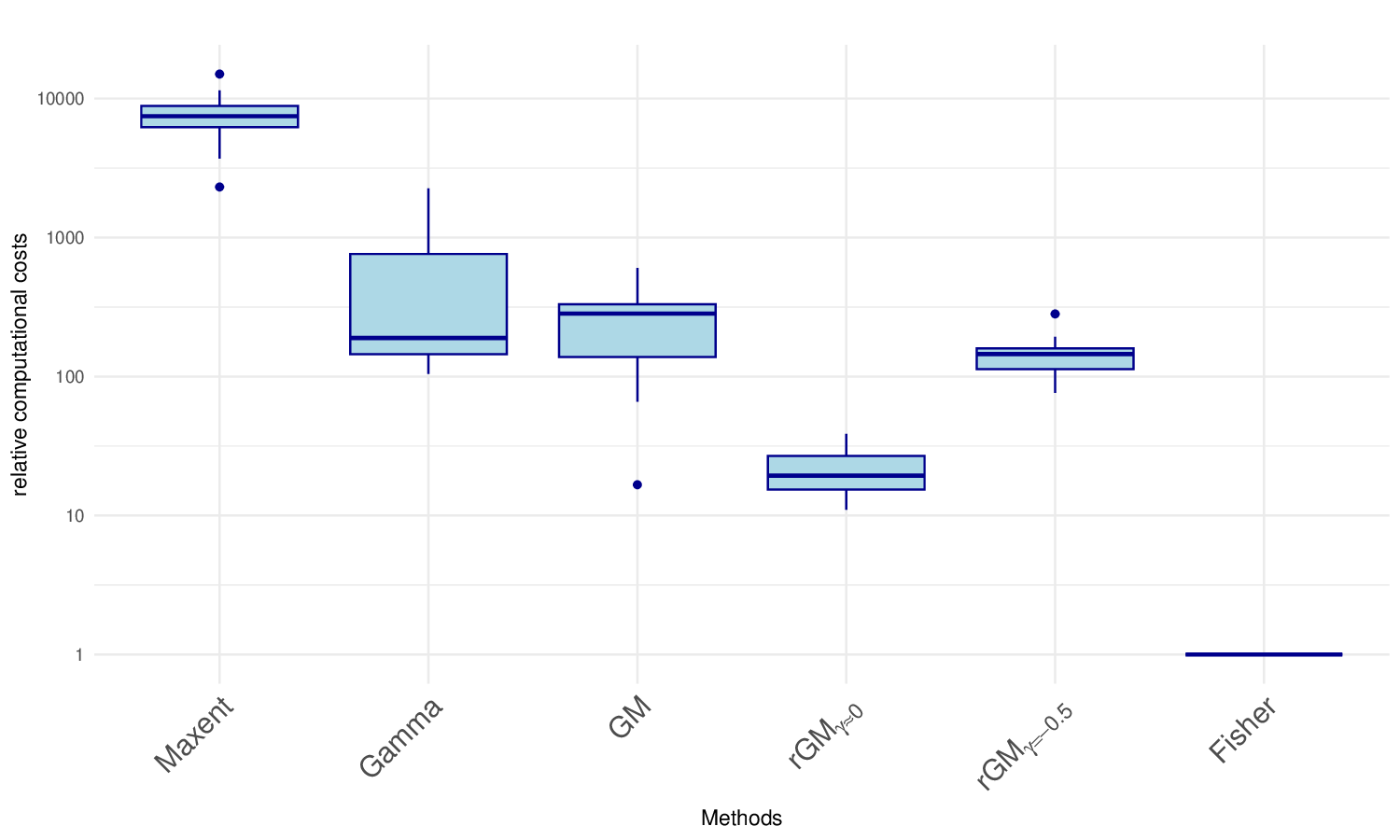}\\
(e) SA ($p=8$, $17\leq m \leq 216$)
  \end{center}
 \end{minipage}
  \begin{minipage}{0.5\hsize}
  \begin{center}
   \includegraphics[width=8cm,height=6cm]{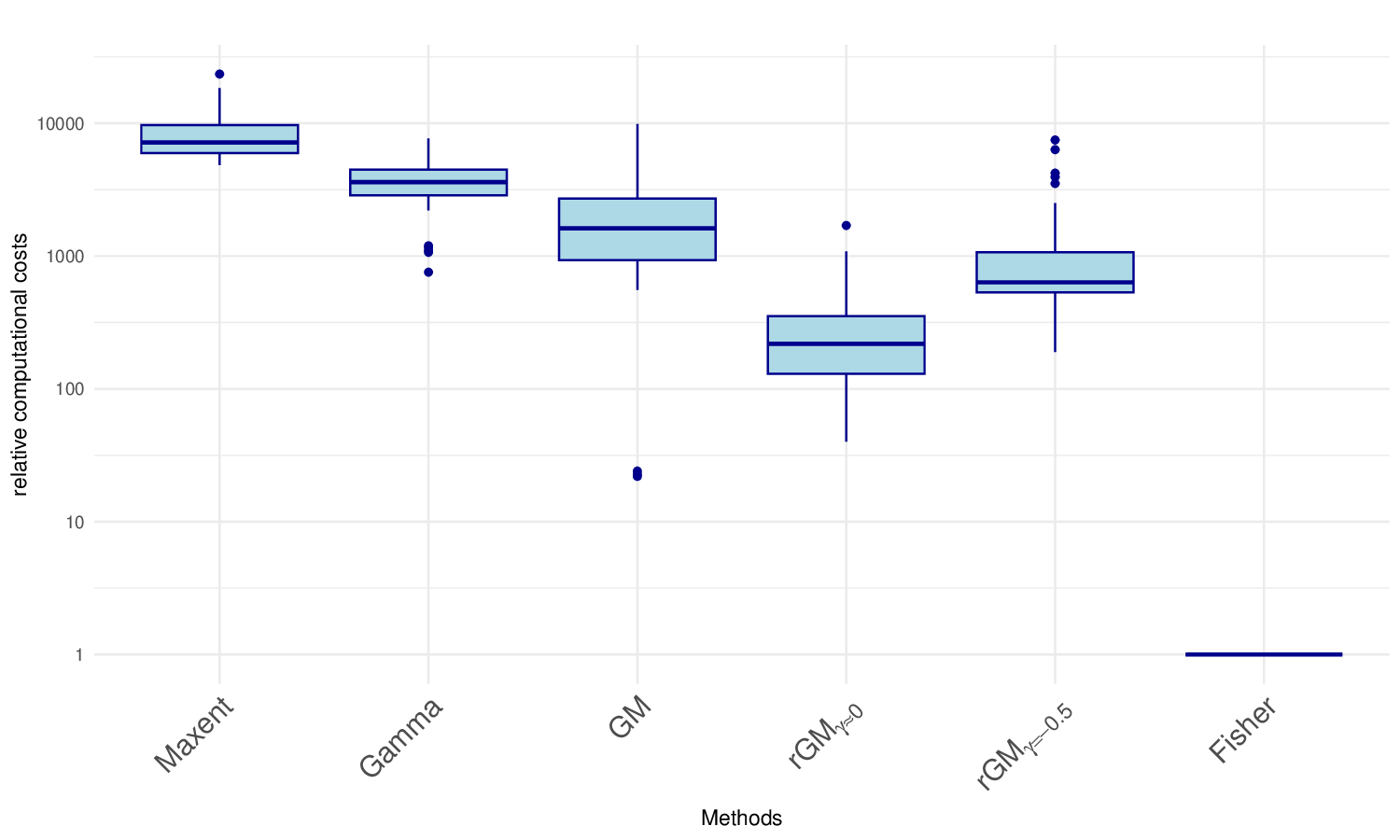}\\
(f) SWI ($p=14$, $36\leq m \leq 5822$)
  \end{center}
 \end{minipage}

\caption{Comparison of performances in terms of relative computational costs based on NCEAS data with background size $n=10000$}\label{fig_elith_time}
\end{figure}

\section{Analysis of Japanese vascular plants data}\label{appendI}
\begin{spacing}{0.9}
\begin{table}[H]

\centering
\caption{37 environmental variables used in data analysis} \label{37vari}
\vspace{0.5cm}
\begin{tabular}{rll}
  \hline
 & name & contents \\ 
  \hline
1 & land.area & land area   \\ 
  2 & forest.area & forest area \\ 
  3 & water.area & inland waters area   \\ 
  4 & elevation & elevation \\ 
  5 & elevationSD & standard deviation of elevation \\ 
  6 & laplacian & laplacian \\ 
  7 & slope & slope \\ 
  8 & radiation & annual average of radiation \\ 
  9 & sunshine & annual amount of sunshine \\ 
  10 & snow & height of snow \\ 
  11 & orc & organic carbon \\ 
  12 & cec & cation exchange capacity \\ 
  13 & ph & pH of soil \\ 
  14 & bio\_1 & annual mean temperature \\ 
  15 & bio\_2 & mean diurnal range (mean of monthly (max temp - min temp)) \\ 
  16 & bio\_3 & isothermality (bio\_2/bio\_7)  \\ 
  17 & bio\_4 & temperature seasonality  \\ 
  18 & bio\_5 & max temperature of warmest month \\ 
  19 & bio\_6 & min temperature of coldest month \\ 
  20 & bio\_7 & temperature annual range (bio\_5-bio\_6) \\ 
  21 & bio\_8 & mean temperature of wettest quarter \\ 
  22 & bio\_9 & mean temperature of driest quarter \\ 
  23 & bio\_10 & mean temperature of warmest quarter \\ 
  24 & bio\_11 & mean temperature of coldest quarter \\ 
  25 & bio\_12 & annual precipitation \\ 
  26 & bio\_13 & precipitation of wettest month \\ 
  27 & bio\_14 & precipitation of eriest month \\ 
  28 & bio\_15 & precipitation seasonality (coefficient of variation) \\ 
  29 & bio\_16 & precipitation of wettest quarter \\ 
  30 & bio\_17 & precipitation of driest quarter \\ 
  31 & bio\_18 & precipitation of warmest quarter \\ 
  32 & bio\_19 & precipitation of coldest quarter \\ 
  33 & AET & actual evapotranspiration amount \\ 
  34 & PET & potential evapotranspiration amount \\ 
  35 & DEFICIT & PET-AET  \\ 
  36 & distance & distance to the sea coast \\ 
   \hline
\end{tabular}

\label{tab.vari}
\end{table}
\end{spacing}

\begin{table}[H]
\caption{Mean values of test AUC for Japanese vascular plants\label{Vascular_result79(AUCt)}} 
\begin{center}
\begin{tabular}{lllllllll}
\hline\hline
\multicolumn{1}{c}{$p$}&\multicolumn{1}{c}{$n$}&\multicolumn{1}{c}{$m$}&\multicolumn{1}{c}{Maxent}&\multicolumn{1}{c}{Gamma}&\multicolumn{1}{c}{GM}&\multicolumn{1}{c}{$\rm{rGM_{\gamma\approx 0}}$}&\multicolumn{1}{c}{$\rm{rGM_{\gamma=-0.5}}$}&\multicolumn{1}{c}{Fisher}\tabularnewline
\hline
36&4684&[4,17)&0.968&\bf{0.969}&0.812&0.962&0.964&0.955\tabularnewline
&&[17,37)&0.955&0.955&0.802&\bf{0.966}&\bf{0.966}&0.96\tabularnewline
&&[37,58)&\bf{0.963}&0.96&0.81&0.955&0.957&0.952\tabularnewline
&&[58,101)&0.949&0.95&0.854&0.944&\bf{0.951}&0.941\tabularnewline
&&[101,170)&0.905&0.905&0.836&0.895&\bf{0.907}&0.895\tabularnewline
&&[170,306)&\bf{0.867}&0.864&0.794&0.852&\bf{0.867}&0.848\tabularnewline
&&[306,422)&\bf{0.838}&0.836&0.785&0.828&0.836&0.824\tabularnewline
&&[422,626)&\bf{0.771}&0.768&0.723&0.762&0.768&0.759\tabularnewline
&&[626,890)&0.729&0.727&0.701&0.727&\bf{0.732}&0.724\tabularnewline
&&[890,3077]&0.708&0.706&0.689&0.705&\bf{0.709}&0.7\tabularnewline
\hline
\end{tabular}\end{center}
\end{table}

\begin{table}[H]
\caption{Mean values of training AUC for Japanese vascular plants\label{Vascular_result79(AUC)}} 
\begin{center}
\begin{tabular}{lllllllll}
\hline\hline
\multicolumn{1}{c}{$p$}&\multicolumn{1}{c}{$n$}&\multicolumn{1}{c}{$m$}&\multicolumn{1}{c}{Maxent}&\multicolumn{1}{c}{Gamma}&\multicolumn{1}{c}{GM}&\multicolumn{1}{c}{$\rm{rGM_{\gamma\approx 0}}$}&\multicolumn{1}{c}{$\rm{rGM_{\gamma=-0.5}}$}&\multicolumn{1}{c}{Fisher}\tabularnewline
\hline
36&4684&[4,17)&\bf{0.993}&0.99&0.873&0.989&0.989&0.984\tabularnewline
&&[17,37)&\bf{0.988}&0.983&0.868&0.978&0.98&0.973\tabularnewline
&&[37,58)&\bf{0.983}&0.979&0.88&0.972&0.977&0.968\tabularnewline
&&[58,101)&\bf{0.972}&0.97&0.9&0.955&0.969&0.952\tabularnewline
&&[101,170)&\bf{0.932}&0.929&0.863&0.916&\bf{0.932}&0.911\tabularnewline
&&[170,306)&0.9&0.897&0.828&0.89&\bf{0.904}&0.883\tabularnewline
&&[306,422)&0.867&0.864&0.812&0.862&\bf{0.871}&0.855\tabularnewline
&&[422,626)&0.817&0.813&0.776&0.823&\bf{0.824}&0.811\tabularnewline
&&[626,890)&0.803&0.798&0.765&\bf{0.811}&0.81&0.8\tabularnewline
&&[890,3077]&0.791&0.786&0.766&\bf{0.803}&0.798&0.791\tabularnewline
\hline
\end{tabular}\end{center}
\end{table}

\begin{table}[H]
\caption{Mean values of relative computational costs for Japanese vascular plants\label{Vascular_result79(time)}} 
\begin{center}
\begin{tabular}{lllllllll}
\hline\hline
\multicolumn{1}{c}{$p$}&\multicolumn{1}{c}{$n$}&\multicolumn{1}{c}{$m$}&\multicolumn{1}{c}{Maxent}&\multicolumn{1}{c}{Gamma}&\multicolumn{1}{c}{GM}&\multicolumn{1}{c}{$\rm{rGM_{\gamma\approx 0}}$}&\multicolumn{1}{c}{$\rm{rGM_{\gamma=-0.5}}$}&\multicolumn{1}{c}{Fisher}\tabularnewline
\hline
36&4684&[4,17)&296.9&86&\bf{0.8}&3.7&1.8&1\tabularnewline
&&[17,37)&253.4&123.9&\bf{0.8}&12.3&5.3&1\tabularnewline
&&[37,58)&217.4&128.4&\bf{0.8}&16.7&8.9&1\tabularnewline
&&[58,101)&188.8&122.6&1.1&19.1&16.1&\bf{1}\tabularnewline
&&[101,170)&165&132.2&3.6&21.3&21.6&\bf{1}\tabularnewline
&&[170,306)&153.4&121.8&10.3&27.9&27.8&\bf{1}\tabularnewline
&&[306,422)&155&122.5&21.9&36.5&38.2&\bf{1}\tabularnewline
&&[422,626)&167.7&130.2&41.2&46.1&48.7&\bf{1}\tabularnewline
&&[626,890)&181.7&141.5&64.5&66.4&75.2&\bf{1}\tabularnewline
&&[890,3077]&201.1&131.2&139.3&127.2&149.3&\bf{1}\tabularnewline
\hline
\end{tabular}\end{center}
\end{table}

\begin{table}[H]
\caption{Mean values of number of selected variables for Japanese vascular plants\label{Vascular_result79(n.vari)}} 
\begin{center}
\begin{tabular}{lllllllll}
\hline\hline
\multicolumn{1}{c}{$p$}&\multicolumn{1}{c}{$n$}&\multicolumn{1}{c}{$m$}&\multicolumn{1}{c}{Maxent}&\multicolumn{1}{c}{Gamma}&\multicolumn{1}{c}{GM}&\multicolumn{1}{c}{$\rm{rGM_{\gamma\approx 0}}$}&\multicolumn{1}{c}{$\rm{rGM_{\gamma=-0.5}}$}&\multicolumn{1}{c}{Fisher}\tabularnewline
\hline
36&4684&[4,17)&9.7&7.2&\bf{1.2}&13.4&6.4&5.9\tabularnewline
&&[17,37)&14.9&10.6&\bf{1.3}&21.9&11.8&10\tabularnewline
&&[37,58)&16.7&12.8&\bf{1.3}&25.8&14.6&13.2\tabularnewline
&&[58,101)&18.3&14.5&\bf{2.4}&28.1&21.1&15.5\tabularnewline
&&[101,170)&18.9&16.3&\bf{5.5}&29.3&25.7&17.2\tabularnewline
&&[170,306)&18.9&16.7&\bf{10.3}&29.4&25.5&17.6\tabularnewline
&&[306,422)&19.1&16.4&\bf{13.4}&29.3&25.4&17.4\tabularnewline
&&[422,626)&19.2&\bf{16.9}&17.7&29&24.8&17.8\tabularnewline
&&[626,890)&19.3&\bf{16.9}&18.8&29.3&24.5&18.5\tabularnewline
&&[890,3077]&19.4&\bf{16.5}&19.1&28.9&24.3&17.9\tabularnewline
\hline
\end{tabular}\end{center}
\end{table}

\begin{table}[H]
\caption{Mean values of Jeffreys divergence from Maxent for Japanese vascular plants\label{Vascular_result79(jd)}} 
\begin{center}
\begin{tabular}{lllllllll}
\hline\hline
\multicolumn{1}{c}{$p$}&\multicolumn{1}{c}{$n$}&\multicolumn{1}{c}{$m$}&\multicolumn{1}{c}{Maxent}&\multicolumn{1}{c}{Gamma}&\multicolumn{1}{c}{GM}&\multicolumn{1}{c}{$\rm{rGM_{\gamma\approx 0}}$}&\multicolumn{1}{c}{$\rm{rGM_{\gamma=-0.5}}$}&\multicolumn{1}{c}{Fisher}\tabularnewline
\hline
36&4684&[4,17)&*&\bf{0.317}&7.282&3.15&3.566&3.289\tabularnewline
&&[17,37)&*&\bf{0.271}&7.236&6.057&7.813&6.782\tabularnewline
&&[37,58)&*&\bf{0.109}&7.126&6.517&10.423&5.623\tabularnewline
&&[58,101)&*&\bf{0.038}&6.384&6.314&5.907&4.645\tabularnewline
&&[101,170)&*&\bf{0.011}&5.799&3.777&1.353&2.989\tabularnewline
&&[170,306)&*&\bf{0.008}&6.611&1.69&0.459&1.289\tabularnewline
&&[306,422)&*&\bf{0.007}&6.251&1.057&0.265&0.82\tabularnewline
&&[422,626)&*&\bf{0.006}&5.182&0.358&0.103&0.268\tabularnewline
&&[626,890)&*&\bf{0.005}&4.826&0.179&0.065&0.132\tabularnewline
&&[890,3077]&*&\bf{0.005}&3.521&0.101&0.033&0.082\tabularnewline
\hline
\end{tabular}\end{center}
\end{table}

\begin{figure}[H]
 \begin{minipage}{0.5\hsize}
  \begin{center}
   \includegraphics[width=8cm,height=10cm]{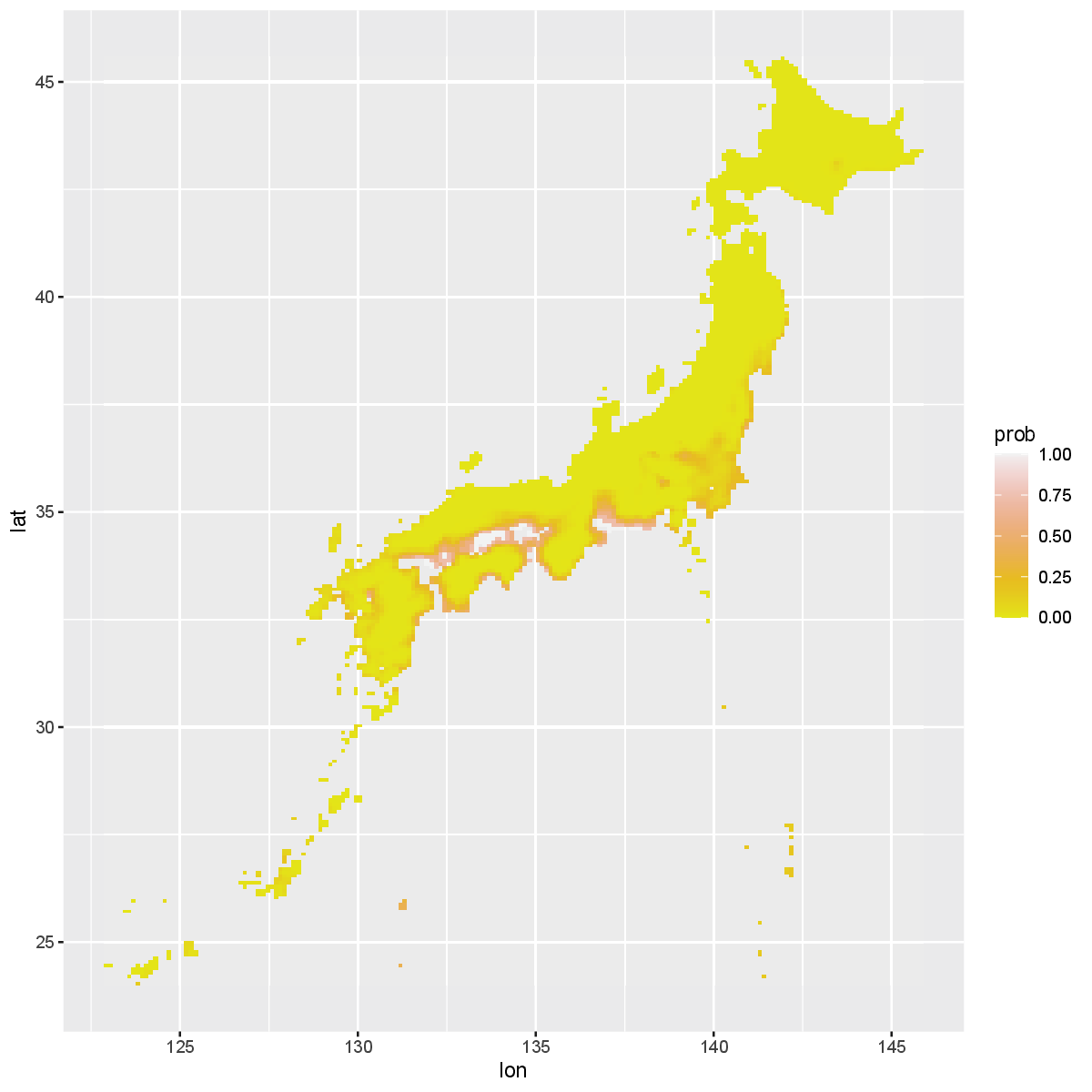}\\
(a) Maxent
  \end{center}
 \end{minipage}
 \begin{minipage}{0.5\hsize}
  \begin{center}
   \includegraphics[width=8cm,height=10cm]{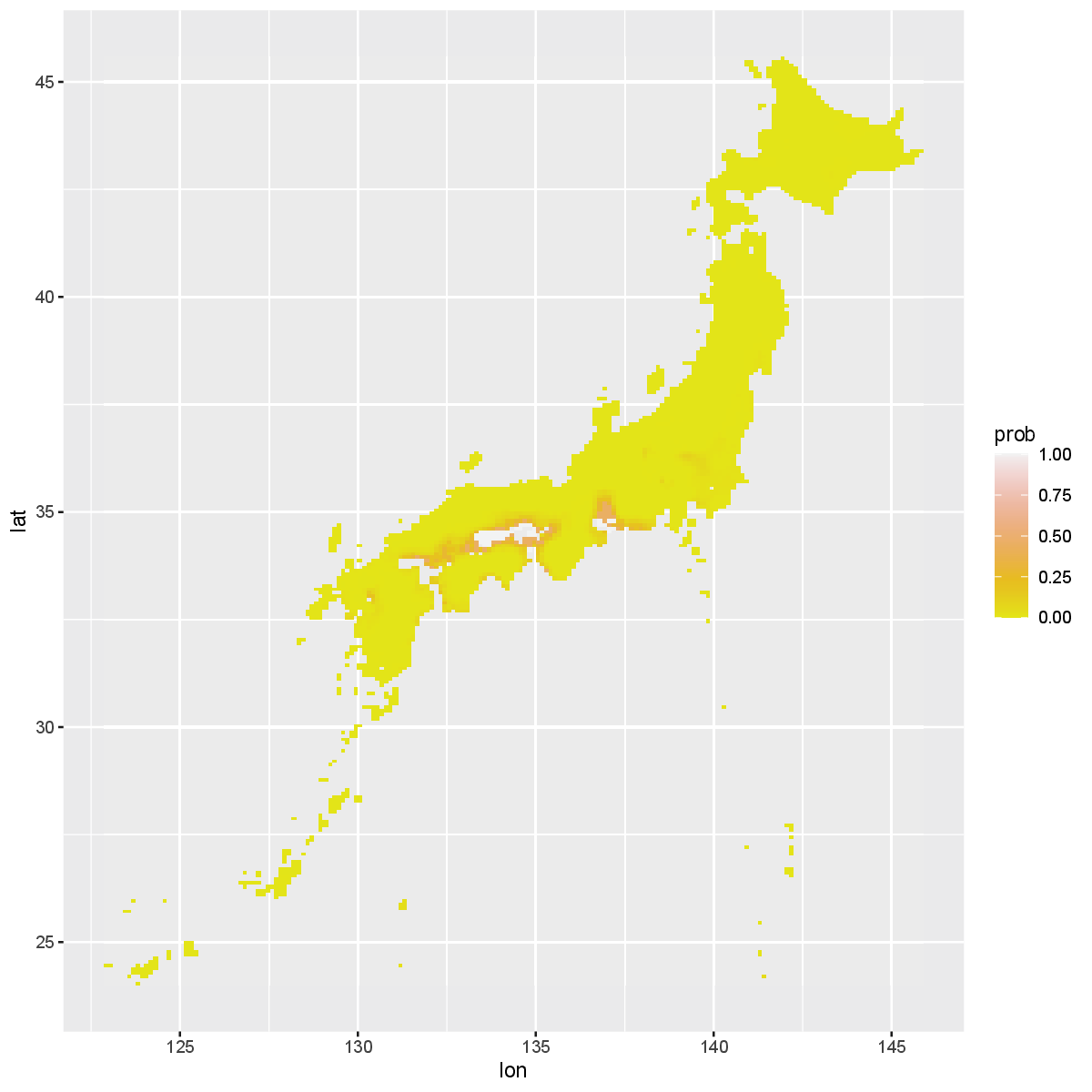}\\
(b) $\rm{rGM_{\gamma=-0.5}}$
  \end{center}
 \end{minipage}
 \begin{minipage}{0.5\hsize}

 \end{minipage}

 \begin{minipage}{0.5\hsize}
  \begin{center}
   \includegraphics[width=8cm,height=10cm]{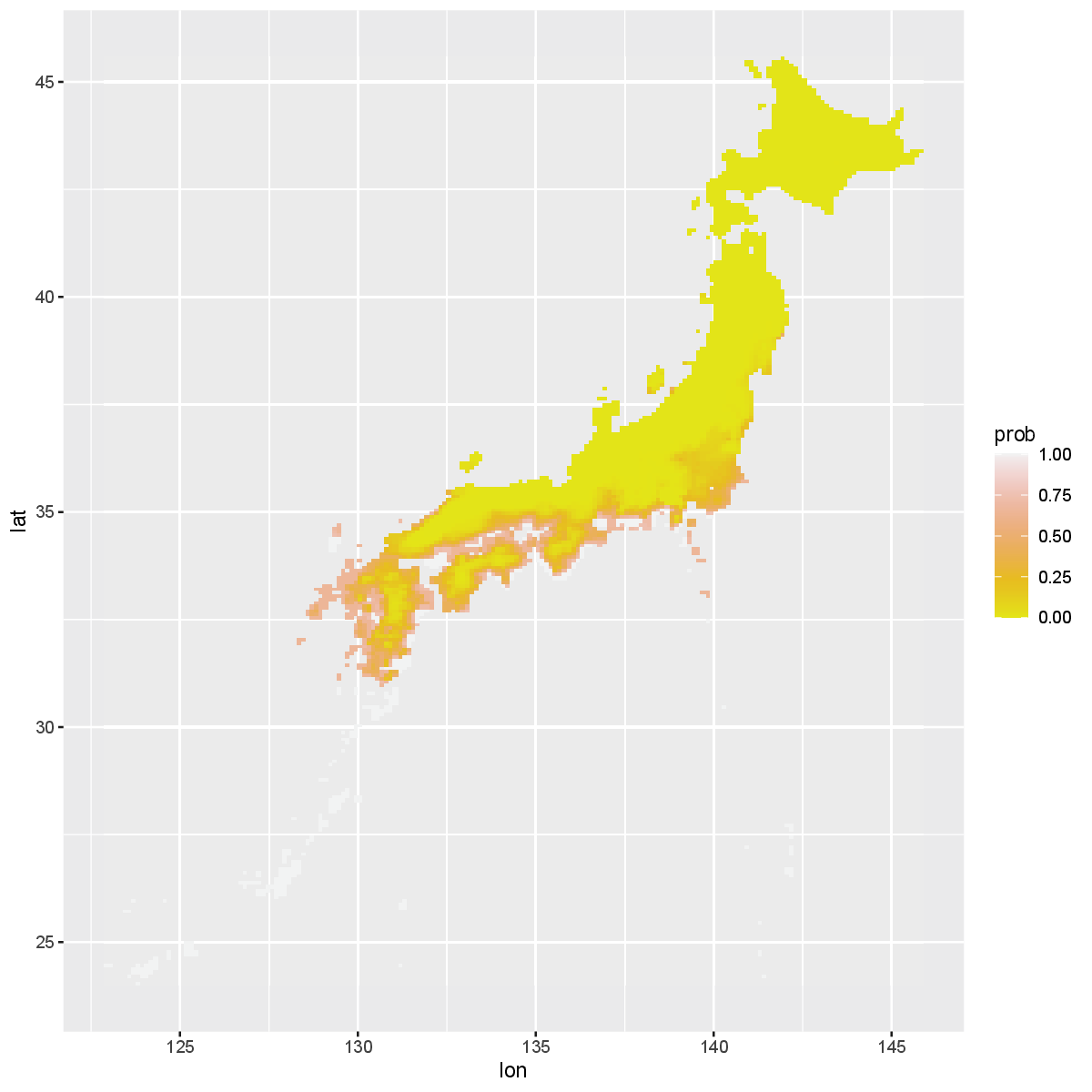}\\
(c) GM
  \end{center}
 \end{minipage}
  \begin{minipage}{0.5\hsize}
  \begin{center}
   \includegraphics[width=8cm,height=10cm]{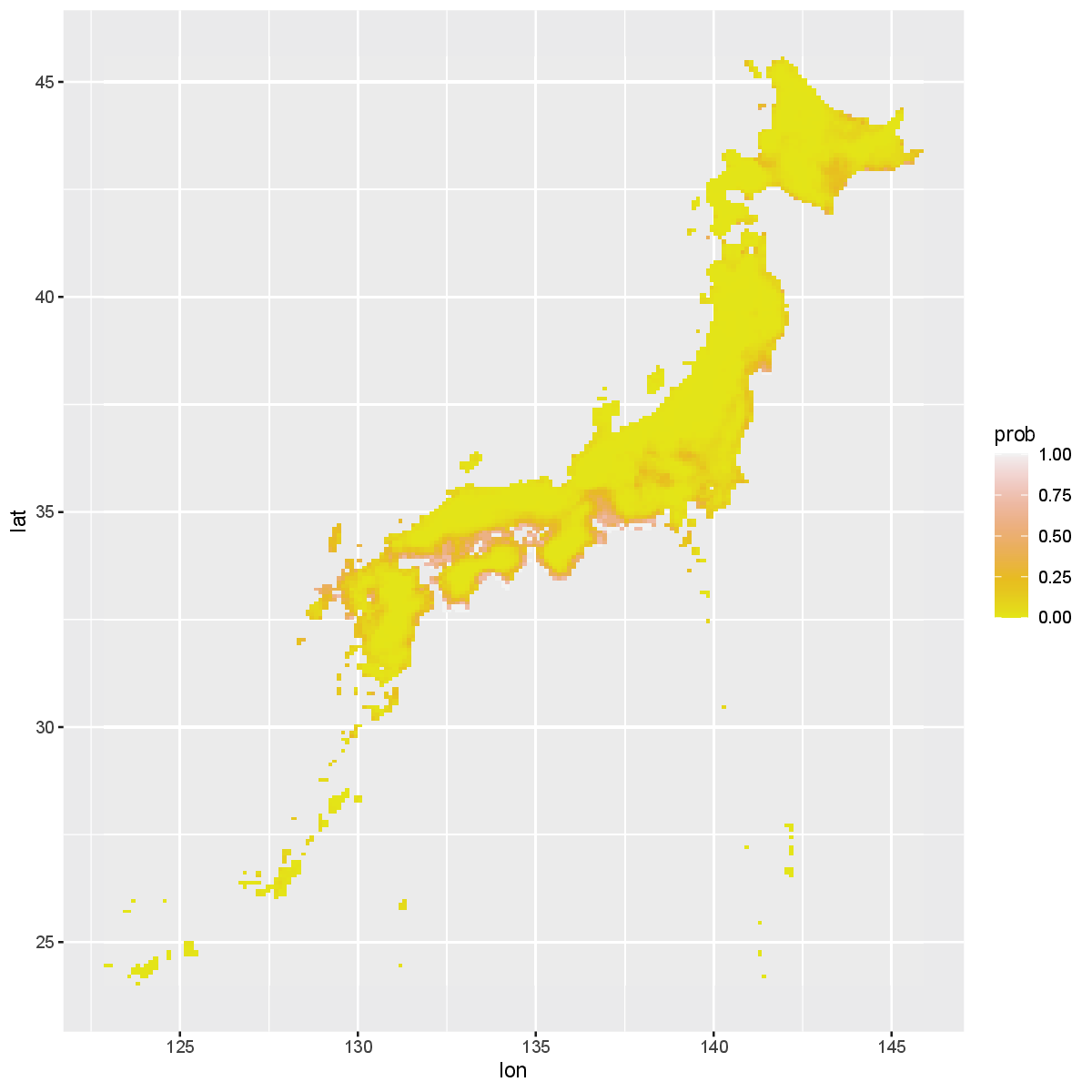}\\
(d) Fisher
  \end{center}
 \end{minipage}

\caption{Estimated habitat maps for {\it Hypericum tosaense} ($m=5$)}\label{fig_habitatS1}
\end{figure}

\begin{figure}[H]
 \begin{minipage}{0.5\hsize}
  \begin{center}
   \includegraphics[width=8cm,height=10cm]{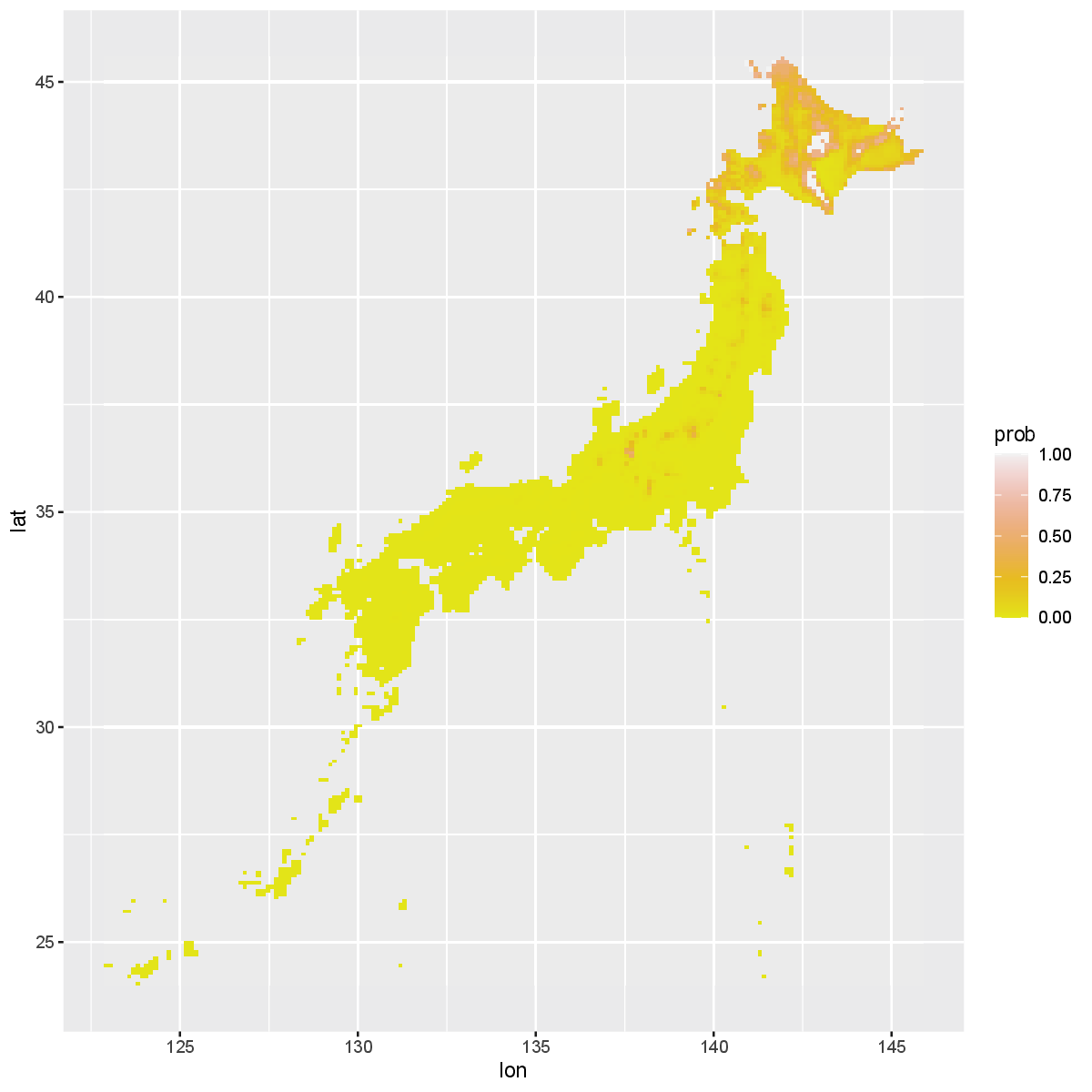}\\
(a) Maxent
  \end{center}
 \end{minipage}
 \begin{minipage}{0.5\hsize}
  \begin{center}
   \includegraphics[width=8cm,height=10cm]{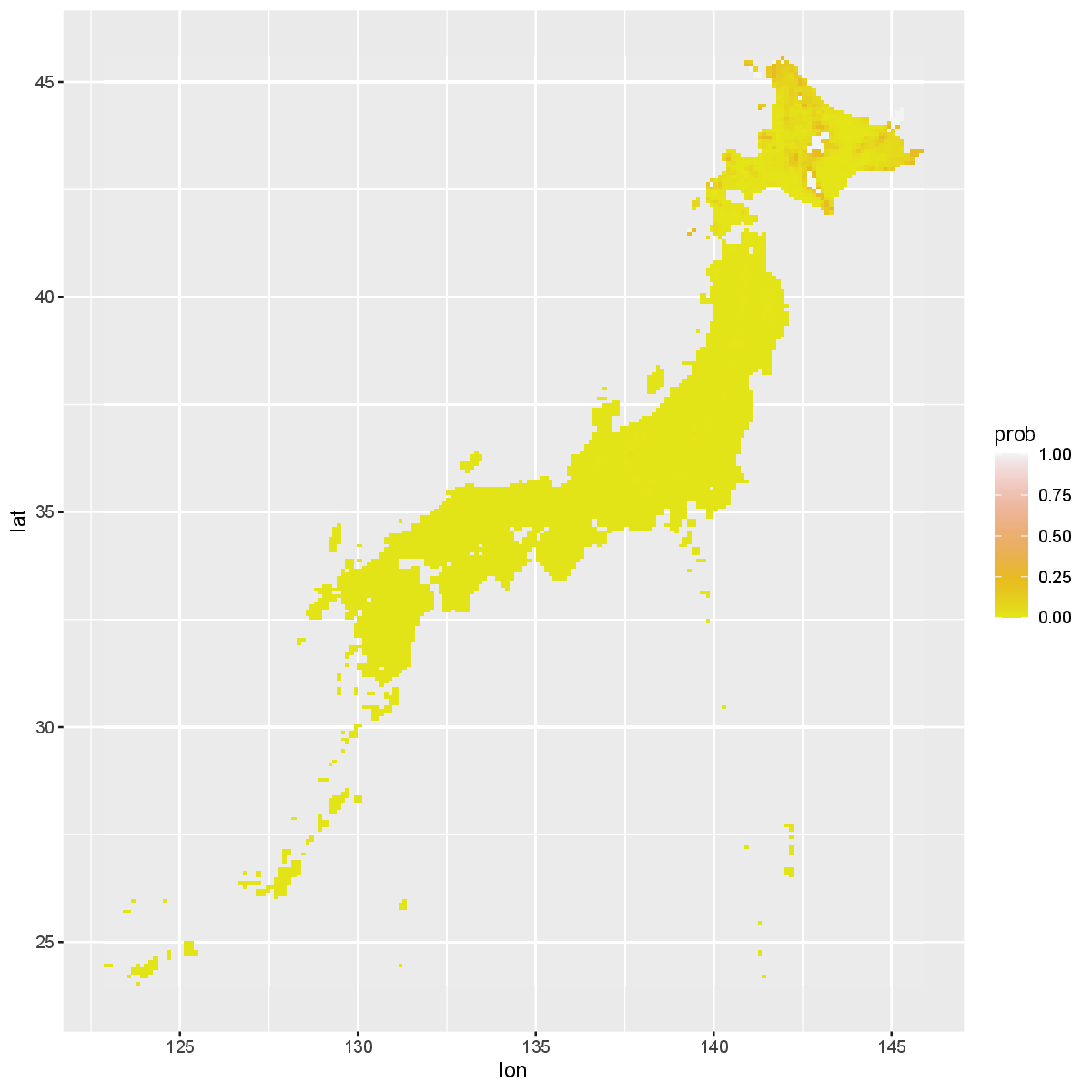}\\
(b) $\rm{rGM_{\gamma=-0.5}}$
  \end{center}
 \end{minipage}
 \begin{minipage}{0.5\hsize}

 \end{minipage}

 \begin{minipage}{0.5\hsize}
  \begin{center}
   \includegraphics[width=8cm,height=10cm]{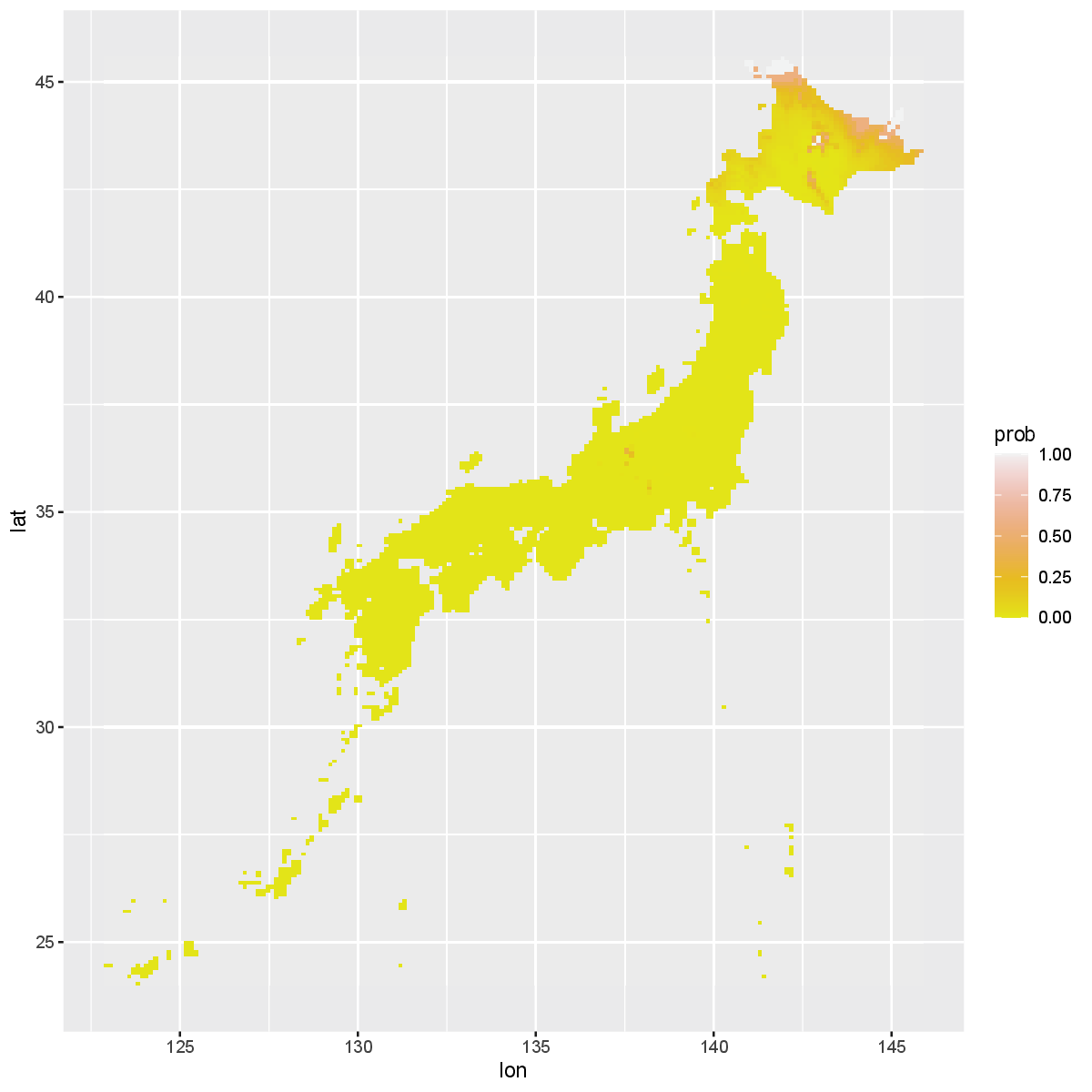}\\
(c) GM
  \end{center}
 \end{minipage}
  \begin{minipage}{0.5\hsize}
  \begin{center}
   \includegraphics[width=8cm,height=10cm]{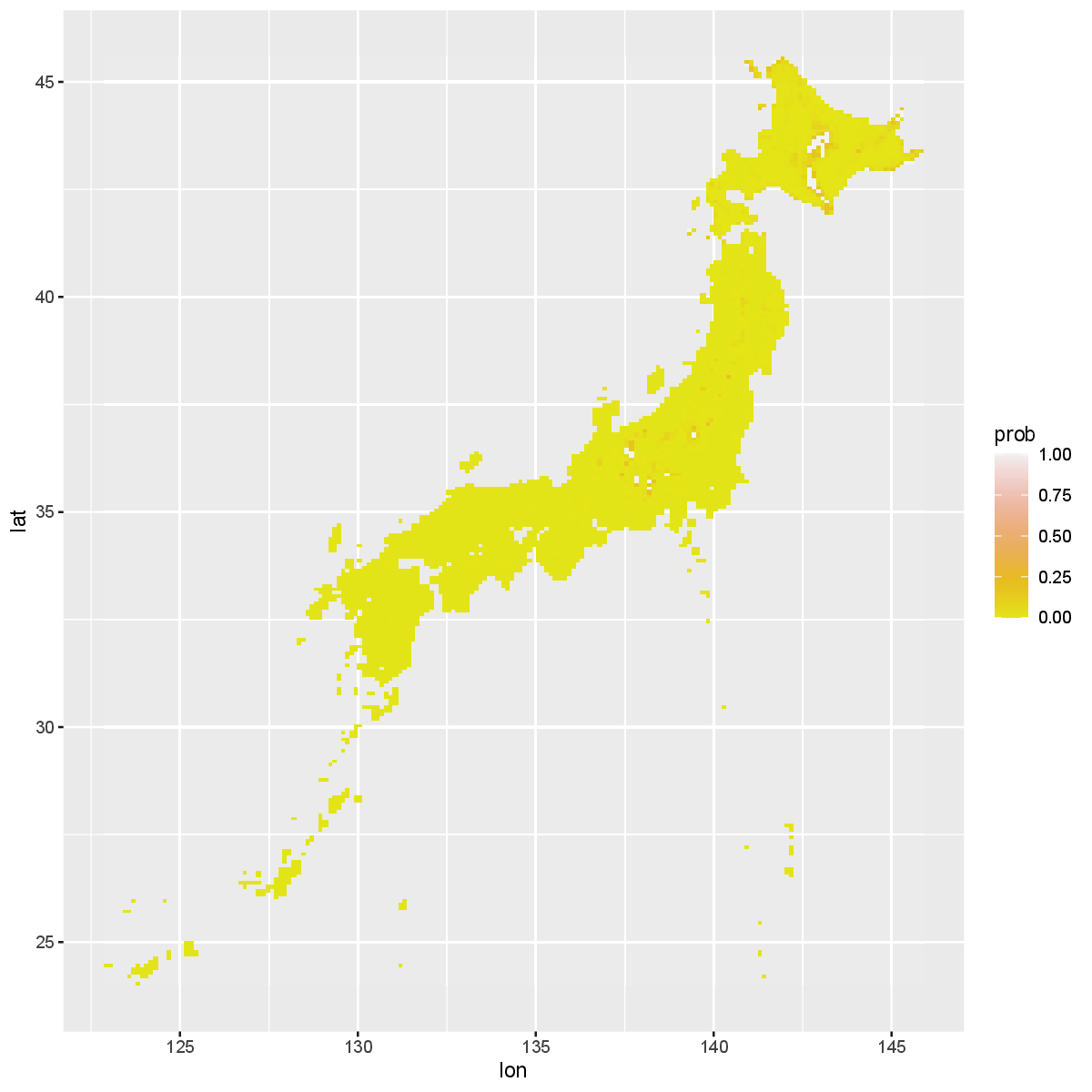}\\
(d) Fisher
  \end{center}
 \end{minipage}

\caption{Estimated habitat maps for {\it Patrinia sibirica} ($m=41$)}\label{fig_habitatS2}
\end{figure}

\begin{figure}[H]
 \begin{minipage}{0.5\hsize}
  \begin{center}
   \includegraphics[width=8cm,height=10cm]{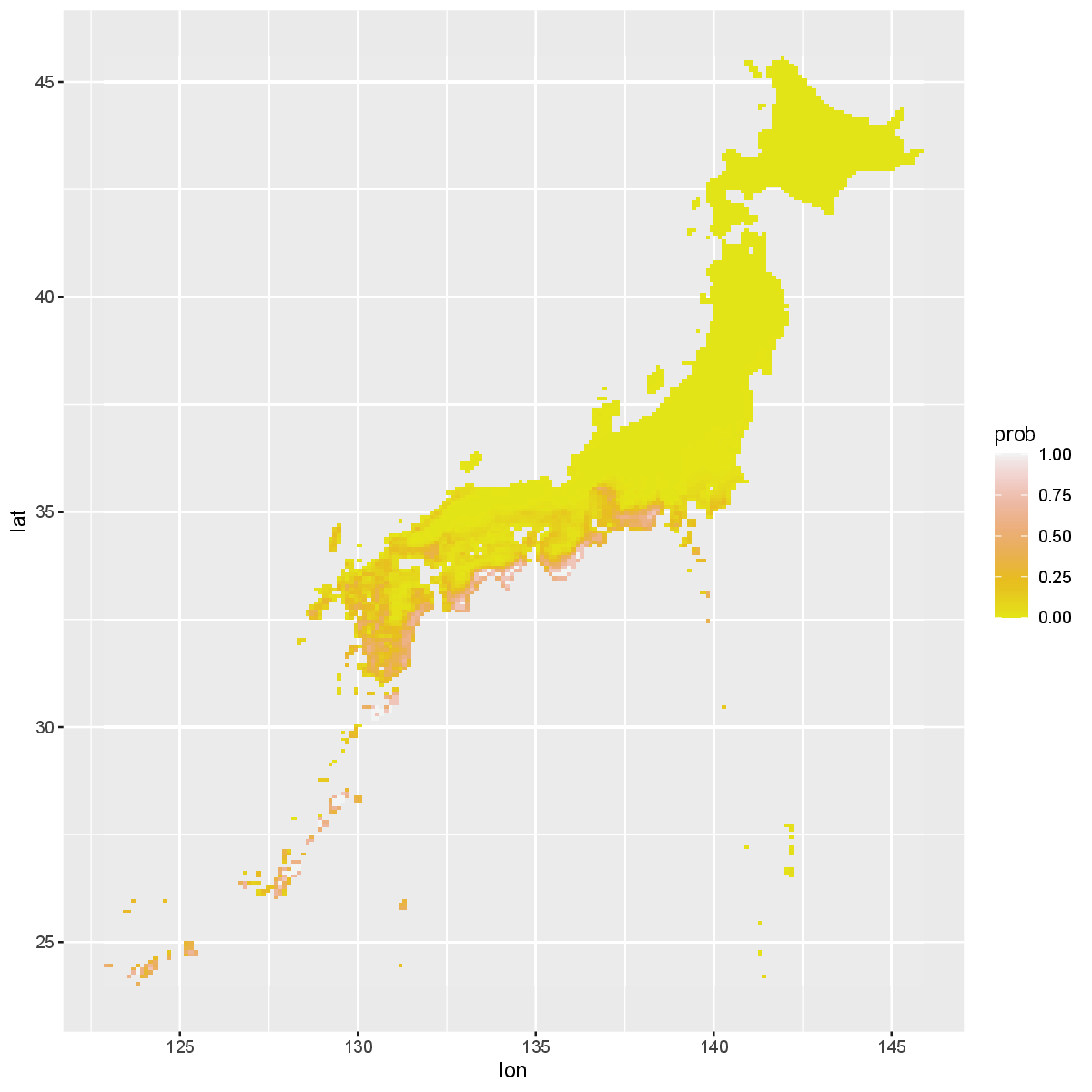}\\
(a) Maxent
  \end{center}
 \end{minipage}
 \begin{minipage}{0.5\hsize}
  \begin{center}
   \includegraphics[width=8cm,height=10cm]{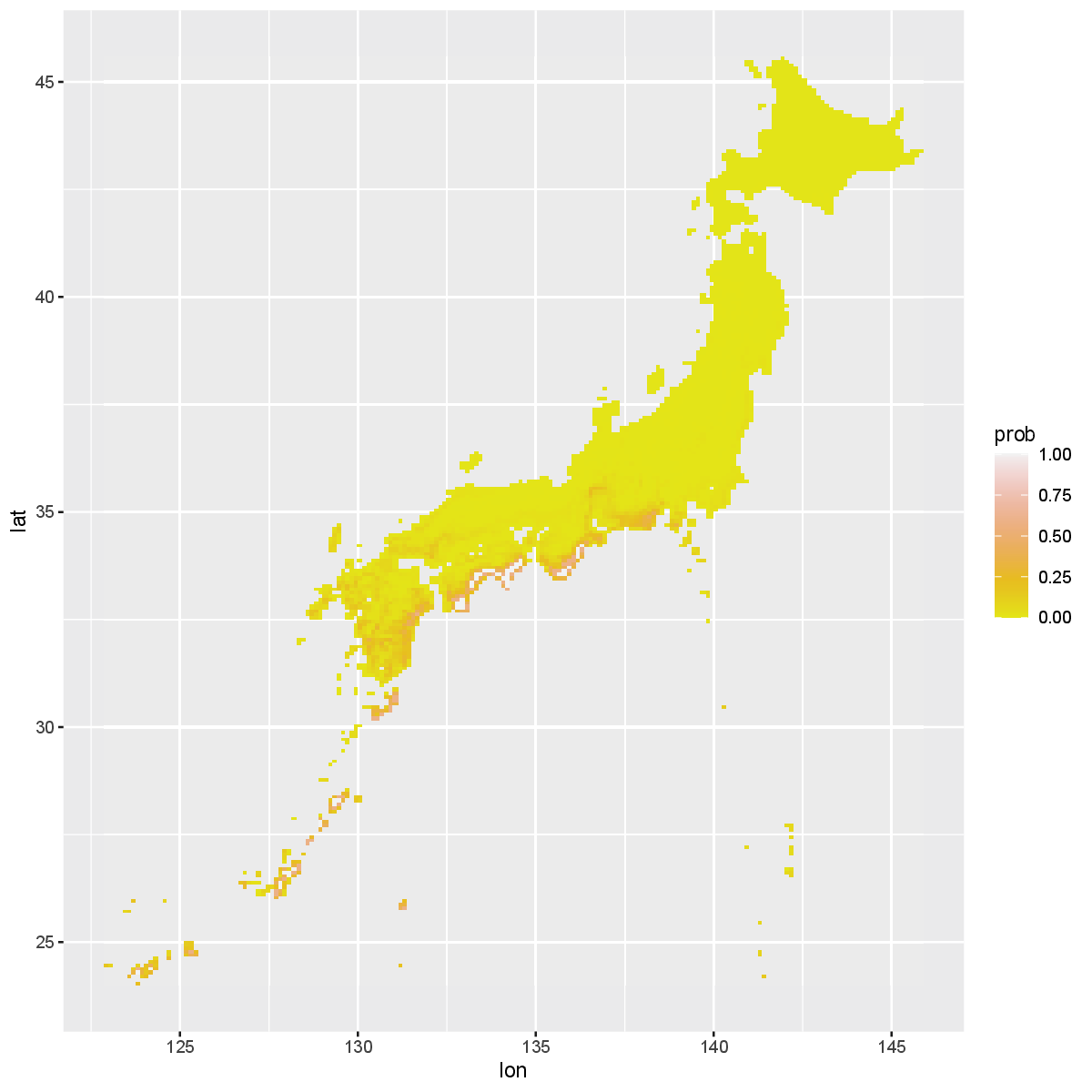}\\
(b) $\rm{rGM_{\gamma=-0.5}}$
  \end{center}
 \end{minipage}
 \begin{minipage}{0.5\hsize}

 \end{minipage}

 \begin{minipage}{0.5\hsize}
  \begin{center}
   \includegraphics[width=8cm,height=10cm]{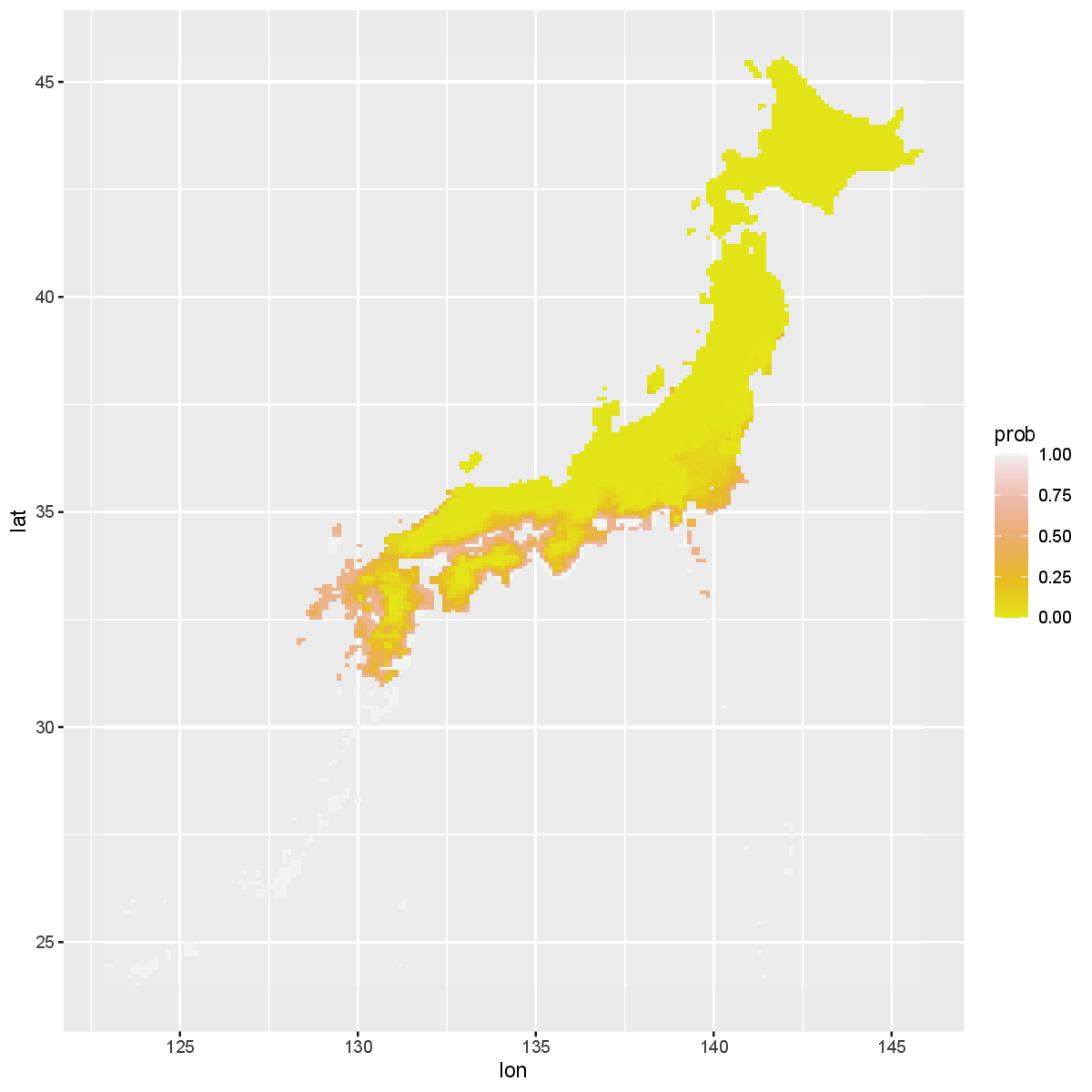}\\
(c) GM
  \end{center}
 \end{minipage}
  \begin{minipage}{0.5\hsize}
  \begin{center}
   \includegraphics[width=8cm,height=10cm]{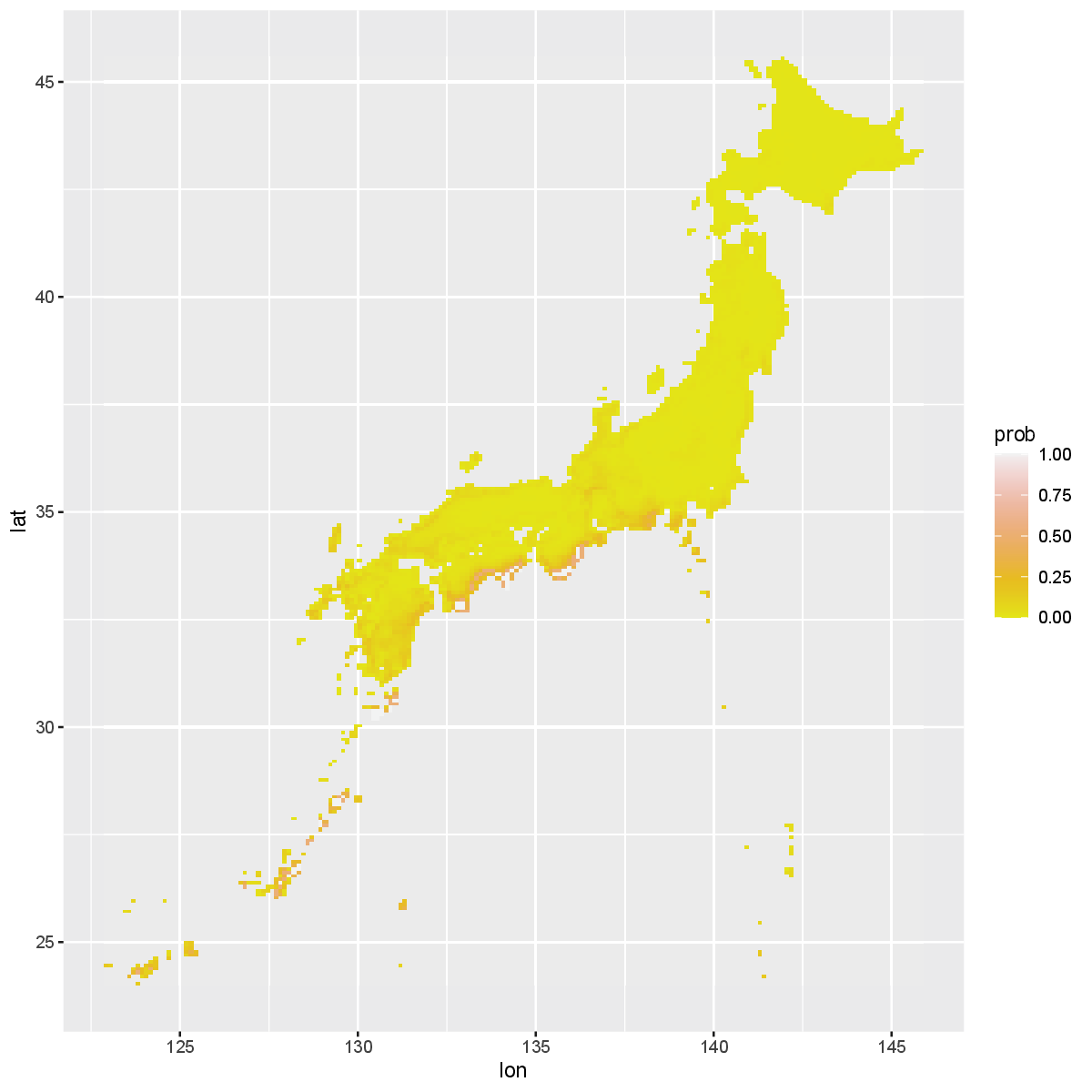}\\
(d) Fisher
  \end{center}
 \end{minipage}

\caption{Estimated habitat maps for {\it Diospyros morrisiana} ($m=127$)}\label{fig_habitatS3}
\end{figure}

\section{R code for a CBA package}\label{appendJ}
\noindent
\hrulefill
\begin{verbatim}
# Install a CBA package with dependencies
devtools::install_local("CBA_0.0.0.9000.tar.gz", dependencies = TRUE)
library(CBA)
help(CBA)
# Example
#Reproduce simulation results in Figure 1
sim() 
#Reproduce NCEAS data analysis in Figure 2
nceas() 
#Reproduce habitat maps in Figure 3
habitat() 
#Reproduce estimated coefficinets in Figure 4
boxplot0() 
#Reproduce paths about coefficients in Figure 5
make.path() 
#Maxent code to estimate species distribution for Pteridium aquilinum
maxent() 
#rGM code to estimate species distribution for Pteridium aquilinum
rgm() 
#GM code to estimate species distribution for Pteridium aquilinum
gm() 
#Gamma code to estimate species distribution for Pteridium aquilinum
gamma0() 
#Fisher code to estimate species distribution for Pteridium aquilinum
fisher0() 

#Reproduce habitat maps in Figure S5
habitat(sp=sp_hypericum_tosaense) 
#Reproduce habitat maps in Figure S6
habitat(sp=sp_patrinia_sibirica) 
#Reproduce habitat maps in Figure S7
habitat(sp=sp_diospyros_morrisiana)
\end{verbatim}
\hrulefill

\end{document}